\documentclass[conference]{IEEEtran}
\IEEEoverridecommandlockouts
 
\usepackage{amsmath}
\usepackage{amsfonts}
\usepackage{amssymb}
\usepackage{amsthm}
\usepackage{url}
\newtheorem{definition}{Definition}
\newtheorem{theorem}{Theorem}
\newtheorem{lemma}{Lemma}

\usepackage{pifont}
\newcommand{\cmark}{\ding{51}}%
\newcommand{\xmark}{\ding{55}}%

\usepackage{amsmath}
\usepackage{amsfonts}
\usepackage{xcolor}
\usepackage{colortbl} 
\usepackage{xspace} 
\usepackage{paralist} 

\usepackage{booktabs}  
\usepackage{csvsimple} 

\usepackage{caption} 
\usepackage{subcaption} 

\usepackage{bm}

\usepackage{pgf}
\usepackage{pgfplots}
\usepackage{pgfplotstable}
\usepackage{multirow}
\usepackage{subcaption}

\usepackage{circledsteps}
\tikzset{/csteps/outer color=blue}

\usepackage[normalem]{ulem}

\usepackage{xpatch}
\usepackage{textcase}
\makeatletter
\xpatchcmd{\@sect}{\uppercase}{\MakeTextUppercase}{}{}
\xpatchcmd{\@sect}{\uppercase}{\MakeTextUppercase}{}{}
\makeatother

\usepackage[explicit]{titlesec}

\usepackage[numbers,sort&compress]{natbib}


\titleformat{\paragraph}[runin]
{\normalfont\bfseries}
{}
{0pt}
{#1}

\titlespacing{\paragraph}
{0pt} 
{1ex plus .1ex minus .2ex}
{1ex} 

\addtolength{\floatsep}{-5mm}       
\addtolength{\textfloatsep}{-6mm}   
\addtolength{\dbltextfloatsep}{-5mm}

\DeclareMathOperator{\precision}{\textit{prec}}
\DeclareMathOperator{\recall}{\textit{rcl}}

\DeclareMathOperator{\perfectFD}{\textit{PFD}}
\DeclareMathOperator{\designAFD}{\textit{AFD}}
\DeclareMathOperator{\class}{\textit{DISC}}
\DeclareMathOperator{\rankAtMaxRecall}{\textit{r@mr}}

\newcommand{\copykind}{\textsf{copy}\xspace}
\newcommand{\typokind}{\textsf{typo}\xspace}
\newcommand{\boguskind}{\textsf{bogus}\xspace}

\newcommand{\smallerskip}{\smallskip}

\newcommand{\simple}{\textsc{Violation}\xspace}  
\newcommand{\shannon}{\textsc{Shannon}\xspace}
\newcommand{\logical}{\textsc{Logical}\xspace}
\newcommand{\ssimple}{V\xspace} 
\newcommand{\sshannon}{S\xspace}
\newcommand{\slogical}{L\xspace}

\newcommand{\BenchNumTables}{10\xspace}

\newcommand{\BenchTotalDesignFDs}{{143}\xspace}
\newcommand{\BenchTotalPerfectFDs}{{126}\xspace}
\newcommand{\BenchTotalAFDs}{{17}\xspace}
\newcommand{\BenchTotalAFDsRWDErrorLevelOne}{{39}\xspace}

\newcommand{\TotalNumberOfInvalidFDs}{{1634}\xspace}


\newcommand{\adultDataNumber}{$R_1$}
\newcommand{\claimsDataNumber}{$R_2$}
\newcommand{\dblpDataNumber}{$R_3$}

\newcommand{\hospitalDataNumber}{$R_4$}
\newcommand{\taxDataNumber}{$R_5$}
\newcommand{\gathAgentDataNumber}{$R_6$}
\newcommand{\gathAreaDataNumber}{$R_7$}
\newcommand{\gathDataNumber}{$R_8$}
\newcommand{\identTaxonDataNumber}{$R_9$}
\newcommand{\identDataNumber}{$R_{10}$}

\newcommand{\LHSuniqueness}{LHS-uniqueness\xspace}

\newcommand{\RHSskew}{RHS-skew\xspace}


\newcommand{\rev}[1]{#1}

\newcommand{\schema}{\ensuremath{\Delta}}
\newcommand{\size}[1]{\ensuremath{|#1|}}
\newcommand{\tuples}[1]{\ensuremath{\mathcal{T}}}
\DeclareMathOperator{\dom}{\textit{dom}}

\newcommand{\seq}[1]{\ensuremath{\bm{#1}}}
\newcommand{\nats}{\ensuremath{\mathbb{N}}}
\newcommand{\prob}{\ensuremath{p}}
\newcommand{\entrop}{\ensuremath{H}}
\newcommand{\logentrop}{\ensuremath{h}}
\newcommand{\mut}{\ensuremath{I}}
\newcommand{\restr}[2]{\ensuremath{{#1}|_{#2}}}
\newcommand{\afd}{AFD\xspace}
\newcommand{\afds}{AFDs\xspace}
\newcommand{\fd}{FD\xspace}
\newcommand{\fds}{FDs\xspace}

\newcommand{\FI}{\ensuremath{\text{FI}}\xspace}
\newcommand{\RFI}{\ensuremath{\text{RFI}}\xspace}
\newcommand{\RFIplus}{\ensuremath{\RFI^+}\xspace}
\newcommand{\RFInorm}{\ensuremath{\RFI^{'+}}\xspace}
\newcommand{\SFI}{\ensuremath{\text{SFI}}\xspace}
\newcommand{\SHANNONGONE}{\ensuremath{g_1^{\sshannon}}}
\newcommand{\muplus}{\ensuremath{\mu^+}}

\DeclareMathOperator{\pdep}{\textit{pdep}}
\DeclareMathOperator{\expect}{\mathbb{E}}

\newcommand{\w}{\ensuremath{\seq w}}
\newcommand{\x}{\ensuremath{\seq x}}
\newcommand{\y}{\ensuremath{\seq y}}

\newcommand{\W}{\ensuremath{\seq W}}
\newcommand{\X}{\ensuremath{\seq X}}
\newcommand{\Y}{\ensuremath{\seq Y}}
\newcommand{\Z}{\ensuremath{\seq Z}}

\newcommand{\rwd}{\textsc{rwd}\xspace}
\newcommand{\nrwd}{\ensuremath{\rwd^{\text{e}}}\xspace} 
\newcommand{\rwdminus}{\ensuremath{\rwd^-}\xspace} 
\newcommand{\syn}{\text{SYN}\xspace}
\newcommand{\synzero}{\ensuremath{\textsc{Err}}\xspace}
\newcommand{\synnoise}{\synzero}
\newcommand{\synlhs}{\ensuremath{\textsc{Uniq}}\xspace}
\newcommand{\synunique}{\synlhs}
\newcommand{\synrhs}{\ensuremath{\textsc{Skew}}\xspace}
\newcommand{\synskew}{\synrhs}

\newcommand{\nonfd}{-}
\newcommand{\smallfd}{+}

\newcommand{\benchmarkfd}{\benchmark^{\smallfd}}
\newcommand{\benchmarknonfd}{\benchmark^{\nonfd}}
\newcommand{\synnoisefd}{\synnoise^{\smallfd}}
\newcommand{\synnoisenonfd}{\synnoise^{\nonfd}}
\newcommand{\synuniquefd}{\synunique^{\smallfd}}
\newcommand{\synuniquenonfd}{\synunique^{\nonfd}}
\newcommand{\synskewfd}{\synskew^{\smallfd}}
\newcommand{\synskewnonfd}{\synskew^{\nonfd}}


\newcommand{\alg}{\ensuremath{A}}
\newcommand{\benchmark}{\ensuremath{\mathcal{B}}}

\newcommand{\nap}{\_}

\definecolor{plotColor1}{RGB}{ 55,126,184}
\definecolor{plotColor2}{RGB}{ 77,175, 74}
\definecolor{plotColor3}{RGB}{152, 78,163}
\definecolor{plotColor4}{RGB}{255,127,  0}
\definecolor{plotColor5}{RGB}{166, 86, 40}


\newcommand{\inFullVersion}[1]{#1}

\newcommand{\inConfVersion}[1]{}

\newcommand{\altVersion}[2]{\inFullVersion{#1}\inConfVersion{#2}}

\begin{document}

\title{Measuring Approximate Functional Dependencies:\\ a Comparative Study\\
	\thanks{We thank Dan Suciu for helpful discussions. S. Vansummeren was supported by the Bijzonder Onderzoeksfonds (BOF) of Hasselt University under Grant No. BOF20ZAP02. This research received funding from the Flemish Government under the “Onderzoeksprogramma Artificiële Intelligentie (AI) Vlaanderen” programme. This work was supported by Research Foundation—Flanders (FWO) for ELIXIR Belgium (I002819N). The resources and services used in this work were provided by the VSC (Flemish Supercomputer Center), funded by the Research Foundation – Flanders (FWO) and the Flemish Government.}
}
\author{%
	\IEEEauthorblockN{%
		Marcel Parciak\IEEEauthorrefmark{1}\IEEEauthorrefmark{2},
		Sebastiaan Weytjens\IEEEauthorrefmark{2},
		Niel Hens\IEEEauthorrefmark{2},
		Frank Neven\IEEEauthorrefmark{2},
		Liesbet M. Peeters\IEEEauthorrefmark{1}\IEEEauthorrefmark{2} and
		Stijn Vansummeren\IEEEauthorrefmark{2}}
	\IEEEauthorblockA{\IEEEauthorrefmark{1}UHasselt, BIOMED, Agoralaan, 3590 Diepenbeek, Belgium}
	\IEEEauthorblockA{\IEEEauthorrefmark{2}UHasselt, Data Science Institute, Agoralaan, 3590 Diepenbeek, Belgium}
}

%

\maketitle
\begin{abstract}
  Approximate functional dependencies (AFDs) are functional dependencies (FDs)
  that ``almost'' hold in a relation. While various measures have been proposed
  to quantify the level to which an \fd holds approximately, they are difficult
  to compare and it is unclear which measure is preferable when one needs to
  discover \fds in real-world data, i.e., data that only approximately satisfies
  the \fd. In response, this paper formally and qualitatively compares AFD
  measures. We obtain a formal comparison through a novel presentation of
  measures in terms of Shannon and logical entropy. 
  Qualitatively, we perform a sensitivity analysis w.r.t.\ structural properties
  of input relations and quantitatively study the effectiveness of AFD measures
  for ranking \afds on real world data.  Based on this analysis, we give clear
  recommendations for the AFD measures to use in practice.
\end{abstract}

\begin{IEEEkeywords}
	functional dependencies, data cleaning, data profiling
\end{IEEEkeywords}


\section{Introduction}
\label{sec:intro}
There are many benefits to knowing the set of functional dependencies (\fds for
short) that are expected to hold in the instances of a database schema: \fds aid
in ensuring data consistency and help in data
cleaning~\cite{DBLP:conf/icde/ChuIP13,DBLP:journals/pvldb/RekatsinasCIR17};
facilitate data integration~\cite{DBLP:conf/webdb/WangDSFH09}; and can be
exploited for query
optimization~\cite{DBLP:journals/pvldb/LiuXDE16,KossmannPapenbrockEtAl2022a},
among other tasks.  In many data science scenarios, however, the set of design
\fds is unknown or incomplete~\cite{DBLP:conf/sigmod/PapenbrockN16}. As such, a
variety of techniques have been proposed to reverse engineer this set of design
\fds from a given relation instance
~\cite{DBLP:journals/pvldb/PapenbrockEMNRZ15,
  DBLP:conf/edbt/SchirmerP0NHMN19,
  DBLP:journals/pvldb/Berti-EquilleHN18,
  DBLP:conf/cikm/BleifussBFRW0PN16,DBLP:conf/sigmod/PapenbrockN16}.

Reverse engineering is particularly challenging in the practical setting
where the relation instance itself does not satisfy the target set of design
\fds
.  This may happen, e.g.,
due to errors during data entry, resulting in a corrupt instance.  To recover
the set of design \fds in such a setting, it clearly does not suffice to simply
enumerate the \fds that are satisfied in the relation instance. Instead, one must also consider \emph{approximate functional dependencies}
(\afds), that is, \fds that ``almost hold'' in the relation.
A key decision to then make is when an \fd ``almost'' holds. This decision is
reflected in the adoption of an \emph{\afd measure}, which formally quantifies
the extent to which an \fd holds approximately in a given relation by
attributing a score in the interval $[0,1]$. Higher values indicate a higher
degree of \fd satisfaction. As such, an \afd measure provides a way of
\emph{ranking} the search space of all possible \fds where higher-scoring \fds
are ranked before lower-scoring ones. A good \afd measure, then, is one that
ranks \fds in the relation's target set of design \fds higher than those that
are not in the target set, and does this consistently for relation instances
that occur in practice. Given a good AFD measure one may discover the set of design FDs by ranking the search space, and returning all FDs smaller than a given threshold.

Many \afd measures have been proposed
in the literature over the past decades~\cite{DBLP:journals/tcs/KivinenM95,
  DBLP:journals/is/GiannellaR04,DBLP:conf/kdd/MandrosBV17,DBLP:journals/kais/MandrosBV20,DBLP:conf/kdd/PennerathMV20,piatetsky1993,DBLP:conf/webdb/WangDSFH09,DBLP:conf/sigmod/IlyasMHBA04,BERZAL2002207}.
Unfortunately, these measures vary widely in nature and there has been
little study so far in comparing them. As such there is no clear guideline for
deciding which \afd measure(s) to adopt.  In response, we aim to address the following two central questions in this
paper:
\begin{itemize}
  \item[(a)] \emph{How do the different \afd measures compare?}
  \item[(b)] \emph{\rev{Which \afd measure(s) perform best and most consistently when discovering \afds in real-world data?}}
\end{itemize}

\smallerskip
\noindent
\textbf{We adopt the following methodology to answer (a):} In
Section~\ref{sec:measures} we present a survey of known \afd measures in a uniform formal framework.
Specifically, we observe that there are three classes of measures: (i) measures that quantify the fraction of violations (we refer to them as \simple); (ii) measures based on Shannon entropy (\shannon)~\cite{DBLP:journals/ipm/Edwards08}; and (iii) measures based on logical entropy (\logical)~\cite{logical-entropy}. We then proceed to evaluate the measures' sensitivity
to structural properties of the input relation, in
Section~\ref{sec:eval-syn}.  Specifically, we study whether a measure can distinguish
between an \fd $\X\to\Y$ in relation instances that were generated to approximately satisfy
the \fd\ versus relation instances where
$\X$ and $\Y$ were randomly generated. We evaluate this on synthetic data,
controlling the following parameters: (1) the error level; (2)
\emph{\LHSuniqueness}: the normalized number of unique values occurring in
\X; and  (3) \emph{\RHSskew}:
the skewness of the distribution of values occurring in $\pi_{\Y}(R)$.\footnote{\rev{We refer to Table~\ref{table:notation} for an overview of the notation used in this paper.}}
This evaluation 
sheds light on the measure's distinguishing power as well as on their biases w.r.t. error level, \LHSuniqueness and \RHSskew.
In general, measures are more robust  if they are inversely proportional to the error level and are insensitive to \LHSuniqueness and \RHSskew, as we discuss in Section~\ref{sec:eval-syn}.

\smallerskip
\noindent
\textbf{We adopt the following methodology to answer (b):}
We compare the effectiveness of \afd measures on real-world data.  Because existing benchmarks for exact \fd discovery are designed to gauge algorithmic efficiency and  do not contain the ``ground truth'' set of design \fds to compare against, we  propose a new \afd discovery benchmark. This benchmark, denoted \rwd, is obtained by manually creating design \fds for existing benchmarks. 
Our analysis in Section~\ref{sec:eval-rwd} 
on \rwd shows that well-ranking measures exist within each class of measures
(\simple, \shannon and \logical). Importantly, these measures are only effective
when they are correctly normalized---which is not always done in the
literature. Furthermore, the best-ranking measure for \simple (measure $g'_3$,
which we formally introduce in Section~\ref{sec:measures}) is sensitive to
\RHSskew and therefore performs worse than the best-ranking measures for
\shannon (measure $\RFInorm$) and \logical (a little-known measure called
$\muplus$), which have comparable performance as well as equal structural
sensitivity properties. However, $\RFInorm$ is slow to compute, whereas $\muplus$
is very efficient. Therefore, we recommend $\muplus$ for practical \afd
discovery. More in-depth discussion and recommendations are given in
Section~\ref{sec:conclusion}, where Table~\ref{table:comparison} summarizes our
comparison.

\smallerskip
\noindent
\textbf{In summary, our contributions are as follows.} (1) A survey of \afd measures, using a new and uniform presentation. 
(2) Sensitivity analysis of measure performance w.r.t.\ structural properties of the input relations. (3) Creation of a real-world benchmark for \afd discovery. (4) Analysis of measure ranking power on this benchmark. (5) Clear recommendations for measure adoption in \afd discovery.

The general organisation of the paper has already been outlined. In addition, we discuss related work and motivate the paper's scope in Section~\ref{sec:related}.
We introduce the necessary background in Section~\ref{sec:prelims}. 
\inConfVersion{An extended paper version, containing extra experimental material that is omitted here due to lack of space, is available online~\cite{artifacts}.}


\section{Related Work and Paper Scope}
\label{sec:related}
\paragraph{Relaxing \fds.} 
In the literature there are two distinct
ways to relax the notion of an \fd~\cite{DBLP:journals/tkde/CaruccioDP16}: relax
the constraint that an \fd $\X \to \Y$ needs to be fully satisfied; or replace
the way in which tuples are compared on their $\X$-values by a similarity
function rather than strict equality. We focus on the former.

\paragraph{\afd measures versus discovery algorithms.} We focus on understanding \afd measures and how they allow us to rank the search space
of possible \fds for AFD discovery. A related but orthogonal issue, that we do not consider in this paper, is the study
of \afd \emph{discovery algorithms}. \afd discovery algorithms usually fix a way
to quantify the approximateness of an \fd---typically through the choice of an
\afd measure that fixes a ranking of the search space---but then combine a
multitude of techniques to do the actual discovery. This includes pruning the
ranked search space for efficiency reasons (e.g.,
\cite{DBLP:journals/pvldb/0001N18,DBLP:conf/kdd/MandrosBV17,DBLP:journals/kais/MandrosBV20,DBLP:conf/kdd/PennerathMV20})
or complementing the measures' ranking with heuristics for application-specific
purposes (e.g., \cite{DBLP:conf/sigmod/IlyasMHBA04,ZhangGuoEtAl2020}).
%
%
Of course, improved knowledge of \afd measures may aid in improving discovery
algorithms in the future. For this reason, when we introduce measures in
Section~\ref{sec:measures} we also indicate in which discovery algorithms they
are used.  Because discovery algorithms do not directly expose the way in which
they rank the search space of \fds, we do not experimentally compare against
discovery algorithms.

\paragraph{Correlation.} 
When an \fd holds in
a relation, there is clearly a statistical correlation among the \fd's
attributes. Conversely, correlated attributes may (but need not) indicate the
presence of an \fd. The techniques that are typically used to test statistical
correlation, such as the $\chi^2$ test or mean-square
contingency~\cite{DBLP:conf/sigmod/IlyasMHBA04}, however, only measure the
strength of correlation (e.g., $\X$ and $\Y$ are correlated) but do not indicate
the direction in which functional dependence ($\X \to \Y$ or $\Y \to \X$) is likely
to hold. As such, these techniques do not form appropriate \afd measures~\cite{piatetsky1993} and are not further considered here.

\paragraph{Exact FD discovery.} In the context of exact \fd discovery, some works
consider the problem of ranking exact FDs according to relevance, where the
challenge lies in quantifying relevance~\cite{DBLP:conf/icde/WeiL19}. We are not concerned with exact FD discovery, but with
measures for quantifying the extent to which \fds hold approximately.
Discovery of \afds should also not be confused with the approximate discovery of
exact \fds as e.g., done in \cite{DBLP:conf/cikm/BleifussBFRW0PN16}. There,
only a subset of all \fds that are satisfied are computed in return for
performance improvements.

\paragraph{Existing comparisons of \afd measures.}
Giannella and Robertson~\cite{DBLP:journals/is/GiannellaR04} compare a limited
number of measures on theoretical examples as
well as on 4 real world datasets. In their experiments, they report on average
differences between pairs of measures and do not compare with a ground truth set
of \fds. They therefore do not empirically compare the effectiveness of the
measures as done here. In their survey concerning \fd relaxations, Caruccio
et al.~\cite{DBLP:journals/tkde/CaruccioDP16} also survey some of the \afd
measures that are considered here, but do not provide a qualitative comparison.

\paragraph{Discovery of Conditional FDs.} Conditional \fds (CFDs for short)
generalize FDs: they are FDs that only hold on a subset of the
data~\cite{DBLP:conf/icde/BohannonFGJK07}. CFDs are widely used in data
cleaning~\cite{DBLP:journals/pvldb/FanGJ08a,DBLP:journals/tods/FanGJK08}.
Discovery of (approximate) CFDs amounts to (i) selecting suitable subsets of the
data and (ii) discovering (A)FDs in these
subsets~\cite{DBLP:conf/pkdd/RammelaereG18}. In particular, Geerts and
Rammelaere~\cite{DBLP:conf/pkdd/RammelaereG18} propose a generic (A)CFD
discovery algorithm in which any (A)FD discovery algorithm can be plugged.
Insights into AFD measures, as is our focus here, can improve AFD discovery which in turn can be useful for the discovery of (A)CFDs.

\rev{\paragraph{Applications of \afds.} Data management tasks may use \afds to
  different extents. For example, only semantically meaningful \afds (i.e., FDs
  that a human database administrator would include in a database schema design)
  are useful for data cleaning tasks~\cite{DBLP:journals/pvldb/0001N18} while
  for query optimisation one could try to exploit all AFDs present in a
  relation, even if they are not semantically meaningful. Our real-world
  benchmark focuses on the former kind of \afds.
  Thus, we present a study of
  \afd measures applicable to discover semantically meaningful \afds.  }


\section{Preliminaries}
\label{sec:prelims}

\begin{table}[t]
	\caption{\label{table:notation}\rev{Glossary of notation used in this paper.}}
	\centering
	\begin{tabular}[t]{cl}
		\toprule
		$X, Y$                & an attribute                                                              \\
		$\dom(X)$             & the domain of an attribute                                                \\
		$\X, \Y$              & a finite set of attributes                                                \\
		$\x$                  & a tuple  over set $\X$ of attributes                                      \\
		$\x\colon \X$         & $\x$ is a tuple over $\X$                                                 \\
		$\restr{\x}{\Y}$      & restriction of $\x$ to $\Y$ with $\Y\subseteq\X$                          \\
		$\X\Y$                & union of two sets of attributes, i.e. $\X\cup\Y$                          \\
		$\x\y$                & idem for tuples, i.e. $\restr{\x\y}{\X} = \x$ and $\restr{\x\y}{\Y} = \y$ \\
		$R$                   & a relation                                                                \\
		$R(\X)$               & $R$ is relation over $\X$                                                 \\
		$R(\x) \in \nats$     & frequency of $\x$ in $R$                                                  \\
		$\x \in R$            & $\x$ is a tuple in $R$ with $R(\x) > 0$                                   \\
		$\dom_R(\Y)$          & $\{ \restr{\x}{Y} \mid  \x \in R\}$                                       \\
		$\size{R}$            & total number of tuples in $R$, i.e. $\sum_{\x\colon\X}{R(\x)}$            \\
		$\pi_{\X}(R)$         & bag-based relational projection of $R$ onto $\X$                          \\
		$\sigma_{\X = \x}(R)$ & bag-based relational selection of $R$ on $\X = \x$                        \\
		$\varphi := \X\to\Y$  & a functional dependency                                                   \\
		$R\models\varphi$     & $R$ satisfies a functional dependency $\varphi$                           \\
		$R\not\models\varphi$ & $R$ violates a functional dependency $\varphi$                            \\
		$\schema(R)$          & the (fixed) schema of $R$                                                 \\
		\bottomrule
	\end{tabular}
\end{table}


\rev{We summarize the notation used in this paper in Table~\ref{table:notation}.}
We assume given a fixed set of attributes, where each attribute $X$ has a domain
$\dom(X)$ of possible data values.
We use uppercase letters $X,Y,Z$ to denote attributes and
boldface type like $\seq{X}, \seq{Y}, \seq{Z}$ to denote sets of attributes.
Lowercase $\seq{x}$, $\seq{y}$, $\seq{z}$ denote tuples over these sets.
Formally, as usual, a tuple over $\seq{X}$ is a mapping $\seq{x}$ that assigns
each attribute $X \in \seq{X}$ to a value $\seq{x}(X) \in \dom(X)$.
We write $\seq{x}\colon \seq{X}$ to indicate that $\seq{x}$ is a tuple over $\seq{X}$, and  $\restr{\seq{x}}{\seq Y}$ for the restriction of $\seq x$ to $\seq{Y} \subseteq \seq X$. We use juxtaposition like $\X\Y$ to denote the union $\X \cup \Y$ of two sets of attributes, and also apply this notation to tuples: if $\x \colon \X$ and $\y\colon \Y$ with $\X$ and $\Y$ disjoint, then $\x\y$ is the tuple that equals $\x$ on all attributes in $\X$ and $\y$ on all attributes in $\Y$, i.e. $\restr{\x\y}{\X} = \x$ and $\restr{\x\y}{\Y} = \y$.

We will work with bag-based relations. Formally, a relation over $\seq{X}$ (also
called $\seq{X}$-relation) is a mapping $R$ that assigns a natural number
$R(\seq{x}) \in \nats$ to each tuple $\seq{x}\colon \seq{X}$.
We require relations to be finite in the sense that $R(\seq{x})$ can be non-zero
for at most a finite number of $\seq{x}$.  We write $\seq x \in R$ to denote
that $R(\seq x) > 0$ and stress that $R$ is an $\seq X$-relation by means of the
notation $R(\seq X)$.  $\size{R}$ denotes the total number of tuples in $R$,
$\pi_{\Y}(R)$ denotes bag-based projection on $\Y$, and $\sigma_{\X = \x}(R)$
denotes bag-based selection. If $\Y \subseteq \X$ then we denote by $\dom_R(\Y)$
the set $\{ \restr{\x}{Y} \mid  \x \in R\}$.

\paragraph{Functional Dependencies.} A \emph{functional dependency} (\fd for short) is an expression $\varphi$ of the form $\seq{X} \to \seq{Y}$. An \fd is \emph{linear} if $|\X| = |\Y| = 1$ and \emph{non-linear} otherwise.
A relation $R(\W)$ with $\seq{X},\seq{Y} \subseteq \W$ \emph{satisfies} $\varphi$  if for all tuples $\w, \seq{w'} \in R$ we have that $\restr{\w}{\Y} = \restr{\seq{w'}}{\Y}$ whenever $\restr{\seq{w}}{\X} = \restr{\seq{w'}}{\X}$.
In what follows, we always implicitly assume that $\X$ and $\Y$ are disjoint when considering \fds.

\paragraph{Dependency Discovery.} A \emph{schema} is a finite set of
\fds. In the \emph{exact \fd discovery problem} we are given a relation $R$ that
satisfies all \fds in some fixed design schema $\schema(R)$, but have no knowledge
of $\schema(R)$ itself. We are then asked to recover $\schema(R)$ by  deriving the largest set $\Lambda \supseteq \schema(R)$ of
\fds that are satisfied by $R$. In the \emph{approximate \fd discovery problem}, we are given a relation $R$ that does not satisfy $\schema(R)$ and again we are asked to recover $\schema(R)$. Here, we assume that $R$ is obtained by means of a noisy channel process as follows. From a clean relation $R'$ that satisfies $\schema(R)$, $R$ is obtained by modifying certain values in tuples in $R'$. We consider an \emph{error} each cell for which $R$ differs from the clean version $R'$. Note that by running exact \fd discovery algorithms on $R$, we will still be able to recover satisfied \fds in $\schema(R)$. Our interest in this paper is in \emph{approximate \fd discovery}, i.e., deriving the \fds in $\schema(R)$ that, because of errors introduced, are violated in $R$ and therefore cannot be discovered by exact \fd discovery.

In Section~\ref{sec:measures} we survey various measures that have been
proposed to quantify the level to which an \fd holds approximately. As we will
see, many of these measures are based on exploiting notions of
\emph{Shannon} or \emph{logical} entropy.
We introduce these notions next.

\paragraph{Probabilities.}
Both notions of entropy are defined w.r.t.\ a given joint probability
distribution. In our setting, this probability distribution is defined by the
relation under consideration. Let $R(\W)$ be a relation. The
joint probability distribution $\prob_R(\seq{W})$ over $\W$ induced by $R$ is
defined by $\prob_R(\W = \w) = \frac{ R(\!\w) }{|R|}$. As such, $\prob_R(\W = \w)$ is the probability of observing $\w$ when
randomly drawing a tuple from $R$. We note that this probability distribution is
only well-defined when $R$ is non-empty. Because the empty relation vacuously satisfies all \fds,  we will implicitly assume without loss of generality
in the rest of this paper that relations are non-empty.

To simplify notation in what follows, we simply write $\prob_R(\w)$,
$\prob_R(\y)$ and $\prob_R(\y \mid \x)$ instead of $\prob_R(\W = \w)$,
$\prob_R(\Y = \y)$ and $\prob_R(\Y = \y \mid \X = \x)$, respectively.
The notions of marginal and conditional distributions derived from $\prob_R(\w)$
are defined as follows:
$\prob_R(\y)$ denotes the marginal probability
distribution on $\seq Y$-tuples in $R$, while
$\prob_R(\y \mid \x)$ is the conditional distribution on $\seq{Y}$ given $\seq{X}=\x$. Thus,
\begin{align*}
  \prob_R(\seq y) = \sum_{\w: \W \text{ s.t. }
    \restr{\seq w}{\seq Y} = \seq y} \prob_R(\seq w), \;\;\;\;\;\;
  \prob_R(\seq y \mid \x) = \frac{\prob_R(\seq x\seq y)}{\prob_R(\seq x)}.
\end{align*}
It is readily verified that $\prob_R(\Y)$ equals the distribution induced by $\pi_Y(R)$, while $\prob_R(\Y \mid \X = \x)$ equals the distribution induced by $\pi_Y\sigma_{\X = \x}(R)$.

\paragraph{Shannon Entropy.}
We write $\entrop_R(\seq{X})$ for the \emph{Shannon entropy} of $\X$ in $R$,
defined as usual~\cite{DBLP:journals/ipm/Edwards08} by\footnote{Here and in the
  sequel we use the common convention that $0 \log 0 = 0$ and
  $\frac{0}{0} = 0$.}
\[ \entrop_R(\seq X) = - \sum_{\seq x\colon \seq X} \prob_R(\seq x) \log \prob_R(\seq x). \]
$\entrop_R(\seq X)$ reflects the average level of uncertainty inherent in the possible tuples over $\seq X$ in $\pi_X(R)$.
The \emph{conditional entropy} $\entrop_R(\seq Y \mid \seq X)$ is the uncertainty in $\seq Y$ given $\seq X$, defined as
\begin{align*}
  \entrop_R(\seq Y \mid \seq X) & = - \sum_{\seq x\colon \seq X, \seq y\colon \seq Y} \prob_R(\seq x \seq y) \log \frac{\prob_R(\seq x \seq y)}{\prob_R(\seq x)}.
\end{align*}
Equivalently, denoting by $\entrop_R(\Y \mid \x)$ the Shannon entropy of $\Y$ in the conditional distribution $\prob_R(\Y \mid \X = \x)$,  we see that $\entrop_R(\Y \mid \X)$ is the expected value of $\entrop_R(\Y \mid \x)$, taken over all $\x$, i.e., $\entrop_R(\Y \mid \X) = \expect_{\x}\left[\entrop_R(\Y \mid \x)\right]$.

\paragraph{Logical Entropy.}
The \emph{logical entropy} of $\X$ in $R$ is the probability that two tuples
$\seq{w}$ and $\seq{w'}$, drawn randomly with replacement from $R$ according to $\prob_R$, differ in some attribute in $\seq X$~\cite{logical-entropy}. That is,
\[ \logentrop_R(\seq{X}) := 1 - \sum_{\seq{x}\colon \seq X} p_R(\seq x)^2. \]
Here, $p_R(\seq x)^2$ is the probability that two random tuples are exactly equal to $\seq x$ on $X$.

We denote by $\logentrop_R(\Y \mid \x)$ the logical entropy of $\Y$ in the conditional distribution $\prob_R(\Y \mid \X = \x)$, i.e.,
\[ \logentrop_R(\Y \mid \x) = 1 - \sum_{\y \colon \Y} \prob_R(\y \mid \x)^2.\]

The \emph{logical conditional entropy} of $\seq Y$ given $\seq X$ in $R$, denoted $h_R(\seq Y \mid \seq X)$, is the probability that two tuples, drawn at random with replacement from $R$ according to $p_R$, are equal in all attributes of $\seq X$, but differ in some attribute of $\seq Y$,

\begin{equation*}
  \logentrop_R(\seq{Y} \mid \seq{X}) := \sum_{\x, \y} \prob_R(\x\y)[\prob_R(\x)-\prob_R(\x\y)].
\end{equation*}
Here, the factor $\prob_R(\x\y)$ expresses the probability of observing $\x\y$
in the first tuple and the factor $\prob_R(\x)-\prob_R(\x,\y)$ is the
probability that the second tuple has the same value for $\x$ but differs in
$\y$.

Note that, in contrast to the case of Shannon entropy where $\entrop_R(\Y \mid \X) = \expect_{\x}[ \entrop_R(\Y \mid \x)]$, in logical entropy $\logentrop_R(\Y \mid \X) \not = \expect_{\x}[ \logentrop_R(\Y \mid \x)]$.
\inFullVersion{We discuss the relationship between logical and Shannon entropy further in Appendix~\ref{sec:logical:vs:Shannon_entropy}}
\inConfVersion{We discuss the relationship between logical and Shannon entropy further in \cite{artifacts}.}


\section{AFD Measures}
\label{sec:measures}
In this section, we survey the literature on \afd measures. 

\paragraph{\afd measures.} Formally, an \emph{\afd measure}, short for
\emph{approximate} \fd measure, is a function that maps pairs $(\varphi, R)$,
with $\varphi$ an \fd and $R$ a relation, to a number in the interval $[0,1]$
that indicates the level to which $\varphi$ holds in $R$. Higher values are
intended to indicate that $R$ makes fewer violations to $\varphi$, and we
require that $f(\varphi, R) = 1$ if $R$ perfectly satisfies $\varphi$.
It is important to note that instead of defining \afd measures, some papers in
the literature define \emph{error measures} where a high value indicates
a high number of errors against the \fd. In what follows, we routinely re-define
such error measures $e$ into an \afd measure $f_e$ by setting
$f_e(\varphi, R) := 1 - e(\varphi, R)$.

Every \afd measure $f$ naturally gives rise to an associated \afd discovery
algorithm as follows. From an abstract viewpoint, an \afd discovery algorithm
simply consists of a fixed \afd measure $f$ and a threshold
$\epsilon \in [0,1]$. Given a relation $R(\W)$ the algorithm returns all \fds
over $\W$ whose $f$-value lies in the range $[\epsilon,1[$. (In particular, this excludes the \fds satisfied by $R$.)

\inFullVersion{
  \paragraph{Interpretation and baselines.}
  As with all threshold-based algorithms, a key difficulty for \afd discovery
  algorithms lies in determining the correct threshold $\epsilon$ to use. At its
  core, this question boils down to how we should interpret the significance of
  the values returned by $f$. It is tempting to see the values of $f$ as a
  percentage with $f(\varphi, R) = 1$ indicating that $R$ perfectly satisfies
  $\varphi$ and $f(\varphi, R) = 0$ indicating that $R$ completely fails to
  satisfy $\varphi$.  This interpretation, however, is only valid if the measure
  has a notion of $R$ ``completely failing to satisfy'' $\varphi$. In particular,
  this is only possible when there are relations for which $f(\varphi,R) = 0$.
  In what follows, we call a
  relation $R$ with $f(\varphi, R) = 0$ a \emph{baseline of $f$ for $\varphi$}. If
  $f$ has a baseline for every \fd $\varphi$ then we say that $f$ \emph{has
    baselines}, otherwise we call $f$ \emph{without baselines}. Having baselines
  is a necessary condition for interpreting measure scores as
  percentages.
}

\paragraph{Conventions.} Throughout this section, assume that
$R$ is a $\W$-relation, let $\X,\Y$ be disjoint subsets of $\W$ and let
$\varphi = \X \to \Y$. We convene that for all measures $f$ that we describe, we
trivially set $f(\varphi, R) := 1$ if $R \models \varphi$. So, the definitions
that follow only apply when $R \not \models \varphi$. In that case, observe that
$R$ must be non-empty, that $\size{\!\dom_R(\X)} \not = \size{R}$ and that
$\size{\!\dom_R(\Y)} > 1$ since otherwise $R$ trivially satisfies $\X \to
  \Y$. As a consequence, $\entrop_R(\Y) > 0$ and $\logentrop_R(\Y) > 0$. This ensures that the denominator of fractions in the formulas that follow are never zero.

\subsection{Co-occurrence ratio}
\label{sec:cords-measure}

Ilyas et al.~\cite{DBLP:conf/sigmod/IlyasMHBA04} consider the derivation of
\afds (called \emph{soft} \fds in their paper) as well as general correlations
between attributes. To derive \afds, they consider the
ratio between the number of distinct $\X$-tuples and the number of distinct
$\X\Y$-tuples occurring in $R$. We denote this measure by $\rho$, formally
defined as:
\[ \rho(\X \to \Y, R)  := \frac{|\dom_{\X}(R)|}{|\dom_{\X\Y}(R)|}. \]
This is $1$ if $R$ satisfies $\X \to \Y$ and decreases when more $\y$-tuples
occur with the same $\x$-tuple. Note that $\rho$ is a set-based measure, as it
ignores the multiplicities of the tuples in $R$. \inFullVersion{It is also
without baselines, as $\size{\!\dom_{\X}(R)}>0$ for any non-empty relation $R$
and as, by convention, $\rho(\varphi,R) = 1$ when $R$ is empty.}

\subsection{g-measures}
\label{sec:g-measures}

Kivinen and Mannila~\cite{DBLP:journals/tcs/KivinenM95} introduced three  error measures on set-based relations. Generalized to bag-based relations, and converted to \afd measures, these are the following.

\paragraph{The measure $g_1$.} The measure $g_1$ is based on logical
entropy. Specifically, Kivinen and Manila defined $g_1$ to reflect the
(normalized) number of violating pairs in $R$.  Here, a pair $(\w,\w')$ of
$R$-tuples is a \emph{violating pair} if they are equal on $\X$ but differ on
$\Y$. Formally, if we denote the bag of violating pairs in $R \times R$ by
$G_1(\X \to \Y, R)$ then, converted to an \afd measure instead of an error measure
\begin{align*}
  g_1(\X \to \Y,R) & := \frac{|R|^2  - |G_1(\X \to \Y,R)|}{|R|^2} \\
                   & = 1 - \frac{|G_1(\X \to \Y,R)|}{|R|^2} =  1 - h_R(\Y \mid \X).
\end{align*}
In other words, $g_1$ is maximized when the logical conditional entropy is minimized.

\inFullVersion{The measure $g_1$ is without baselines.}
Because pairs of the form $(\w,\w)$ are never violating, it is straightforward to see that the total number of violating pairs is bounded from above by $\size{R}^2 - \sum_{\w} R(\w)^2$. We denote by $g'_1$ the normalized version of $g_1$,
\begin{align*}
  g'_1(\X \to \Y,R) & := 1 -  \frac{|G_1(\X \to \Y,R)|}{\size{R}^2-\sum_{\w} R(\w)^2}.
\end{align*}
\inFullVersion{The baselines of $g'_1$ are hence those relations for which the set $G_1(\X \to \Y, R)$ consists of all possible violating pairs.}

Both $g_1$ and $g'_1$ have been used as the basis of \afd discovery
algorithms. In particular, $g_1$ is the basis of
\textsc{Fdx}~\cite{ZhangGuoEtAl2020} while $g'_1$ is the basis of
\textsc{Pyro}~\cite{DBLP:journals/pvldb/0001N18}.  Adaptations of $g'_1$ are also used in the context of denial constraints ~\cite{PenaAlmeidaEtAl2019}
and roll-up dependencies~\cite{DBLP:journals/tods/CaldersNW02}.

\paragraph{The measure $g_2$.} Kivinen and Manila defined $g_2$ to reflect the probability that a random tuple participates in a violating pair. We define $G_{2}(\X \to \Y, R)$ to be the set of all tuples in $R$ that participate in a violating pair, \\
\scalebox{.9}{\parbox{1.1\linewidth}{%
	\begin{align*}
		G_{2}(\X {\tiny\to} \Y, R):=\left\{ \w\in R\mid \exists \w' \in R, (\w,\w') \in G_1(\X {\tiny\to} \Y, R) \right\}.
	\end{align*}
}}
Then, $g_2$, converted to an \afd measure instead of
an error measure as originally proposed,
computes the probability that a tuple, drawn randomly from $R$ according to $\prob_R$, is not part of a violating pair,
\[ g_2(\X \to \Y, R) := 1 - \sum_{\w \in G_2(\X \to \Y, R)} \prob_R(\w) . \]
The FD-compliance-ratio, that is used as one of the building blocks in UNI-DETECT  \cite{WangHe2019}, is based on $g_2$.

\paragraph{The measure $g_3$.} The measure $g_3$ computes the relative size of a maximal subrelation of $R$ for which $\X \to \Y$ holds. Specifically, define $R'(\W)$ to be a subrelation of $R(\W)$, denoted $R'\subseteq R$, if $R'(\w) \leq R(\w)$
for all $\w\colon \W$. Let $G_3(\X\to \Y, R)$ denote the set of all subrelations of $R$ that satisfy $\X \to \Y$,
\[ G_3(\X\to\Y, R) := \left\{R' \mid R'\subseteq R, R'\models \X\to \Y \right\}.\] Then
$g_3$ is defined as the maximum relative size of a subrelation satisfying $\X \to \Y$:
\begin{align*}
  g_3(\X \to \Y, R) & := \max_{R' \in G_3(\X \to \Y, R)} \frac{|R'|}{|R|}.
\end{align*}
Note that $1-g_3(\X \to \Y, R)$ can naturally be interpreted as the minimum fraction of tuples that need to be removed for $\X \to \Y$ to hold in $R$.

\altVersion{The measure $g_3$ is without baselines.} For any non-empty $R$ we can
always obtain a subrelation $R' \in G_3(\varphi, R)$ of size
$\size{\!\dom_{\X}(R)}$ by arbitrarily fixing one $\y$-value for each
$\x$-value. As such, $g_3$ is bounded from below by
$\frac{|\dom_{\X}(R)|}{\size{R}} > 0$. Gianella and
Robertson~\cite{DBLP:journals/is/GiannellaR04} proposed a normalized variant
$g'_3$ of $g_3$, defined as follows:
\begin{align*}
  g'_3(\X \to \Y, R) & := \max_{R' \in G_3(\X \to \Y, R)} \frac{\size{R'} - \size{\dom_{R}(\X)}}{\size{R} - \size{\dom_{R}(\X)}}.
\end{align*}
\inFullVersion{This variant has baselines, namely all relations $R$ for which no 
subrelation $R' \in G_3(\varphi,R)$ is larger than $\size{\dom_R(\X)}$.}

The unnormalized measure $g_3$ is used in multiple \afd discovery algorithms~\cite{DBLP:journals/tcs/KivinenM95,DBLP:journals/pvldb/Berti-EquilleHN18, DBLP:journals/cj/HuhtalaKPT99,DBLP:journals/jamds/KingL03}. {Furthermore, the `per-tuple' probability of an \fd, as defined in \cite{DBLP:conf/webdb/WangDSFH09}, is precisely  $g_3$.}
Berzal et al.~\cite{BERZAL2002207} use it as the basis for relational decomposition based on \afds instead of \fds.
{Exact and approximate solutions for the computation of $g_3$ in the context of non-crisp \fds are proposed in~\cite{DBLP:conf/icde/Faure-Giovagnoli22}.}
We note that $g_3$ has been generalized to other dependencies as well: e.g.,
conditional \fds~\cite{DBLP:conf/sigmod/CormodeGKMSZ09,DBLP:journals/pvldb/RammelaereG18},  inclusion dependencies~\cite{MarchiLopesEtAl2009}, and conditional matching dependencies~\cite{DBLP:journals/tkdd/WangSCYC17}.
By contrast, the normalized version $g'_3$ only appears in  \cite{DBLP:journals/is/GiannellaR04}.

\subsection{Fraction of information}
\label{sec:fraction-information-measures}

\paragraph{Fraction of Information.} Cavallo and Pittarelli~\cite{DBLP:conf/vldb/CavalloP87} introduced
\emph{fraction of information} (\FI)  as a way to generalize \fds from
deterministic to probabilistic databases.  Usage of \FI as an \afd measure was later studied by Giannelli and Robertson~\cite{DBLP:journals/is/GiannellaR04}
.
\FI is based on Shannon entropy and is formally defined as
\[ \FI(\X \to \Y, R) :=
  \frac{\entrop_R(\Y) - \entrop_R(\Y \mid \X)}{\entrop_R(\Y)}. 
\]
The numerator $\entrop_R(\Y) - \entrop_R(\Y \mid \X)$ is known as \emph{mutual information}~\cite{DBLP:journals/ipm/Edwards08}, which we further denote by $\mut_R(\X; \Y)$.

We can understand \FI as follows.  $\entrop_R(\Y)$ measures the uncertainty of
observing $\Y$, while $\entrop_R(\Y \mid \X)$ measures the uncertainty of
observing $\Y$ after observing $\X$. \FI hence represents the proportional
reduction of uncertainty about $\Y$ that is achieved by knowing $\X$. When $R$
satisfies $\X \to \Y$, there is no uncertainty about $\Y$ after observing $\X$
and hence $\entrop_R(\Y \mid \X) = 0$ and so \FI is $1$. Conversely, when $\X$
and $\Y$ are independent random variables in $\prob_R$, there is no
reduction in uncertainty, and hence $\entrop_R(\Y\mid \X)=\entrop_R(\Y)$ and so
$\FI$ is $0$. \inFullVersion{ Thus, the baselines of $\FI$ for $\X \to \Y$ are
those relations $R$ for which $\X$ and $\Y$ are independent in $\prob_R$.}

\paragraph*{Bias.}
Mandros et al. \cite{DBLP:conf/kdd/MandrosBV17,DBLP:journals/kais/MandrosBV20}
and Pennerath et al. \cite{DBLP:conf/kdd/PennerathMV20} proposed two refinements
to \FI specifically for \afd discovery, called \emph{reliable \FI} (\RFI) and
\emph{smoothed \FI} (\SFI), respectively. They are motivated in proposing these
refinements by the following observation. Consider a relation $S(\W)$ and assume
that we are given relation $R(\W)$ of size $n$ that is obtained by sampling $n$
tuples from $S$ according to distribution $\prob_S$. Further assume that we do
not have access to $S$ and wish to determine $\FI(\X \to \Y, S)$ based on
$R$. Then a result by Roulston~\cite{ROULSTON1999285} states that the expected
value of $\mut_R(\X; \Y)$, taken over all $R$ obtained in this manner, equals
\begin{equation*}
  \mut_S(\X; \Y) + \frac{\size{\dom_S(\X\Y)} - \size{\dom_S(\X)} - \size{\dom_S(\Y)} + 1}{2n}.
\end{equation*}
In other words, we may expect  $\mut_R(\X; \Y)$ to overestimate
$\mut_S(\X; \Y)$ and the magnitude of overestimation depends on the size of the
active domains of $\X\Y$, $\X$, and $\Y$ in $S$, as well as on $n$.  Additionally, because $\entrop_R(\Y)$ underestimates $\entrop_S(\Y)$~\cite{ROULSTON1999285},
we may conclude that  $\FI(\X \to \Y, R)$ is expected to overestimate
$\FI(\X \to \Y, S)$ and the magnitude of overestimation depends on the active
domain sizes and the size of $S$. This overestimation is problematic since $\FI(\X \to \Y, R)$ will be quite large, even if $\X$ and $\Y$ are independent in $\prob_S$, resulting in $\FI(\X \to \Y, S)$ being $0$.

\paragraph*{Reliable \FI.}
Reliable \FI corrects for this bias by subtracting the mutual information value
that is expected under random $(\X;\Y)$-permutations.
\begin{definition}
  \label{def:permuation}
  Relation $R'$ is an \emph{$(\X;\Y)$-permutation} of $R$, denoted
  $R \sim_{\X;\Y} R'$
  if (i) $\size{R} = \size{R'}$; (ii)
  $\pi_{\X}(R) = \pi_{\X}(R')$; (iii) $\pi_{\Y}(R) = \pi_{\Y}(R')$; and (iv) $\pi_{\Z}(R) = \pi_{\Z}(R')$ where $\Z = \W \setminus \X\Y$.
\end{definition}
In particular, $R'$ and $R$ have the same marginal distributions on $\X$ and on $\Y$, $\prob_{R'}(\X) = \prob_R(\X)$ and $\prob_{R'}(\Y) = \prob_R(\Y)$. In what follows, for a measure $f$, we denote by $\expect_{R}[f(\X \to \Y, R))]$ the expected value of $f(\X \to \Y,R)$ where the expectation is taken over all  $(\X;\Y)$-permutations of $R$.

Reliable fraction of information is then defined as
\begin{align*}
  \RFI(\X \to \Y, R) & := \FI(\X \to \Y, R) - \expect_{R}[ \FI(\X \to \Y, R)].
\end{align*}
Because the number of permutations of $R$ is finite, we may compute $\expect_R[\FI(\X \to \Y, R)]$, and hence also $\RFI(\X \to \Y, R)$, by simply computing $\FI(\X \to \Y, R')$ for every permutation $R'$ of $R$ and taking the average. More efficient algorithms are proposed in \cite{DBLP:journals/kais/MandrosBV20,DBLP:conf/kdd/MandrosBV17}. Even with these improved algorithms, computing $\RFI$ remains inefficient, as we show in Section~\ref{sec:eval-rwd}.



Strictly speaking, $\RFI$ is not an \afd measure since it can become negative when $\FI(\varphi, R) < \expect_{R}[\FI(\varphi, R)]$. Because such negative $\RFI$ values indicate that there is weak evidence to conclude that $\varphi$ is an \afd, we turn $\RFI$ into an actual AFD measure $\RFIplus$ by setting
\[
  \RFIplus(\X \to \Y, R) := \max(\RFI(\X \to \Y, R), 0).
\]
\inFullVersion{The baselines of $\RFIplus$ for $\X \to \Y$ are hence all relations whose \FI value is smaller or equal than the expected value under random permutations.}

\paragraph*{Smoothed \FI.}
Smoothed \FI uses \emph{laplace smoothing} to reduce bias. Laplace smoothing is
a well-known statistical technique to reduce estimator variance. It is
parameterized by a value $\alpha > 0$. Specifically, for a relation $S(\X\Y)$, let $S^{(\alpha)}$
denote the $\alpha$-smoothed version of $S$, defined by
$S^{(\alpha)}(\x\y) := S(\x\y) + \alpha$ for every $\x \in \dom_{S}(\X)$ and $\y \in \dom_S(\Y)$. Note in particular, that it is possible that  $S(\x\y) = 0$, in which case $S^{(\alpha)}(\x\y) = \alpha$. Then the smoothed \FI of $R$ is simply the normal \FI  of the $\alpha$-smoothed version of $\pi_{\X\Y}(R)$:
\begin{align*}
  \SFI_{\alpha}(\X \to \Y, R) & := \FI(\X \to \Y, \pi_{\X\Y}^{(\alpha)}(R)).
\end{align*}
We note that, because $\pi_{\X\Y}^{(\alpha)}(R)$ contains a tuple $\x\y$ for every possible combination of $\x \in \dom_{\X}(R)$ and $\y \in \dom_{\Y}(R)$, it can be many times larger than $R$. \SFI is therefore also relatively inefficient to compute, as we show in Section~\ref{sec:eval-rwd}.

\afd discovery algorithms based on \RFI and \SFI are presented in \cite{DBLP:conf/kdd/MandrosBV17,DBLP:journals/kais/MandrosBV20}
and \cite{DBLP:conf/kdd/PennerathMV20}, respectively.

\subsection{Probabilistic dependency, $\tau$ and $\mu$}
\label{sec:prob-dependency}

Piatetsky-Shapiro and Matheus~\cite{piatetsky1993} proposed \emph{probabilistic
  dependency} as another probabilistic generalization of a functional
dependency. They also introduced a normalized version of probabilistic
dependency, which is equivalent to the Goodman and Kruskal $\tau$ measure of
association~\cite{10.2307/2281536}. Finally, they also propose a rescaled
version of $\tau$. All three notions are defined as follows. It is worth noting,
that, apart from~\cite{piatetsky1993}, we are not aware of any work that
considers these measures for \afd discovery in the database context, let alone
designs \afd discovery algorithms for them.

\paragraph*{Probabilistic dependency.}
The \emph{probabilistic dependency of\/ $\Y$ on $\X$ in $R$}, denoted by
$\pdep(\X \to \Y, R)$, represents the conditional probability that two tuples
drawn randomly with replacement from $R$ are equal on $\Y$, given that they are
equal on $\X$. Formally, 
\begin{align*}
  \pdep(\X \to \Y, R) & := \sum_{\x} \prob_R(\x) \pdep(\Y \mid \x, R),
\end{align*}
where $\pdep(\Y \mid \x, R)$ is the probability that two random $\Y$-tuples
drawn with replacement from the conditional distribution $\prob_R(\Y \mid \x)$
are equal:
\[ \pdep(\Y \mid \x, R) := \sum_{\y} \prob_R(\y \mid \x)^2 = 1 - \logentrop_R(\Y \mid \x). \]
Probabilistic dependency is hence a measure based on logical entropy.
It can be understood as follows. Suppose that we are given two tuples that equal
$\x$ on $\X$. Then $\pdep(\Y\mid \x, R)$ is the probability that these tuples are
also equal on $\Y$, and $\pdep(\X \to \Y, R)$ is the expected value of
$\pdep(\Y\mid \x, R)$ over all $\x$.

We note that probabilistic dependency can also be seen as a generalization of the measure $g_2$. Whereas $g_2$ computes the probability that a random tuple cannot be extended to a violating pair, probabilistic dependency computes the average conditional probability that a given $\X$-tuple $\x$ cannot be extended to a violating pair, where the average is taken over all values of $\X$.

\paragraph*{The measure $\tau$.} \inFullVersion{It is straightforward to see that $\pdep(\X \to \Y, R) > 0$, always. As such, $\pdep$ is a measure without baselines. In fact, }Piatetsky-Shapiro and Matheus~\cite{piatetsky1993} show that we always have
\[ \pdep(\X \to \Y, R) \geq \pdep(\Y, R) \]
where $\pdep(\Y, R)$, called \emph{probabilistic self-dependency},  is defined
as the probability that two random tuples in $R$ have equal $\Y$ attributes,
\[ \pdep(\Y, R) := \sum_{\y} \prob_R(\y)^2 = 1 - \logentrop_R(\Y). \]
To account for the relationship between $\pdep(\X \to \Y, R)$ and $\pdep(\Y,R)$,
Piatetsky-Shapiro and Matheus propose to normalize $\pdep(\X\to\Y,R)$ w.r.t. $\pdep(\Y,R)$. The
resulting measure is equivalent to the Goodman and Kruskal $\tau$ (tau) measure
of association~\cite{10.2307/2281536}, which is defined as
\begin{align*}
  \tau(\X\to \Y, R) :=
  \frac{\pdep(\X \to \Y, R)-\pdep(\Y, R)}{1-\pdep(\Y, R)}.
\end{align*}
Piatetsky-Shapiro and Matheus
explain $\tau$  in the following way~\cite{piatetsky1993}. Suppose we
are given a tuple drawn randomly from $R$ according to $\prob_R$, and we need to
guess its $\Y$ value. One strategy is to make guesses randomly according to the
marginal distribution of $\Y$, i.e. guess value $\Y=\y$ with probability
$\prob_R(\y)$. Then the probability for a correct guess is $ \pdep(\Y,R)$. If we
also know that item has $\X=\x$, we can improve our guess using conditional
probabilities of $\Y$, given that $\X=\x$. Then our probability for success,
averaged over all values of $\X$, is $\pdep(\X \to \Y,R)$, and
$\tau(\X \to \Y, R)$ is the relative increase in our probability of successfully
guessing $\Y$, given $\X$. \inFullVersion{The baselines of $\tau$ for $\X \to \Y$ are hence those relations where this relative increase is zero.}

\paragraph{The measure $\mu$.} Piatetsky-Shapiro and
Matheus~\cite{piatetsky1993} note that $\pdep$ and $\tau$ have the following undesirable property.

\begin{theorem}[Piatetsky-Rotem-Shapiro~\cite{piatetsky1993}]
  Given a random relation $R$ of size $N\geq 2$ containing attributes $\X$ and $\Y$, where $\X$ has $K = \size{\dom_R(\X)}$ distinct values in its active domain, the expected values of $\pdep$ and $\tau$ under random permutations of $R$ are
\scalebox{0.95}{\parbox{1.05\linewidth}{%
	\begin{align*}
		\expect_R[\pdep(\X {\tiny\to} \Y, R)] & = \pdep(\Y, R) + \frac{K-1}{N-1}(1-\pdep(\Y, R)), \\
		\expect_R[\tau(\X {\tiny\to} \Y, R)] & = \frac{\size{\dom_R(\X)}-1}{\size{R}-1}.
	\end{align*}
}}
\end{theorem}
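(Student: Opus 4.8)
The plan is to rewrite $\pdep$ in terms of raw counts, reduce the expectation to a collision probability for randomly re-paired tuples, and then simplify using the two marginal constraints that $(\X;\Y)$-permutations preserve. Since $\pdep(\X \to \Y, R)$ depends only on $\pi_{\X\Y}(R)$, I write $n_{\x\y} := R(\x\y)$ for the joint counts, $n_\x := \sum_\y n_{\x\y}$ and $n_\y := \sum_\x n_{\x\y}$ for the marginal counts, and $N := \size{R}$. Expanding the definitions of $\pdep(\Y \mid \x, R)$ and $\prob_R(\x)$ gives the count form
\[ \pdep(\X \to \Y, R) = \frac{1}{N}\sum_{\x} \frac{1}{n_\x}\sum_{\y} n_{\x\y}^2. \]
An $(\X;\Y)$-permutation keeps the $\X$-multiset and the $\Y$-multiset fixed and randomly re-pairs them (equivalently, it uniformly permutes the $\Y$-column); hence $n_\x$, $n_\y$ and $N$ are permutation-invariant, while the joint counts $n_{\x\y}$ are the only random quantities.

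The key observation is that $\sum_\y n_{\x\y}^2$ counts ordered pairs of tuples $(i,j)$ that share the $\X$-value $\x$ and also agree on $\Y$. By linearity of expectation I compute this as a sum of agreement probabilities over the $n_\x^2$ ordered pairs with $\X = \x$: the $n_\x$ diagonal pairs agree with probability $1$, and each of the $n_\x(n_\x-1)$ off-diagonal pairs agrees with the collision probability $\frac{\sum_\y n_\y(n_\y-1)}{N(N-1)}$ of two distinct positions in the randomly permuted $\Y$-column. Writing $S_2 := \sum_\y n_\y^2$ this yields
\[ \expect_R\!\Big[\sum_{\y} n_{\x\y}^2\Big] = n_\x + n_\x(n_\x-1)\,\frac{S_2 - N}{N(N-1)}. \]

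From here the computation is routine bookkeeping. Dividing by $n_\x$, applying $\frac{1}{N}\sum_\x(\cdots)$, and using $\sum_\x 1 = K$ together with $\sum_\x (n_\x - 1) = N - K$ collapses the sum to $\frac{K}{N} + \frac{(S_2-N)(N-K)}{N^2(N-1)}$. Recognising that $\pdep(\Y, R) = S_2/N^2$ and $1 - \pdep(\Y,R) = (N^2 - S_2)/N^2$, a short rearrangement over the common denominator $N^2(N-1)$ rewrites this exactly as $\pdep(\Y,R) + \frac{K-1}{N-1}(1 - \pdep(\Y,R))$, establishing the first identity. For the second, since $\pdep(\Y,R)$ is permutation-invariant, linearity applied to the definition of $\tau$ gives $\expect_R[\tau] = \frac{\expect_R[\pdep] - \pdep(\Y,R)}{1 - \pdep(\Y,R)}$; substituting the first identity cancels the factor $1 - \pdep(\Y,R)$ and leaves $\frac{K-1}{N-1} = \frac{\size{\dom_R(\X)}-1}{\size{R}-1}$.

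The main obstacle — and the one place where care is needed — is pinning down the correct probabilistic model for the permutation and computing the off-diagonal collision probability. One must argue that under a uniformly random re-pairing the event that two fixed distinct positions receive equal $\Y$-values has probability $\frac{\sum_\y n_\y(n_\y-1)}{N(N-1)}$ (two distinct draws without replacement from the fixed $\Y$-multiset), and that this is independent of which $\x$-block the positions lie in. Equivalently, one may note that each $n_{\x\y}$ is hypergeometrically distributed and invoke its known second moment; the two routes agree, but the pair-collision argument avoids the variance formula and aligns with the ``two random tuples'' intuition underlying $\pdep$.
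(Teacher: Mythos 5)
The paper offers no proof of this theorem: it is imported from \cite{piatetsky1993} as a known result, so there is no in-paper argument to compare against. Judged on its own, your proof is correct and complete. The count form of $\pdep$, the decomposition of $\sum_{\y} n_{\x\y}^2$ into $n_\x$ diagonal pairs and $n_\x(n_\x-1)$ off-diagonal pairs each colliding with probability $(S_2-N)/(N(N-1))$, the collapse of the sum to $K/N + (N-K)(S_2-N)/(N^2(N-1))$, and the final rearrangement all check out (both sides of the first identity reduce to the numerator $KN^2 + NS_2 - N^2 - KS_2$ over $N^2(N-1)$). The cancellation argument for $\tau$ is also valid, since $\pdep(\Y,R)$ is invariant under $(\X;\Y)$-permutations and is strictly less than $1$ under the paper's standing assumption $R \not\models \X\to\Y$.

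One point needs fixing, and it is exactly the point you flag as the ``main obstacle'': your parenthetical claim that randomly re-pairing the two multisets is \emph{equivalent} to uniformly permuting the $\Y$-column is false, and the distinction matters. Averaging uniformly over the \emph{distinct relations} admitted by the paper's Definition~\ref{def:permuation} is a different distribution from averaging over the $N!$ column permutations (the Fisher--Yates model), and the theorem holds only under the latter. Concretely, take $\X$-column $(a,a,b)$ and $\Y$-multiset $\{c,c,d\}$: the two reachable relations $\{(a,c),(a,c),(b,d)\}$ and $\{(a,c),(a,d),(b,c)\}$ have $\pdep$ values $1$ and $2/3$. The permutation model weights them $1/3$ and $2/3$, giving expectation $7/9$, which matches the stated formula ($K=2$, $N=3$, $\pdep(\Y,R)=5/9$); uniform weighting over the two relations gives $5/6$, which does not. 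Your actual computation---fixed positions, draws without replacement---is carried out in the permutation model, so it proves the theorem as intended by \cite{piatetsky1993}; just delete ``equivalently'' and state explicitly that the expectation is taken over column permutations counted with multiplicity, since otherwise a reader could take you to be proving the statement for the literal uniform-over-relations reading of $\expect_R$, under which it fails.
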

\vspace{-1.5ex}
Thus, for a fixed distribution of $\Y$ values, $\expect_R[\pdep(\X \to \Y,R)]$
depends only on the number of distinct $\X$ values and not on their relative
frequency. Moreover, the formula for $\expect_R[\tau(\X\to\Y, R)]$ tells us that
if we have two candidate \afds with the same right hand side, $\X \to \Y$ and
$\Z \to \Y$, then if $\size{\dom_R(\Z)} > \size{\dom_R(\X)}$, we may expect
$\tau$ to score $\Z \to \Y$ better than $\X \to \Y$, regardless of any intrinsic
better relationship between $\Z$ and $\Y$ over $\X$ and $\Y$ in $R$. In
response, Piatetsky-Shapiro and Matheus compensate for this effect by
introducing the measure $\mu$ which normalizes $\pdep(\X\to \Y, R)$ with respect
to $\expect_R[\pdep(\X\to \Y, R)]$ instead of
$\pdep(\Y, R)$:\footnote{\label{footnote:muwelldef}This fraction is ill-defined
  if the denominator $1 - \expect_R[\pdep(\varphi,R)] = 0$. This only happens,
  however, when $R \models \varphi$, which we have assumed not to be the case
  throughout this section, since we have already convened to set
  $\mu(\varphi,R) = 1$ whenever $R \models \varphi$. \inConfVersion{See the
    full paper~\cite{artifacts} for a proof.}\inFullVersion{See the Appendix for a
    proof.}}
\begin{align*}
\mu(\X{\tiny\to}\Y, R) & := \frac{\pdep(\X {\tiny\to} \Y, R)-\expect_R[\pdep(\X {\tiny\to} \Y, R)]}{1-\expect_R[\pdep(\X {\tiny\to} \Y, R)]} \\
					   & = 1-\frac{1-\pdep(\X {\tiny\to} \Y, R)}{1-\pdep(\Y, R)} \frac{|R|-1}{|R|-|\dom_R(\X)|}
\end{align*}
Strictly speaking, $\mu$ is not a measure since it returns negative values when $\pdep(\X \to \Y, R) > \expect_R[\pdep(\X \to \Y, R)]$.  Because such negative $\mu$ values indicate that there is weak evidence to conclude that $\varphi$ is an \afd, we turn $\mu$ into an actual AFD measure $\muplus$ by setting
\[ \muplus(\X \to \Y, R) := \max(\mu(\X \to \Y, R), 0). \]
\inFullVersion{The baselines of $\muplus$ for $\X \to \Y$ are hence all relations where the $\pdep(\X \to \Y)$ value is smaller or equal to the expected value under random permuations.}

\subsection{Classes of AFD measures.} 
\rev{Looking at the previous definitions, we observe three different notions that are used to formally define a measure. Consequently,} we discern the following three classes (see also the second row in Table~\ref{table:comparison}):
\begin{compactenum}[(1)]
    \item The class of measures that have a notion of ``violation'' and quantify the number of violations, consisting of $\rho, g_2, g_3$, and $g'_3$. We denote this class by $\simple$ (\ssimple).
    \item The class of measures based on Shannon entropy, consisting of $\FI, \RFIplus$, and $\SFI$. We denote this class by $\shannon$ (\sshannon).

    \item The class of measures based on logical entropy, consisting of $g_1,g'_1, \pdep, \tau$, and $\muplus$ and denoted by $\logical$ (\slogical).
\end{compactenum}

\rev{\inConfVersion{In the full paper version~\cite{artifacts}}\inFullVersion{In appendix~\ref{sec:formal:comparison}} we observe that there are
  striking similarities between measures in each of these classes, allowing
  us to relate and link measures \emph{across classes}. For example, we can
  show that $\FI$ is the $\shannon$-based version of $\tau$, while $\pdep$
  is a $\logical$-based version of $g_2$. This gives a first insight into
  how measures compare formally. By means of this comparison, we have
  identified the following two new measures, which do not appear in the
  literature but which are $\shannon$ versions of existing measures. For
  completeness, we include both of these measures in our study.}
\paragraph{The new measure \SHANNONGONE.} The measure $g_1$ is based on logical entropy. We may hence view $1 - \entrop_R(\Y \mid \X)$ as the
Shannon equivalent of $g_1$, where logical entropy is replaced by Shannon entropy.
Giannella and Robertson~\cite{DBLP:journals/is/GiannellaR04} observed that 
$1 - \entrop_R(\Y \mid \X)$ has the range $[-\infty, 1]$ instead of $[0,1]$
and therefore disregard it as an \afd measure.
Nevertheless, we propose the following Shannon variant
for our comparison, obtained by limiting $1  - \entrop_R(\Y\mid\X)$ to be positive:
\[ \SHANNONGONE(\X \to \Y, R) := \max(1 - \entrop_R(\Y \mid \X), 0). \]

\paragraph{The new measure \RFInorm.}
We also observe a conceptual similarity between $\mu$ and $\RFI$: $\mu$ corrects for the  bias of $\tau$ under random permutations while $\RFI$ corrects for the bias of $\FI$ under random permutations. Despite this conceptual similarity, note that the corrections are done differently: $\mu$ corrects by taking the \emph{normalized} difference between $\pdep$ and $\expect_R[\pdep]$ while $\RFI$ corrects by taking the \emph{absolute} difference between $\FI$ and $\expect_R[\FI]$. As such, $\RFI$ is not a normalized measure. Since it is natural to ask what the normalized variant of $\RFI$ is and how it behaves, we define
\[ \RFInorm(\varphi, R) := \max\left(\frac{\FI(\varphi,R) - \expect_R[\FI(\varphi,R)]}{1 -\expect_R[\FI(\varphi,R)]}, 0\right). \]


\section{Sensitivity Analysis}
\label{sec:eval-syn}
In this section, we investigate 
the sensitivity of measures w.r.t.\ structural properties of the input relation. Specifically, we want to get insight into the measures'
ability to distinguish between an \fd $\X \to \Y$ in relation instances that were generated to satisfy the \fd, but subsequently had errors introduced so that the \fd no longer holds exactly, versus relation instances where $\X$ and $\Y$ were randomly generated. A good \fd measure should be able to consistently distinguish between these two cases, giving high scores to the former and low scores to the latter with a clear separation in scores between the two cases.

\subsection{Methodology}

There are various properties of the input relation $R$ that may affect the measures'  power to distinguish between these two aforementioned cases. We study the effect on the distinguishing power for $\X\to\Y$ w.r.t.\ the following.

(1) The \emph{error rate}, i.e., the amount of errors introduced.
A reasonable requirement for a good \afd measure is that it should be inversely proportional to the error rate: an increase in the number of introduced errors should result in a decrease of the measure value. As discussed below, this is not the case for all considered measures.

(2) Statistics of the left-hand-side $\X$ of the \fd and the right-hand-side $\Y$ of
the \fd. Specifically, define \emph{left-hand-side (LHS) uniqueness} of the
input relation $R$ as the ratio $\size{\!\dom_R(\X)} / \size{R}$, quantifying
the uniqueness of $\X$. Define the \emph{right-hand-side (RHS) skew of\/ $R$},
as the skewness of the distribution $\prob_R(\Y)$. It has previously been observed in the
literature
\cite{piatetsky1993,DBLP:conf/kdd/PennerathMV20,DBLP:conf/kdd/MandrosBV17,DBLP:journals/kais/MandrosBV20}
that measures can be biased w.r.t.\ \LHSuniqueness and \RHSskew in the sense
that measures may give very high scores to relations with high \LHSuniqueness or
\RHSskew values. However, these statistics alone do \emph{not} provide a good
signal for quantifying the approximateness of $\X \to \Y$ because they look only
at $\X$ or only at $\Y$ but not at their correlation. We note that this is in
particular true for non-linear FDs: as the number of attributes in $\X$ grows,
LHS uniqueness naturally tends to $1$. Thus, a measure that is biased
w.r.t. \LHSuniqueness risks returning too many non-linear \fds, especially on
tables with many attributes.

We have created three synthetic benchmarks, denoted $\synnoise$, $\synunique$, and $\synskew$ to study the measures' sensitivity to errors,
LHS uniqueness, and \RHSskew, respectively.
Each synthetic benchmark $\benchmark$ consists of relations $R(\X\Y)$ with
$\X = \{X\}$ and $\Y = \{Y\}$ and is partitioned into two subsets: (1)
$\benchmarknonfd$ containing relations $R$ where $\X\to\Y \not \in \schema(R)$;
and (2) $\benchmarkfd$ containing relations where $\X\to\Y \in
  \schema(R)$.
(Recall that $\schema(R)$ denotes the design schema of $R$.) Each subset employs a distinct random process to generate relations. For
relations in $\benchmarknonfd$, $\X$ and $\Y$ values are generated independently
at random, while relations in $\benchmarkfd$ are generated by first constructing
a relation $R$ such that $R \models \X\to\Y$, and then passing $R$ through a controlled noisy error channel.

\paragraph*{Generation process.}
The generation process of a relation $R$ depends on a number of parameters
that are drawn uniformly at random from the following ranges:
$\size{R}\in [100; 10000]$; $\size{\!\dom_R(\X)}\in [\frac{1}{5}\size{R}, \frac{3}{4}\size{R}]$,
$\size{\!\dom_R(\Y)}\in [5, \frac{1}{2}\size{\!\dom_R(\X)}]$; and error rate $\eta\in [0.5\%, 2\%]$.
Values for $\X$ and $\Y$ are drawn according to the Beta
distribution,
$B(\alpha,\beta)$, which is a family of continuous probability distributions
defined on the interval $[0, 1]$ in terms of two positive parameters $\alpha$
and $\beta$ that control the shape of the distribution. We consider the ranges
$\alpha\in (0,1]$ and $\beta\in[1,10]$.  For $\alpha=\beta=1$ the distribution
is uniform and for any other values it is reverse J-shaped with a right tail.
The \emph{skewness} is defined as
$\frac{2(\beta-\alpha) \sqrt{\alpha+\beta+1}}{(\alpha+\beta+2) \sqrt{\alpha
      \beta}}$ and is known to measure the asymmetry of the probability
distribution about its mean.  In particular, the skew is zero for the uniform
distribution and increasing values indicate longer tails with lower mass, that
is, a higher mass near the left end of the interval $[0,1]$.  We sample values
for $\alpha$ and $\beta$ such that the skewness is at most one (except for
\synskew below where we consider skew values up to 10).

So, for every relation $R$ the parameters $\size{R}$, $\size{\!\dom_R(\X)}$, $\size{\!\dom_R(\Y)}$, $\alpha_{\X}$, $\beta_{\X}$, $\alpha_{\Y}$, $\beta_{\Y}$, $\eta$ are chosen uniformly at random under the conditions described above.
To generate a table $R$ in $\benchmarknonfd$, the following procedure is repeated $\size{R}$ times: sample $\x\in\dom_R(\X)$ (resp., $\y\in\dom_{R}(\Y)$) according to $B(\alpha_{\X}, \beta_{\X})$ (resp., $B(\alpha_{\Y}, \beta_{\Y})$) and add $(\x,\y)$ to $R$.
To generate a table $R$ in $\benchmarkfd$,
we first construct a dictionary $D$ by, for each value $\x\in dom_R(\X)$,
assigning a value $D(\x)\in dom_R(\Y)$ drawn at random according to
$B(\alpha_{\Y}, \beta_{\Y})$. Then, we populate $R$ by adding $\size{R}$ tuples
$(\x,D(\x))$ where
$\x\in dom_R(\X)$ is drawn at random according to $B(\alpha_{\X}, \beta_{\X})$.
By construction, $R$ satisfies the \fd $\X \to \Y$. We then pass
$R$ through a controlled error channel such that, denoting by $R'$ the obtained
relation, $R'$ does not satisfy $\X \to \Y$ anymore. Concretely, we modify
$k=\lfloor \eta|R|\rfloor$ tuples $\w = (\x, D(\x))$, where $\eta$ indicates the error
rate, by randomly picking any $\tilde{\w} \in R$ with
$\tilde{\w}|_{\Y} \neq \w|_{\Y}$ and make $\tilde{\w}|_{\Y}$ the new value for
$\w|_{\Y}$.  We point out that this does not introduce any new $\Y$-values and keeps $\dom_{R}(\Y)$ stable. We also experimented with other error channels that introduce new $\Y$ values, but the results were similar and are therefore omitted.
Note that $\X$ is not modified, and therefore $\prob_{R'}(\X) = \prob_{R}(\X)$.
\inFullVersion{We note that the generation process is related to the one from
Zhang et al.~\cite{ZhangGuoEtAl2020} but with the addition of value
distributions for both $\X$ and $\Y$ based on the Beta distribution.}

The three synthetic benchmarks are created by controlling one of the parameters in the parameter set as follows. Every benchmark $\benchmark$ consists of 2500 $\benchmarknonfd$ tables and 2500 $\benchmarkfd$ tables.
\paragraph*{Benchmark \synnoise.}
We iteratively increase the error rate $\eta$ from $0\%$ to $10\%$ in 50 steps and generate 50 relations in $\synnoisefd$ per step, varying all other parameters as described above. \synnoise
is then extended with 2500 tables generated in $\synnoisenonfd$.

\paragraph*{Benchmark \synunique.}
We construct \synunique by 
iteratively increasing \LHSuniqueness from $\frac{1}{5}|{R}|$ to $10|{R}|$ in 50 steps and by generating for every step 50 tables in $\synuniquefd$ and $\synuniquenonfd$.

\paragraph*{Benchmark \synskew.}
We construct \synskew by 
iteratively increasing \RHSskew from 0 to 10 in 50 steps and by generating for every step 50 tables in $\synskewfd$ and $\synskewnonfd$.


\begin{figure*}
	\small
	\centering
	\pgfplotsset{
		width=5cm,
		every axis plot/.append style={thick},
		every axis y label/.style={
				at={(ticklabel cs:0.5)},rotate=90,anchor=near ticklabel,
			},
		every axis x label/.style={
				at={(ticklabel cs:0.5)},anchor=near ticklabel,
			},
		ymin=-0.05, ymax=1.05,
		ymajorgrids=true,
		xmajorgrids=true,
		legend columns=-1,
		legend style={draw=none},
	}
	\begin{minipage}{0.32\textwidth}
		\centering
		\hspace{2em}\ref{syn_noise_simple_legend}
		\\
		\begin{tikzpicture}
			\begin{axis}[
					legend to name=syn_noise_simple_legend,
					xlabel={Error rate},
					ylabel={Separation on $\synnoise$},
					xmin=-0.0, xmax=0.09,
				]
				\addplot [plotColor1] table [x=error,y=difference] {plotdata/syn_error_copy_rho.dat};
				\addlegendentry{$\rho$}
				\addplot [plotColor2] table [x=error,y=difference] {plotdata/syn_error_copy_g2.dat};
				\addlegendentry{$g_2$}
				\addplot [plotColor3] table [x=error,y=difference] {plotdata/syn_error_copy_g3.dat};
				\addlegendentry{$g_3$}
				\addplot [plotColor4] table [x=error,y=difference] {plotdata/syn_error_copy_g3_prime.dat};
				\addlegendentry{$g'_3$}
			\end{axis}
		\end{tikzpicture}
		\begin{tikzpicture}
			\begin{axis}[
					legend to name=syn_keylike_simple_legend,
					xlabel={LHS Uniqueness},
					ylabel={Separation on $\synunique$},
					xmin=0.1, xmax=0.9,
				]
				\addplot [plotColor1] table [x=lhs_uniq,y=difference] {plotdata/syn_keylike_rho.dat};
				\addlegendentry{$\rho$}
				\addplot [plotColor2] table [x=lhs_uniq,y=difference] {plotdata/syn_keylike_g2.dat};
				\addlegendentry{$g_2$}
				\addplot [plotColor3] table [x=lhs_uniq,y=difference] {plotdata/syn_keylike_g3.dat};
				\addlegendentry{$g_3$}
				\addplot [plotColor4] table [x=lhs_uniq,y=difference] {plotdata/syn_keylike_g3_prime.dat};
				\addlegendentry{$g'_3$}
			\end{axis}
		\end{tikzpicture}
		\begin{tikzpicture}
			\begin{axis}[
					legend to name=syn_rhsskew_simple_legend,
					xlabel={RHS Skew},
					ylabel={Separation on $\synskew$},
					xmin=0.0, xmax=9.0,
				]
				\addplot [plotColor1] table [x=rhs_skew,y=difference] {plotdata/syn_rhsskew_rho.dat};
				\addlegendentry{$\rho$}
				\addplot [plotColor2] table [x=rhs_skew,y=difference] {plotdata/syn_rhsskew_g2.dat};
				\addlegendentry{$g_2$}
				\addplot [plotColor3] table [x=rhs_skew,y=difference] {plotdata/syn_rhsskew_g3.dat};
				\addlegendentry{$g_3$}
				\addplot [plotColor4] table [x=rhs_skew,y=difference] {plotdata/syn_rhsskew_g3_prime.dat};
				\addlegendentry{$g'_3$}
			\end{axis}
		\end{tikzpicture}
	\end{minipage}
	\begin{minipage}{0.34\textwidth}
		\centering
		\ref{syn_noise_shannon_legend}
		\\
		\begin{tikzpicture}
			\begin{axis}[
					legend to name=syn_noise_shannon_legend,
					xlabel={Error rate},
					xmin=0.0, xmax=0.09,
				]
				\addplot [plotColor1] table [x=error,y=difference] {plotdata/syn_error_copy_shannon_g1_prime.dat};
				\addlegendentry{$\SHANNONGONE$}
				\addplot [plotColor2] table [x=error,y=difference] {plotdata/syn_error_copy_fraction_of_information.dat};
				\addlegendentry{$\FI$}
				\addplot [plotColor3] table [x=error,y=difference] {plotdata/syn_error_copy_reliable_fraction_of_information_prime.dat};
				\addlegendentry{$\RFIplus$}
				\addplot [plotColor4] table [x=error,y=difference] {plotdata/syn_error_copy_fraction_of_information_prime.dat};
				\addlegendentry{$\RFInorm$}
				\addplot [plotColor5] table [x=error,y=difference] {plotdata/syn_error_copy_smoothed_fraction_of_information.dat};
				\addlegendentry{$\SFI$}
			\end{axis}
		\end{tikzpicture}
		\begin{tikzpicture}
			\begin{axis}[
					legend to name=syn_keylike_shannon_legend,
					xlabel={LHS Uniqueness},
					xmin=0.1, xmax=0.9,
				]
				\addplot [plotColor1] table [x=lhs_uniq,y=difference] {plotdata/syn_keylike_shannon_g1_prime.dat};
				\addlegendentry{$\SHANNONGONE$}
				\addplot [plotColor2] table [x=lhs_uniq,y=difference] {plotdata/syn_keylike_fraction_of_information.dat};
				\addlegendentry{$\FI$}
				\addplot [plotColor3] table [x=lhs_uniq,y=difference] {plotdata/syn_keylike_reliable_fraction_of_information_prime.dat};
				\addlegendentry{$\RFIplus$}
				\addplot [plotColor4] table [x=lhs_uniq,y=difference] {plotdata/syn_keylike_fraction_of_information_prime.dat};
				\addlegendentry{$\RFInorm$}
				\addplot [plotColor5] table [x=lhs_uniq,y=difference] {plotdata/syn_keylike_smoothed_fraction_of_information.dat};
				\addlegendentry{$\SFI$}
			\end{axis}
		\end{tikzpicture}
		\begin{tikzpicture}
			\begin{axis}[
					legend to name=syn_rhsskew_shannon_legend,
					xlabel={RHS Skew},
					xmin=0.0, xmax=9.0,
				]
				\addplot [plotColor1] table [x=rhs_skew,y=difference] {plotdata/syn_rhsskew_shannon_g1_prime.dat};
				\addlegendentry{$\SHANNONGONE$}
				\addplot [plotColor2] table [x=rhs_skew,y=difference] {plotdata/syn_rhsskew_fraction_of_information.dat};
				\addlegendentry{$\FI$}
				\addplot [plotColor3] table [x=rhs_skew,y=difference] {plotdata/syn_rhsskew_reliable_fraction_of_information_prime.dat};
				\addlegendentry{$\RFIplus$}
				\addplot [plotColor4] table [x=rhs_skew,y=difference] {plotdata/syn_rhsskew_fraction_of_information_prime.dat};
				\addlegendentry{$\RFInorm$}
				\addplot [plotColor5] table [x=rhs_skew,y=difference] {plotdata/syn_rhsskew_smoothed_fraction_of_information.dat};
				\addlegendentry{$\SFI$}
			\end{axis}
		\end{tikzpicture}
	\end{minipage}
	\begin{minipage}{0.32\textwidth}
		\centering
		\ref{syn_noise_logical_legend}
		\\
		\begin{tikzpicture}
			\begin{axis}[
					legend to name=syn_noise_logical_legend,
					xlabel={Error rate},
					xmin=-0.0, xmax=0.09,
				]
				\addplot [plotColor1] table [x=error,y=difference] {plotdata/syn_error_copy_g1_prime.dat};
				\addlegendentry{$g_1,g'_1$}
				\addplot [plotColor2] table [x=error,y=difference] {plotdata/syn_error_copy_pdep.dat};
				\addlegendentry{$\pdep$}
				\addplot [plotColor3] table [x=error,y=difference] {plotdata/syn_error_copy_tau.dat};
				\addlegendentry{$\tau$}
				\addplot [plotColor4] table [x=error,y=difference] {plotdata/syn_error_copy_mu_prime.dat};
				\addlegendentry{$\muplus$}
			\end{axis}
		\end{tikzpicture}
		\begin{tikzpicture}
			\begin{axis}[
					legend to name=syn_keylike_logical_legend,
					xlabel={LHS Uniqueness},
					xmin=0.1, xmax=0.9,
				]
				\addplot [plotColor1] table [x=lhs_uniq,y=difference] {plotdata/syn_keylike_g1_prime.dat};
				\addlegendentry{$g_1,g'_1$}
				\addplot [plotColor2] table [x=lhs_uniq,y=difference] {plotdata/syn_keylike_pdep.dat};
				\addlegendentry{$\pdep$}
				\addplot [plotColor3] table [x=lhs_uniq,y=difference] {plotdata/syn_keylike_tau.dat};
				\addlegendentry{$\tau$}
				\addplot [plotColor4] table [x=lhs_uniq,y=difference] {plotdata/syn_keylike_mu_prime.dat};
				\addlegendentry{$\muplus$}
			\end{axis}
		\end{tikzpicture}
		\begin{tikzpicture}
			\begin{axis}[
					legend to name=syn_rhsskew_logical_legend,
					xlabel={RHS Skew},
					xmin=0.0, xmax=9.0,
				]
				\addplot [plotColor1] table [x=rhs_skew,y=difference] {plotdata/syn_rhsskew_g1_prime.dat};
				\addlegendentry{$g_1,g'_1$}
				\addplot [plotColor2] table [x=rhs_skew,y=difference] {plotdata/syn_rhsskew_pdep.dat};
				\addlegendentry{$\pdep$}
				\addplot [plotColor3] table [x=rhs_skew,y=difference] {plotdata/syn_rhsskew_tau.dat};
				\addlegendentry{$\tau$}
				\addplot [plotColor4] table [x=rhs_skew,y=difference] {plotdata/syn_rhsskew_mu_prime.dat};
				\addlegendentry{$\muplus$}
			\end{axis}
		\end{tikzpicture}
	\end{minipage}
	\vspace{-1ex}
	\caption{\label{fig:synexperiments} Increasing error rates, \LHSuniqueness levels or \RHSskew levels impacts most measures' ability to seperate between $\benchmarkfd$ and $\benchmarknonfd$. The plots show the separation on  \synnoise (top), \synunique (middle) and \synskew (bottom) benchmarks.
	}
\end{figure*}

\subsection{Results}
\label{sec:syn:res}

We describe the results on the basis of Figure~\ref{fig:synexperiments}, which, for each benchmark $\benchmark$ and measure $f$ plots $\delta(f,\benchmark)$, the difference between average measure values on $\benchmarkfd$ and average measure values on $\benchmarknonfd$,
\[ \delta(f,\benchmark) := \text{avg}_{R \in \benchmarkfd} f(\X {\scriptsize\to} \Y, R) \ - \
	\text{avg}_{R \in \benchmarknonfd} f(\X {\scriptsize\to} \Y, R). \] We also call
$\delta(f,\benchmark)$ the \emph{separation} of $f$ on
$\benchmark$. When it is small, $f$ cannot distinguish between cases
where $\X$ and $\Y$ are sampled independently at random and where
data is generated according to our generation process for $\benchmarkfd$.
Values for $g_1$ and $g'_1$ are grouped in Figure~\ref{fig:synexperiments} as
their separations are identical. Our conclusions are summarized in
Table~\ref{table:comparison}.

\paragraph*{Error rate.}
The top row of Figure~\ref{fig:synexperiments} plots the separation on \synnoise as a function of error rate
$\eta$. For $g_1$ and $g'_1$, the separation is zero, while for $\SFI$ it is nearly zero. This means that these measures have limited distinguishing power and are not well-suited as a yardstick
for assessing the amount of errors w.r.t. an \fd.
For all other measures, there is a
clear separation, albeit less pronounced for $\FI$ and $\RFIplus$.
As expected, when the error level increases the separation decreases, save for $g_1$, $g'_1$, and $\SFI$ where it remains constant. The measures hence become less certain of having found an \afd as the error rate increases.
While $\FI$ and $\RFIplus$ also decrease as $\eta$ increases, this decrease is less steep than for the other measures.

\paragraph*{\LHSuniqueness.}
The middle row of Figure~\ref{fig:synexperiments} shows the separation on \synunique as a function of \LHSuniqueness.
For $g_1$, $g'_1$ and $\SFI$, we see the same behavior as on \synnoise: their separation is (nearly) zero; they hence lack distinguishing power.
Because it would be misleading to label $g_1$, $g'_1$ and $\SFI$ as being insensitive to \LHSuniqueness, we indicate in Table~\ref{table:comparison} that \LHSuniqueness is inapplicable with the symbol \nap.
The distinguishing power of $g'_3$, $\RFInorm$, and $\muplus$
is not affected by \LHSuniqueness as the separation remains large for all values of \LHSuniqueness. We do observe that the separation decreases slightly for very large \LHSuniqueness levels, indicating that these measures become less confident to have found an \fd $\X\to\Y$ in a relation $R$ when $\pi_{\X}(R)$ contains fewer duplicates.

For 
other measures the separation drops as \LHSuniqueness increases, tending to zero
at maximum \LHSuniqueness levels. 
%
These measures are hence 
biased w.r.t.\ \LHSuniqueness
.
As discussed before,
it will therefore prove problematic to discover non-linear \afds by means of these measures.

\paragraph*{\RHSskew.}
The bottom row of Figure~\ref{fig:synexperiments} shows the separation on $\synskew$ as a function of
\RHSskew.
The measures $g_1$, $g'_1$, and $\SFI$ exhibit the same behavior as before, with (nearly) zero separation. 
We indicate the corresponding cells in Table~\ref{table:comparison} with the symbol \nap.
The distinguishing power of all \simple measures, as well as $\SHANNONGONE$, and $\pdep$, drops when \RHSskew increases.
These measures are thus biased w.r.t.\ \RHSskew. 
By contrast, $\FI$, $\RFIplus$, $\RFInorm$, $\tau$, and $\muplus$ correct for this behavior and are insensitive to 
\RHSskew.

\paragraph*{Conclusion.}
The measures $g_1$, $g'_1$ and $\SFI$ are the least suitable \afd measures since, by contrast to the other measures, they do not clearly separate relations in $\benchmarkfd$ from relations in  $\benchmarknonfd$ for any of the three considered sensitivity parameters. 
The measures $g'_3$, \RFInorm, and $\muplus$ have a built-in mechanism that corrects for \LHSuniqueness, which is a most desirable property when discovering non-linear \fds. The  \shannon measures (save $\SHANNONGONE$ and $\SFI$), $\tau$, and $\muplus$ correct for \RHSskew. The most desirable measures are therefore
$\RFInorm$ and $\muplus$ as they both are insensitive to \LHSuniqueness and \RHSskew, and are inversely proportional to the error level.







\section{Evaluation on Real-World Data}
\label{sec:eval-rwd}
\label{sec:rwd}
In this section, we compare the effectiveness of the described \afd measures on
real-world tables which exhibit data distributions as well as data errors that
occur in practice. Because of space limitations, we focus on our most important
findings and refer to
\inConfVersion{the extended version of this paper~\cite{artifacts}.}
\inFullVersion{complementary information and additional experiments in the Appendix.}

\subsection{Overview}
\paragraph*{\fds in relations with \textsf{NULLs}.} The relations that we consider in this section come from practical domains and often also contain \textsf{NULL} values. Because it is unclear whether two distinct occurrences of a \textsf{NULL} should be considered the same value, or distinct values,
there is no clear semantics of \fds in the presence of \textsf{NULL} values. We therefore ignore \textsf{NULL} values when checking \fd satisfaction and calculating measure scores. That is, if $R(\W)$ is a relation with \textsf{NULL}s and $\varphi = \X \to \Y$ an \fd, then we consider $\varphi$ to be satisfied if it is satisfied in the subrelation $R'$ of $R$ consisting of all tuples $\w \in R$ for which $\w(A) \not = \textsf{NULL}$ for all $A \in \X\Y$. Similarly, the score of measure $f$ on $(\varphi,R)$ is computed by computing $f(\varphi, R')$ instead.

\paragraph*{Real world data benchmark (\rwd).}
We created the \rwd benchmark as follows. We started by considering
all relations mentioned in \cite{birnickBlasiusEtAl2020}, which collects the
real-world relations most commonly used in the dependency discovery
literature. This base set was extended with the relation Adult used, e.g., in
\cite{ChiangMiller2008,DBLP:journals/pvldb/0001N18,DBLP:conf/icde/WeiL19}.
Since design schemas for these relations are unavailable, we manually created
them as follows. First, in order to ensure semantically sound design schemas, we
restricted our attention to the subset of relations that have a generally
interpretable domain. Further, to keep the manual annotation endeavor
manageable, we restricted ourselves to relations that have no more than 50
columns and to linear FDs. This results in \BenchNumTables relations, listed in
Table~\ref{tab:rwd}. For each relation $R$, we enumerate all candidate linear
FDs (i.e., pairs
$(X, Y): \exists \w \in R, \w(X) \not = \textsf{NULL} \land \w(Y) \not =
  \textsf{NULL}$).
\rev{We manually validate whether a candidate FD is semantically
  meaningful, and is hence part of the design schema or not, if its
 $g_3$-score is $\geq 0.5$. While we risk 
    missing semantically meaningful \fds this way, note that
  a $g_3$-score $< 0.5$ means that we need to remove more
  than $50\%$ of the tuples to obtain a subrelation that satisfies the candidate
  \fd, making it an improbable candidate for the design schema.
  We identified 1170 candidate \fds to inspect. Two individuals manually inspected
  each candidate.  Non-matching decisions (i.e. one saw a candidate as valid
  whereas the other did not) were discussed until a consensus was reached.  We
  observe that each validated semantically meaningful candidate \fd has a
  $g_3$-score $\geq 0.99$, strengthening our impressions that it is unlikely that we have missed semantically meaningful FDs.}


In this manner, we derive for
each benchmark relation
$R$ its design schema $\schema(R)$. This set of \fds is
partitioned into two sets: $\perfectFD(R) := \{ \varphi \in \schema(R) \mid R \models \varphi \}$ the set of \emph{perfect} (design) \fds and $\designAFD(R) := \{ \varphi \in \schema(R) \mid R \not \models \varphi \}$ the set of \emph{approximate} (design) \fds. In particular, $\designAFD{R}$ forms the ground truth of \fds to discover during \afd discovery on $R$.

Table~\ref{tab:rwd} shows statistics of the obtained benchmark. In total, we obtain \BenchTotalDesignFDs design \fds across all relations in \rwd, of which \BenchTotalPerfectFDs are perfect design \fds and \BenchTotalAFDs are approximate design \fds.
To appreciate the difficulty of the \afd discovery task, it is worth pointing
out that the search space during \afd discovery consists of \TotalNumberOfInvalidFDs
candidate \fds across all relations in \rwd. Out of these, only a
small number (\BenchTotalAFDs) are \afds, which
emphasizes the intrinsic difficulty of \afd discovery and illustrates the need for good measures to distinguish \afds from the rest of the search space.

\begin{table}[t]
  \small
  \caption{\label{tab:rwd} Overview of relations in \rwd benchmark. 
  }
  \setlength\tabcolsep{4.0pt}
  \begin{tabular}[t]{@{}llllll@{}}
    \toprule
    Relation $R$ & \char"0023 rows & \char"0023 attrs & \char"0023 insp & \char"0023$\perfectFD(R)$ & \char"0023$\designAFD(R)$ \\\midrule
    \csvreader[
      head to column names,
    late after line=                                                                                                            \\,
    late after last line=                                                                                                       \\\bottomrule,
    ]
    {tables/dataset_descriptions.csv}{}%
    {\name       & \rows           & \attributes      & \candFDs        & \FDs                      & \AFDs}                    %
  \end{tabular}
\end{table}


\paragraph*{Methodology.} We are interested in comparing the suitability of \afds measures for the purpose
of \afd discovery.

Therefore, we compare \afd measures as follows. Remember from
Section~\ref{sec:measures} that every AFD measure $f$ and every threshold
$\epsilon \in [0,1]$ naturally induces a discovery algorithm $\alg_f^\epsilon$
which, on input relation $R(\W)$, returns all \fds $\varphi$ over $\W$ with $R \not \models \varphi$ and $f(\varphi,R) \in [\epsilon,1[$. In this respect, every
measure hence defines a class $\class_f$ of discovery algorithms, namely
$\class_f = \{ \alg_f^\epsilon \mid 0 \leq \epsilon < 1\}$. Given a subset $\benchmark$ of benchmark relations, we compare the
effectiveness of measures on $\benchmark$ by computing the area under the precision-recall
curve
(AUC-PR) of $\class_f$ for each measure $f$, where the PR-curve is the set
$\{ (\recall(\alg, \benchmark), \precision(\alg,\benchmark)) \mid \alg \in \class_f\}$. Here, $\recall(\alg, \benchmark)$ and $\precision(\alg,\benchmark)$ denote recall and precision of $A$ on $\benchmark$, respectively.
It is known that PR curves are well-suited to visualize the tradeoff between
precision and recall at various values of $\epsilon$ when the prediction classes
are very imbalanced, which is the case here. So, the measure with the highest
AUC-PR score is the measure providing the best such tradeoff.

Furthermore, to obtain a more fine-grained view of measure performance on the
level of each Relation $R$ individually we also report the \emph{rank at max recall}:
$	\rankAtMaxRecall(f,R) := \size{A^\epsilon_f}$, with $\epsilon = min(f(\designAFD(R))$.
Intuitively, $\rankAtMaxRecall(f,R)$ indicates how many candidate \afds need to be examined
when processing them in decreasing order of $f$-score to find all of $\designAFD(R)$.

Since $\SFI$ is parameterized by a parameter $\alpha$ it is not one
measure but a collection of measures. We performed experiments with the same
values of $\alpha$ as in the original \SFI
paper~\cite{DBLP:conf/kdd/PennerathMV20}, namely $\alpha \in \{0.5, 1,2\}$. Because the performance of $\alpha = 0.5$ consistently dominates the
performance of $\alpha \in \{1,2\}$, we only report the performance of $\SFI$
for $\alpha = 0.5$ in what follows.

We implemented all measures in a Python library. This library, together with the benchmark datasets is publicly available~\cite{artifacts}.
Given a candidate \fd $\X \to \Y$, computing the score for most measures is straightforward, requiring only the evaluation of the given formula. For \RFI, \RFInorm and \SFI, which are the most complex to compute, we use the currently best known algorithms, for \RFI and \RFInorm the one of \cite{DBLP:conf/kdd/MandrosBV17}, for \SFI the one of \cite{DBLP:conf/kdd/PennerathMV20}.


\begin{figure*}[!bht]
	\centering
	\begin{subfigure}[t]{0.45\textwidth}
		\footnotesize
		\pgfplotstableset{
			heatmap/.style={
					col sep=comma,
					column type={p{0.24cm}},
					postproc cell content/.code={\pgfkeysalso{@cell content=\!\!\cellcolor{white!##1!red}##1}},
				}
		}
		\pgfplotstabletypeset[
			heatmap,
			columns/0/.style={
					column name={},
					string type,
					column type={p{0.4cm}},
					postproc cell content/.code={},
				},
			columns/rwdminus/.style={
					column name={\!\!\!\!\!\rwdminus},
				},
			columns/best/.style={
					column name={\!\!\!best},
					column type={p{0.2cm}},
				}
		]
		{tables/table_aucs.csv}
		\caption{\label{tab:auc}}
	\end{subfigure}
	\hfill
	\begin{subfigure}[t]{0.25\textwidth}
		\footnotesize
		\pgfplotstabletypeset[
			col sep=comma,
			column type={p{0.2cm}},
			columns/0/.style={
					column name={},
					string type,
					column type={p{0.6cm}},
				},
			every row 0 column 0/.style={
					postproc cell content/.code={\pgfkeysalso{@cell content=\!$\designAFD(R)$}},
				},
			every row 1 column 1/.style={
					postproc cell content/.code={\pgfkeysalso{@cell content=\cellcolor{red!63!white}##1}},
				},
			every row 1 column 2/.style={
					postproc cell content/.code={\pgfkeysalso{@cell content=\cellcolor{red!8!white}##1}},
				},
			every row 1 column 3/.style={
					postproc cell content/.code={\pgfkeysalso{@cell content=\cellcolor{red!0!white}##1}},
				},
			every row 1 column 4/.style={
					postproc cell content/.code={\pgfkeysalso{@cell content=\cellcolor{red!0!white}##1}},
				},
			every row 1 column 5/.style={
					postproc cell content/.code={\pgfkeysalso{@cell content=\cellcolor{red!0!white}##1}},
				},
			every row 1 column 6/.style={
					postproc cell content/.code={\pgfkeysalso{@cell content=\cellcolor{red!0!white}##1}},
				},
			every row 1 column 7/.style={
					postproc cell content/.code={\pgfkeysalso{@cell content=\cellcolor{red!0!white}##1}},
				},
			every row 2 column 1/.style={
					postproc cell content/.code={\pgfkeysalso{@cell content=\cellcolor{red!100!white}##1}},
				},
			every row 2 column 2/.style={
					postproc cell content/.code={\pgfkeysalso{@cell content=\cellcolor{red!8!white}##1}},
				},
			every row 2 column 3/.style={
					postproc cell content/.code={\pgfkeysalso{@cell content=\cellcolor{red!0!white}##1}},
				},
			every row 2 column 4/.style={
					postproc cell content/.code={\pgfkeysalso{@cell content=\cellcolor{red!0!white}##1}},
				},
			every row 2 column 5/.style={
					postproc cell content/.code={\pgfkeysalso{@cell content=\cellcolor{red!0!white}##1}},
				},
			every row 2 column 6/.style={
					postproc cell content/.code={\pgfkeysalso{@cell content=\cellcolor{red!0!white}##1}},
				},
			every row 2 column 7/.style={
					postproc cell content/.code={\pgfkeysalso{@cell content=\cellcolor{red!0!white}##1}},
				},
			every row 3 column 1/.style={
					postproc cell content/.code={\pgfkeysalso{@cell content=\cellcolor{red!0!white}##1}},
				},
			every row 3 column 2/.style={
					postproc cell content/.code={\pgfkeysalso{@cell content=\cellcolor{red!2!white}##1}},
				},
			every row 3 column 3/.style={
					postproc cell content/.code={\pgfkeysalso{@cell content=\cellcolor{red!0!white}##1}},
				},
			every row 3 column 4/.style={
					postproc cell content/.code={\pgfkeysalso{@cell content=\cellcolor{red!5!white}##1}},
				},
			every row 3 column 5/.style={
					postproc cell content/.code={\pgfkeysalso{@cell content=\cellcolor{red!0!white}##1}},
				},
			every row 3 column 6/.style={
					postproc cell content/.code={\pgfkeysalso{@cell content=\cellcolor{red!0!white}##1}},
				},
			every row 3 column 7/.style={
					postproc cell content/.code={\pgfkeysalso{@cell content=\cellcolor{red!0!white}##1}},
				},
			every row 4 column 1/.style={
					postproc cell content/.code={\pgfkeysalso{@cell content=\cellcolor{red!0!white}##1}},
				},
			every row 4 column 2/.style={
					postproc cell content/.code={\pgfkeysalso{@cell content=\cellcolor{red!0!white}##1}},
				},
			every row 4 column 3/.style={
					postproc cell content/.code={\pgfkeysalso{@cell content=\cellcolor{red!0!white}##1}},
				},
			every row 4 column 4/.style={
					postproc cell content/.code={\pgfkeysalso{@cell content=\cellcolor{red!5!white}##1}},
				},
			every row 4 column 5/.style={
					postproc cell content/.code={\pgfkeysalso{@cell content=\cellcolor{red!0!white}##1}},
				},
			every row 4 column 6/.style={
					postproc cell content/.code={\pgfkeysalso{@cell content=\cellcolor{red!0!white}##1}},
				},
			every row 4 column 7/.style={
					postproc cell content/.code={\pgfkeysalso{@cell content=\cellcolor{red!0!white}##1}},
				},
			every row 5 column 1/.style={
					postproc cell content/.code={\pgfkeysalso{@cell content=\cellcolor{red!0!white}##1}},
				},
			every row 5 column 2/.style={
					postproc cell content/.code={\pgfkeysalso{@cell content=\cellcolor{red!13!white}##1}},
				},
			every row 5 column 3/.style={
					postproc cell content/.code={\pgfkeysalso{@cell content=\cellcolor{red!0!white}##1}},
				},
			every row 5 column 4/.style={
					postproc cell content/.code={\pgfkeysalso{@cell content=\cellcolor{red!21!white}##1}},
				},
			every row 5 column 5/.style={
					postproc cell content/.code={\pgfkeysalso{@cell content=\cellcolor{red!0!white}##1}},
				},
			every row 5 column 6/.style={
					postproc cell content/.code={\pgfkeysalso{@cell content=\cellcolor{red!57!white}##1}},
				},
			every row 5 column 7/.style={
					postproc cell content/.code={\pgfkeysalso{@cell content=\cellcolor{red!0!white}##1}},
				},
			every row 6 column 1/.style={
					postproc cell content/.code={\pgfkeysalso{@cell content=\cellcolor{red!0!white}##1}},
				},
			every row 6 column 2/.style={
					postproc cell content/.code={\pgfkeysalso{@cell content=\cellcolor{red!6!white}##1}},
				},
			every row 6 column 3/.style={
					postproc cell content/.code={\pgfkeysalso{@cell content=\cellcolor{red!0!white}##1}},
				},
			every row 6 column 4/.style={
					postproc cell content/.code={\pgfkeysalso{@cell content=\cellcolor{red!0!white}##1}},
				},
			every row 6 column 5/.style={
					postproc cell content/.code={\pgfkeysalso{@cell content=\cellcolor{red!0!white}##1}},
				},
			every row 6 column 6/.style={
					postproc cell content/.code={\pgfkeysalso{@cell content=\cellcolor{red!0!white}##1}},
				},
			every row 6 column 7/.style={
					postproc cell content/.code={\pgfkeysalso{@cell content=\cellcolor{red!0!white}##1}},
				},
			every row 7 column 1/.style={
					postproc cell content/.code={\pgfkeysalso{@cell content=\cellcolor{red!0!white}##1}},
				},
			every row 7 column 2/.style={
					postproc cell content/.code={\pgfkeysalso{@cell content=\cellcolor{red!2!white}##1}},
				},
			every row 7 column 3/.style={
					postproc cell content/.code={\pgfkeysalso{@cell content=\cellcolor{red!42!white}##1}},
				},
			every row 7 column 4/.style={
					postproc cell content/.code={\pgfkeysalso{@cell content=\cellcolor{red!47!white}##1}},
				},
			every row 7 column 5/.style={
					postproc cell content/.code={\pgfkeysalso{@cell content=\cellcolor{red!0!white}##1}},
				},
			every row 7 column 6/.style={
					postproc cell content/.code={\pgfkeysalso{@cell content=\cellcolor{red!14!white}##1}},
				},
			every row 7 column 7/.style={
					postproc cell content/.code={\pgfkeysalso{@cell content=\cellcolor{red!0!white}##1}},
				},
			every row 8 column 1/.style={
					postproc cell content/.code={\pgfkeysalso{@cell content=\cellcolor{red!0!white}##1}},
				},
			every row 8 column 2/.style={
					postproc cell content/.code={\pgfkeysalso{@cell content=\cellcolor{red!0!white}##1}},
				},
			every row 8 column 3/.style={
					postproc cell content/.code={\pgfkeysalso{@cell content=\cellcolor{red!0!white}##1}},
				},
			every row 8 column 4/.style={
					postproc cell content/.code={\pgfkeysalso{@cell content=\cellcolor{red!0!white}##1}},
				},
			every row 8 column 5/.style={
					postproc cell content/.code={\pgfkeysalso{@cell content=\cellcolor{red!0!white}##1}},
				},
			every row 8 column 6/.style={
					postproc cell content/.code={\pgfkeysalso{@cell content=\cellcolor{red!0!white}##1}},
				},
			every row 8 column 7/.style={
					postproc cell content/.code={\pgfkeysalso{@cell content=\cellcolor{red!0!white}##1}},
				},
			every row 9 column 1/.style={
					postproc cell content/.code={\pgfkeysalso{@cell content=\cellcolor{red!0!white}##1}},
				},
			every row 9 column 2/.style={
					postproc cell content/.code={\pgfkeysalso{@cell content=\cellcolor{red!100!white}##1}},
				},
			every row 9 column 3/.style={
					postproc cell content/.code={\pgfkeysalso{@cell content=\cellcolor{red!100!white}##1}},
				},
			every row 9 column 4/.style={
					postproc cell content/.code={\pgfkeysalso{@cell content=\cellcolor{red!100!white}##1}},
				},
			every row 9 column 5/.style={
					postproc cell content/.code={\pgfkeysalso{@cell content=\cellcolor{red!100!white}##1}},
				},
			every row 9 column 6/.style={
					postproc cell content/.code={\pgfkeysalso{@cell content=\cellcolor{red!100!white}##1}},
				},
			every row 9 column 7/.style={
					postproc cell content/.code={\pgfkeysalso{@cell content=\cellcolor{red!0!white}##1}},
				},
			every row 10 column 1/.style={
					postproc cell content/.code={\pgfkeysalso{@cell content=\cellcolor{red!10!white}##1}},
				},
			every row 10 column 2/.style={
					postproc cell content/.code={\pgfkeysalso{@cell content=\cellcolor{red!24!white}##1}},
				},
			every row 10 column 3/.style={
					postproc cell content/.code={\pgfkeysalso{@cell content=\cellcolor{red!0!white}##1}},
				},
			every row 10 column 4/.style={
					postproc cell content/.code={\pgfkeysalso{@cell content=\cellcolor{red!26!white}##1}},
				},
			every row 10 column 5/.style={
					postproc cell content/.code={\pgfkeysalso{@cell content=\cellcolor{red!0!white}##1}},
				},
			every row 10 column 6/.style={
					postproc cell content/.code={\pgfkeysalso{@cell content=\cellcolor{red!0!white}##1}},
				},
			every row 10 column 7/.style={
					postproc cell content/.code={\pgfkeysalso{@cell content=\cellcolor{red!0!white}##1}},
				},
			every row 11 column 1/.style={
					postproc cell content/.code={\pgfkeysalso{@cell content=\cellcolor{red!20!white}##1}},
				},
			every row 11 column 2/.style={
					postproc cell content/.code={\pgfkeysalso{@cell content=\cellcolor{red!24!white}##1}},
				},
			every row 11 column 3/.style={
					postproc cell content/.code={\pgfkeysalso{@cell content=\cellcolor{red!0!white}##1}},
				},
			every row 11 column 4/.style={
					postproc cell content/.code={\pgfkeysalso{@cell content=\cellcolor{red!26!white}##1}},
				},
			every row 11 column 5/.style={
					postproc cell content/.code={\pgfkeysalso{@cell content=\cellcolor{red!0!white}##1}},
				},
			every row 11 column 6/.style={
					postproc cell content/.code={\pgfkeysalso{@cell content=\cellcolor{red!0!white}##1}},
				},
			every row 11 column 7/.style={
					postproc cell content/.code={\pgfkeysalso{@cell content=\cellcolor{red!0!white}##1}},
				},
			every row 12 column 1/.style={
					postproc cell content/.code={\pgfkeysalso{@cell content=\cellcolor{red!0!white}##1}},
				},
			every row 12 column 2/.style={
					postproc cell content/.code={\pgfkeysalso{@cell content=\cellcolor{red!5!white}##1}},
				},
			every row 12 column 3/.style={
					postproc cell content/.code={\pgfkeysalso{@cell content=\cellcolor{red!0!white}##1}},
				},
			every row 12 column 4/.style={
					postproc cell content/.code={\pgfkeysalso{@cell content=\cellcolor{red!5!white}##1}},
				},
			every row 12 column 5/.style={
					postproc cell content/.code={\pgfkeysalso{@cell content=\cellcolor{red!0!white}##1}},
				},
			every row 12 column 6/.style={
					postproc cell content/.code={\pgfkeysalso{@cell content=\cellcolor{red!0!white}##1}},
				},
			every row 12 column 7/.style={
					postproc cell content/.code={\pgfkeysalso{@cell content=\cellcolor{red!0!white}##1}},
				},
			every row 13 column 1/.style={
					postproc cell content/.code={\pgfkeysalso{@cell content=\cellcolor{red!0!white}##1}},
				},
			every row 13 column 2/.style={
					postproc cell content/.code={\pgfkeysalso{@cell content=\cellcolor{red!5!white}##1}},
				},
			every row 13 column 3/.style={
					postproc cell content/.code={\pgfkeysalso{@cell content=\cellcolor{red!0!white}##1}},
				},
			every row 13 column 4/.style={
					postproc cell content/.code={\pgfkeysalso{@cell content=\cellcolor{red!0!white}##1}},
				},
			every row 13 column 5/.style={
					postproc cell content/.code={\pgfkeysalso{@cell content=\cellcolor{red!0!white}##1}},
				},
			every row 13 column 6/.style={
					postproc cell content/.code={\pgfkeysalso{@cell content=\cellcolor{red!0!white}##1}},
				},
			every row 13 column 7/.style={
					postproc cell content/.code={\pgfkeysalso{@cell content=\cellcolor{red!0!white}##1}},
				},
			every row 14 column 1/.style={
					postproc cell content/.code={\pgfkeysalso{@cell content=\cellcolor{red!0!white}##1}},
				},
			every row 14 column 2/.style={
					postproc cell content/.code={\pgfkeysalso{@cell content=\cellcolor{red!0!white}##1}},
				},
			every row 14 column 3/.style={
					postproc cell content/.code={\pgfkeysalso{@cell content=\cellcolor{red!0!white}##1}},
				},
			every row 14 column 4/.style={
					postproc cell content/.code={\pgfkeysalso{@cell content=\cellcolor{red!0!white}##1}},
				},
			every row 14 column 5/.style={
					postproc cell content/.code={\pgfkeysalso{@cell content=\cellcolor{red!0!white}##1}},
				},
			every row 14 column 6/.style={
					postproc cell content/.code={\pgfkeysalso{@cell content=\cellcolor{red!0!white}##1}},
				},
			every row 14 column 7/.style={
					postproc cell content/.code={\pgfkeysalso{@cell content=\cellcolor{red!0!white}##1}},
				},
		]
		{tables/recall_results_10_rwd_pessimistic_t.csv}
		\caption{\label{tab:recall_at_k:rwd}}
	\end{subfigure}
	\hfill
	\begin{subfigure}[t]{0.22\textwidth}
		\footnotesize
		\pgfplotstabletypeset[
			col sep=comma,
			numeric type,
			fixed,
			column type={p{0.2cm}},
			columns/measure/.style={
					column name={},
					string type,
					column type={p{0.4cm}},
				},
			columns/lhsuniquenessdblp/.style={
					column name={\dblpDataNumber},
				},
			columns/lhsuniquenessgathAgent/.style={
					column name={\gathAgentDataNumber},
				},
			columns/rhsskewnessdblp/.style={
					column name={\dblpDataNumber},
				},
			columns/rhsskewnessgathAgent/.style={
					column name={\gathAgentDataNumber},
				},
			every head row/.style={
					before row={
							& \multicolumn{2}{c}{LHS-uniq.} & \multicolumn{2}{c}{\RHSskew} \\
						},
				},
			every last row/.style={
					after row={
							\!$\designAFD(R)$ & \multicolumn{2}{c}{0.07} & \multicolumn{2}{c}{1.43} \\
							rest & \multicolumn{2}{c}{0.24} & \multicolumn{2}{c}{1.59}
						},
				},
		]
		{tables/top_ranked_non_fds.csv}
		\caption{\label{tab:top_ranked_non_fds_dblp:rwd}}
	\end{subfigure}
	\vspace{-1.5ex}
	\caption{Measure performance on \rwdminus. (a) Heatmap of PR-AUC scores (red indicates low value). (b) Rank at max recall (heatmap per column where red indicates high rank)\rev{, indicating how many FD candidates need to be examined to retrieve $\designAFD(R)$}. (c) \LHSuniqueness and \RHSskew of mislabeled FD candidates.  }
	\label{fig:rwd-results}
\end{figure*}

\paragraph*{Efficient computation and \rwdminus.} We observe massive
differences in \rev{the time  required to compute a score for each \afd
candidate}. In particular, $\SFI$, $\RFIplus$ and $\RFInorm$ require an
unreasonable amount of computation time. In 24 hours using a single CPU core,
\SFI was able to calculate values for 1430 while \RFIplus and \RFInorm
processed 250 of all \TotalNumberOfInvalidFDs \afd candidates. All other
measures finished processing all candidates in roughly two minutes. For
practical applications, the runtime of \SFI, \RFIplus and \RFInorm is hence
problematic. \inFullVersion{An overview of all runtimes is included in
Appendix~\ref{sec:rwd-extra} in Table~\ref{tab:runtimes:rwd}.}

In fact, the computational complexity of \RFIplus and
\RFInorm did not allow us to compute values for \RFIplus and \RFInorm on all
candidate \fds in \rwd in a reasonable amount of time. In approximately 168 hours we obtained values of
\RFIplus and \RFInorm for 1229 candidate \afds, including
all design \afds, out of a total of \TotalNumberOfInvalidFDs. We denote this set of 1229 candidates
\fds by \rwdminus in what follows. To ensure fair comparison among all measures,
we report all comparison metrics (AUC, $\rankAtMaxRecall$, \dots) relative to
\rwdminus.
\inFullVersion{Although the results are similar, for the sake of completeness, we include in Appendix~\ref{sec:rwd-extra}, the performance metrics relative to the whole \rwd excluding the measures \RFIplus and \RFInorm.}
\inConfVersion{The results on the whole of \rwd excluding the measures \RFIplus and \RFInorm are similar~\cite{artifacts}.}

\subsection{Results}
\paragraph*{AUC.} \
Figure~\ref{tab:auc} lists the AUC scores for \rwdminus at the benchmark and
relation level, where the AUC value is expressed as a percentage. The last
column shows the fraction of relations on which a measure reached maximal AUC
score, allowing us to judge how consistent a measure is.
At the benchmark level, we observe that there are effective measures in each
measure class. $\RFInorm$ (\shannon, AUC = 0.971) is the most effective
measure, closely followed by $\muplus$ (\logical, AUC = 0.946) and somewhat
further followed by $g'_3$ (\simple, AUC = 0.901). All other measures have
significantly lower AUC values. When the correct number of \afds is
not known beforehand and a specific threshold needs to be set uniformly for all
relations, $\RFInorm$, $\muplus$ and $g'_3$ hence provide the best tradeoff
between precision and recall. We find it striking to note that the unnormalized
variants of these measures (i.e., $\FI$, $\pdep$, and $g_3$, respectively)
perform significantly worse, which highlights the importance of normalisation
when designing measures. For $\RFInorm$ and $\muplus$ in particular, we note
that the normalisation w.r.t the expected value of $\FI$ resp. $\pdep$ under
random permutations performs significantly better than computing the absolute
difference w.r.t this absolute value ($\RFIplus$), respectively normalising
w.r.t. $\pdep(\Y)$ (for $\tau$).

The AUC scores at the relation level give a more detailed picture. In
particular, the last column in Figure~\ref{tab:auc} shows that
$\RFIplus$ yields the highest AUC score on each relation, while $\muplus$ does
so for $90\%$ of the relations, and $g'_3$ for ``only'' $80\%$ of the
relations. $\muplus$ performs worse than $\RFInorm$ only on
relation \gathAreaDataNumber
, where its AUC score still outperforms
the other measures. $g'_3$ also performs worse than $\RFInorm$ on
\gathAreaDataNumber\ and additionally performs worse than both $\RFInorm$ and
$\muplus$ on \gathAgentDataNumber.

Surprisingly, $\FI$, which has a low AUC score $= 0.415$ at the benchmark level
has a highest AUC score on $90\%$ of the relations, like $\muplus$. It does
particularly poor on relation \dblpDataNumber\ (dblp, AUC=$5.4\%$), which
explains its AUC score on the benchmark level. Similarly to $g'_3$, $\tau$ has
a highest AUC on $80\%$ of the relations, but it also performs very poor on
$R_3$, explaning its lower AUC score at the benchmark level. We note that our
observation from Section~\ref{sec:eval-syn} on synthetic data, namely that
$g_1, g'_1$ and $\SFI$ have poor distinguishing power, is confirmed on
$\rwdminus$: these measures perform the poorest, attaining maximal AUC score in
only $60\%$ resp. $50\%$ of the relations. Measures $\SHANNONGONE$ and
$\RFIplus$ perform equally poor.

In summary, $\RFInorm$, $\muplus$ and $g'_3$ provide the best tradeoff between
precision and recall with $\RFInorm$ performing better than $\muplus$
(marginally, on one relation), and $\muplus$ performing better than $g'_3$
(again marginally, on one relation).

\begin{table*}[bht]
	\caption{\label{table:comparison} Properties of considered \afd measures. The symbol {\cmark} stands for \emph{applies}, \rev{the symbol {\xmark}} denotes \emph{does not apply}. The symbol \nap\ stands for \emph{not applicable} (cf.\ Section~\ref{sec:eval-syn}).  Further, $f'$ refers to the normalization of the measure $f$ as discussed in Section~\ref{sec:measures} and $f^+$ to the adaptation of $f$ that maps all negative values to zero.}
  \footnotesize
  \setlength\tabcolsep{0.8pt}
  \centering
  \begin{tabular}[t]{@{}lcccccccccccccc@{}}
    \toprule
             & $\rho$  & $g_2$ & $g_3$   & $g'_3$       & $\SHANNONGONE$ & $\FI$  & $\RFIplus$ & $\RFInorm$  & $\SFI$  & $g_1$ & $g'_1$     & $\pdep$ & $\tau$  & $\muplus$ \\\midrule
    \csvreader[
      head to column names,
      /csv/separator=semicolon,
    late after line=                                                                                                                                                        \\,
    late after last line=                                                                                                                                                   \\\bottomrule,
    ]
    {tables/comp_measures.csv}{}%
    {\coltwo & \colrho & \gtwo & \gthree & \gthreeprime & \shgone        & \colFI & \colRFI    & \colRFInorm & \colSFI & \gone & \goneprime & \colpdp & \coltau & \colmu}   %
  \end{tabular}
\end{table*}


\paragraph*{Rank at max recall} is shown in Figure~\ref{tab:recall_at_k:rwd}. The
first row indicates the total number of design \afds to discover (the smallest
attainable $\rankAtMaxRecall$ value).  We observe that the best measures,
$g'_3$, $\RFInorm$ and $\muplus$ have optimal $\rankAtMaxRecall$,
save on relation \gathAreaDataNumber\xspace where they differ by $1$ from the
optimum and still have minimal $\rankAtMaxRecall$ among all measures. In
addition, $g'_3$ has non-optimal $\rankAtMaxRecall$ on \gathAgentDataNumber,
where it is off by $1$. \rev{
	At maximum recall, these measures retain a precision of $100\%$,
    save on relation \gathAreaDataNumber\xspace ($66\%$), and $g'_3$ also on relation \gathAgentDataNumber\xspace ($66\%$).
    As such, these measures allow us only inspect a small
    number of highly ranked \AFDs to recover the true design \fds that were obscured by errors.}
By contrast, all the other measures have relations where the $\rankAtMaxRecall$ is an order of magnitude larger than the optimum, or more, yielding low precision at maximum recall.

\paragraph*{\LHSuniqueness and \RHSskew.}
From Figures~\ref{tab:auc} and~\ref{tab:recall_at_k:rwd} we observe that there
are four kinds of relations in \rwd: ``trivial'' relations for which every
measure attains optimal AUC and $\rankAtMaxRecall$ (relations \adultDataNumber,
\taxDataNumber,\identTaxonDataNumber,\identDataNumber), ``easy'' relations for
which nearly all measures do so (\claimsDataNumber\xspace and \gathDataNumber),
``challenging'' relations where only a minority of measures reach optimal
scores (\dblpDataNumber\xspace and \gathAgentDataNumber), and ``out-of-reach''
relations where no measure attains the optimum (\gathAreaDataNumber).

Next, we investigate what properties of the input data makes a relation
challenging by analyzing `mislabeled' candidate \fds in \dblpDataNumber\xspace
and \gathAgentDataNumber. We refer to a candidate as \emph{mislabeled}
analogous to our definition of $\rankAtMaxRecall$: from the candidates counted
for $\rankAtMaxRecall(f,R)$ we exclude all $\afd(R)$ to obtain our mislabeled
candidate \fds. In other words, the mislabeled candidate \fds are the highest
ranked mistakes made by a measure. Figure~\ref{tab:top_ranked_non_fds_dblp:rwd}
shows the average \LHSuniqueness and \RHSskew values of all mislabeled
candidate \fds per measure.
For comparison, the bottom two rows show the average \LHSuniqueness and
\RHSskew over the set of all design \afds and the set of all candidate \fds not
in the design set.

We start with analyzing \dblpDataNumber. From Figures~\ref{tab:auc} and
\ref{tab:recall_at_k:rwd} we recall that measures $g'_3$, $\RFInorm$, and $\muplus$
attain optimal AUC and $\rankAtMaxRecall$, while the AUC scores of all other measures
are extremely low and their $\rankAtMaxRecall$ is very high. In
Figure~\ref{tab:top_ranked_non_fds_dblp:rwd}, we observe that these other measures have
much higher \LHSuniqueness values for mislabeled candidate \fds than the
average for design \afds (0.07) or non-\fds (0.2). We postulate that this makes 
\dblpDataNumber\ challenging for these measures. 
$\rho$, $g_2$,
$g_3$, $\FI$, $g_1$, $g'_1$, $\pdep$ and $\tau$ all have mislabeled \LHSuniqueness $>
0.8$ and we recall from Section~\ref{sec:eval-syn} that the distinguishing power of
these measures is small at high \LHSuniqueness values. In addition, from
Figure~\ref{fig:synexperiments} (middle row) we observe that $\RFIplus$ (0.45) and
$\SFI$ (0.4) have small separation (and hence limited distinguishing power) already at
modest values of \LHSuniqueness. For $\SHANNONGONE$, the situation is less clear. We
note, however that its \LHSuniqueness value (0.59) is much larger than the average for 
design \afds in \dblpDataNumber (0.074).

On \gathAgentDataNumber, $\rho$, $g_2$, $\FI$, $\RFInorm$, $\tau$ and $\muplus$
attain optimal AUC and $\rankAtMaxRecall$. In
Figure~\ref{tab:top_ranked_non_fds_dblp:rwd} we observe  high  \RHSskew values
($>$3.7) for all other measures, compared to the values of design \fds and
non-fds (both 0.4). We postulate this is what makes \gathAgentDataNumber
challenging for $g_3$, $g'_3$, $\SHANNONGONE$, and $\pdep$: recall from 
Section~\ref{sec:eval-syn} that these measures are sensitive to \RHSskew. In
contrast, we know from Section~\ref{sec:eval-syn} that $\SFI$, $g_1$ and $g'_1$
consistently have (almost) zero separation, independent of \RHSskew. 
Similarly, $\RFIplus$ is insensitive to \RHSskew, but its separation is
limited, as shown in Figure~\ref{fig:synexperiments} (middle row).

We conclude that high \LHSuniqueness and \RHSskew values are observable in
practice and sensitivity to these structural properties may
explain the lower performance of some measures.  Insensitivity to \LHSuniqueness
and \RHSskew are  therefore desirable properties to aim for when
designing measures.



\section{Conclusions}
\label{sec:conclusion}

An overview of our results is given in Table~\ref{table:comparison}. We find that well-ranking measures exist within each class: $g'_3$ in \simple, \RFInorm in \shannon, and $\muplus$ in \logical.
We further observe that measures are only effective when correctly normalized---which is not always done in the literature.
Indeed,
$g_3$ is widely known and cited~\cite{DBLP:journals/tcs/KivinenM95,DBLP:journals/pvldb/Berti-EquilleHN18, DBLP:journals/cj/HuhtalaKPT99,DBLP:journals/is/GiannellaR04,DBLP:journals/jamds/KingL03,DBLP:conf/icde/Faure-Giovagnoli22,BERZAL2002207}
but to the best of our knowledge only \cite{DBLP:journals/is/GiannellaR04} considers the correctly normalized version $g'_3$. The sensitivity of $g'_3$ to \RHSskew remains a structural weakness,  hampering its effectiveness
as illustrated by its lower performance in practice on \gathAgentDataNumber (Fig~\ref{fig:rwd-results}).

\FI is the defining measure of \shannon and suffers from sensitivity to \LHSuniqueness as illustrated by its behavior on \dblpDataNumber. Its corrections
\SFI~\cite{DBLP:conf/kdd/PennerathMV20} and \RFI~\cite{DBLP:conf/kdd/MandrosBV17,DBLP:journals/kais/MandrosBV20} were aimed at removing bias from \FI, but our experiments on \syn reveal that their distinguishing power is greatly diminished; they especially overcompensate their correction of \FI w.r.t.\ \LHSuniqueness. This is reflected by their behavior on \rwd, where they are  among the worst performing measures. Our novel correction \RFInorm of \RFI is the best performing measure on \rwd and is insensitive to both \LHSuniqueness and \RHSskew. Its main drawback is the slow computation by current algorithms rendering it essentially useless in practice.

Our recommendation for most suitable \afd measure is therefore the little-known measure $\mu$. It has comparable performance to \RFInorm as well as equal structural sensitivity properties, but can be efficiently computed.

When the number of attributes in the LHS increases, LHS-unique\-ness is expected to increase naturally to $1$. Since $g'_3$, \RFInorm, and \muplus are the only measures among the considered ones that are insensitive to \LHSuniqueness, they are the most promising  measures for discovering non-linear \afds.

Another finding worth noting is that we illustrated on \rwd, perhaps contrary to popular belief, that by only inspecting a small number of top-ranked candidate \fds (according to $g'_3$, $\RFInorm$, $\mu$), one already succeeds in finding a large number of true design FDs that were obscured by errors. This means in particular, that a domain expert does not need to wade through hundreds of high-ranked candidate \fds but can restrict attention to a handful.

\inFullVersion{Inspired by Arocena et al.~\cite{DBLP:journals/pvldb/ArocenaGMMPS15},
we study in Appendix~\ref{sec:nrwd_data} the measures' sensitivity to
different kinds and different levels of errors by passing the relations in $\rwd$
through a controlled error channel. We again find that $\RFInorm$ and $\muplus$
perform best, closely followed by $g'_3$. We also find that the effectiveness
of all measures quickly deteriorates for increasing error levels making them
essentially useless for error levels above $5\%$.}


\bibliographystyle{ieeetr}
\interlinepenalty=10000 
\bibliography{biblio}

\inFullVersion{
  \appendices
  
  \section{Discussion on logical vs Shannon entropy}
  \label{sec:logical:vs:Shannon_entropy}
  The notion of logical entropy arises in mathematical philosophy~\cite{logical-entropy}, where it is observed to provide a theory of information based on logic. Importantly, formulas and equalities concerning logical entropy can be converted into corresponding formulas and equalities concerning Shannon entropy by the so-called dit-bit transform (see ~\cite{logical-entropy}). 
  Logical and Shannon entropy are hence highly similar, but measure different things: logical entropy measures the probability of two random tuples to be distinguished, while Shannon entropy measures average  uncertainty.
  
  \section{Proof of well-definedness of $\mu$}
  \label{sec:mu-welldef}
  In this section we prove the following result mentioned in Footnote~\pageref{footnote:muwelldef}.

\begin{lemma}
  If\/ $\expect_R[\pdep(\varphi,R)] = 1$ then $R \models \varphi$.
\end{lemma}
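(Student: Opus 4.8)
The plan is to prove the contrapositive: assuming $R \not\models \varphi$, I will show that $\expect_R[\pdep(\varphi, R)] < 1$, which immediately yields the lemma. The key tool is the closed form for the expected value established in the Piatetsky-Rotem-Shapiro theorem above, namely
\[ \expect_R[\pdep(\X \to \Y, R)] = \pdep(\Y, R) + \frac{K-1}{N-1}\bigl(1 - \pdep(\Y, R)\bigr), \]
with $K = \size{\dom_R(\X)}$ and $N = \size{R}$.

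First I would invoke the conventions set up at the start of Section~\ref{sec:measures}. Since $R \not\models \varphi$, the relation $R$ is non-empty with $N \geq 2$, and moreover $\size{\dom_R(\X)} \neq N$ and $\size{\dom_R(\Y)} > 1$. From $\size{\dom_R(\X)} \leq N$ I conclude $K < N$, so the coefficient $\lambda := \frac{K-1}{N-1}$ satisfies $0 \leq \lambda < 1$. The same conventions already record that $\logentrop_R(\Y) > 0$, and since $\pdep(\Y, R) = \sum_{\y} \prob_R(\y)^2 = 1 - \logentrop_R(\Y)$, this gives $p := \pdep(\Y, R) < 1$.

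With both strict inequalities in hand, the conclusion is a one-line computation: because $1 - p > 0$ and $\lambda < 1$, we have $\lambda(1-p) < (1-p)$, and therefore
\[ \expect_R[\pdep(\varphi, R)] = p + \lambda(1 - p) < p + (1 - p) = 1. \]
This establishes the contrapositive and hence the lemma.

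I do not anticipate a genuine obstacle here; the only point needing care is ensuring that \emph{both} strict inequalities $\lambda < 1$ and $p < 1$ hold, which is exactly why both conditions $\size{\dom_R(\X)} \neq N$ and $\size{\dom_R(\Y)} > 1$ from the conventions are needed. An equivalent and perhaps more transparent route works directly with the hypothesis: setting the formula equal to $1$ and rearranging factors it as $(1 - p)(\lambda - 1) = 0$, so either $p = 1$ (forcing $\size{\dom_R(\Y)} = 1$, whence $\X \to \Y$ holds trivially since all $\Y$-values agree) or $\lambda = 1$, i.e.\ $K = N$ (forcing $\X$ to be a key, whence $\X \to \Y$ holds vacuously). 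Either case yields $R \models \varphi$.
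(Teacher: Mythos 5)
Your proof is correct, but it takes a genuinely different route from the paper's. The paper argues directly from the definition of the expectation as a finite average over $(\X;\Y)$-permutations: since $\pdep$ takes values in $[0,1]$, an average of $1$ forces $\pdep(\varphi,R')=1$ for every permutation $R'$, in particular for $R$ itself; then, writing $\pdep(\X\to\Y,R) = 1 - \expect_{\x}[\logentrop_R(\Y\mid\x)]$ (the identity proved in Lemma~\ref{lem:pdep-equiv}), it concludes that $\logentrop_R(\Y\mid\x)=0$ for every $\x$, i.e.\ a single $\Y$-value per $\X$-value, hence $R\models\varphi$. You instead prove the contrapositive by invoking the closed-form expression for $\expect_R[\pdep(\X\to\Y,R)]$ from the Piatetsky-Rotem-Shapiro theorem and checking that both factors in $1-\expect_R[\pdep(\varphi,R)] = \bigl(1-\pdep(\Y,R)\bigr)\bigl(1-\tfrac{K-1}{N-1}\bigr)$ are strictly positive under the paper's conventions ($\size{\dom_R(\X)}\neq\size{R}$ and $\logentrop_R(\Y)>0$ when $R\not\models\varphi$). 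Your argument is shorter and purely algebraic, and your factorization even quantifies the gap, $1-\expect_R[\pdep(\varphi,R)] = \logentrop_R(\Y)\cdot\frac{N-K}{N-1}$, which is stronger than the lemma itself; the trade-off is that it rests entirely on the cited closed-form theorem (an external result the paper states but does not prove), whereas the paper's proof is self-contained, needing only the boundedness of $\pdep$, the finiteness of the permutation class, and its own Lemma~\ref{lem:pdep-equiv}. Both proofs correctly use the same conventions from Section~\ref{sec:measures}, and both are valid within the paper's framework.
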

\begin{proof}
  Assume that $\expect_R[\pdep(\varphi,R)] = 1$. Let $R_1, \dots, R_n$ be an enumeration of all permuations of $R$. Then
  \begin{align*}
    \expect_R[\pdep(\varphi,R)]  = \frac{\sum_{i=1}^N \pdep(\varphi,R_i)}{N}
  \end{align*}
  Hence, $\expect_R[\pdep(\varphi,R)] = 1$ iff $\sum_{i=1}^N \pdep(\varphi,R_i) = N$. Because the range of $\pdep$ is the interval $[0,1]$ this sum can equal $N$ if, and only if, $\pdep(\varphi,R_i) = 1$ for every $R_i$, including $R$ itself. Suppose, for the purpose of contradiction, that $R \not \models \varphi$. Then, the value of $\pdep$ is given by the formula in Section~\ref{sec:prob-dependency}, i.e.
  \begin{align*}
    \pdep(\X \to \Y, R) & = \sum_{\x} \prob_R(\x) [1 - \pdep(\Y \mid \x, R) \\
                        & = 1 - \expect_{\x}[\logentrop_R(\Y \mid \x)]
  \end{align*}
  where the second equality is due to Lemma~\ref{lem:pdep-equiv}. Since $\pdep(\X \to \Y, R) = 1$, this means in particular that $\expect_{\x}[\logentrop_R(\Y \mid \x)] = 0$, which by reasoning similar as above can only happen if $\logentrop_R(\Y \mid \x) = 0$ for every $\x \in \pi_{\X}(R)$. This means, that for every  $\x \in \pi_{\X}(R)$, the probability to draw two distinct $\Y$-tuples in $\pi_{\Y}(\sigma_{\X=\x}(R))$ is zero. But that can only happen if there is only one $\Y$-value $\pi_{\Y}(\sigma_{\X=\x}(R))$, in which case $R \models \varphi$ and we obtain our desired contradiction.
\end{proof}


  \section{Formal comparison of the AFD measures}
  \label{sec:formal:comparison}
  \begin{table}[t]
    \caption{\label{tab:similarities} Overview of similarities between \logical measures and measures in \simple / \shannon. }
    \scriptsize
    \renewcommand{\arraystretch}{1.4}
	\setlength\tabcolsep{0.5pt}
    \centering
    \begin{tabular}{@{}l@{\hspace{1ex}}r@{\,}l@{\quad}r@{\,}l@{}}
        \toprule
                                                                                                   &
        \multicolumn{2}{@{}l}{\logical measure}                                                    &
        \multicolumn{2}{@{}l}{\simple/\shannon}                                                                                                                                \\
        \midrule
        \cstep\label{c1}                                                                           &
        $g_1$                                                                                      & $= 1 - \logentrop_R(\Y \mid \X)$                                        &
                                                                                                   & $ 1 - \entrop_R(\Y\mid \X)$                                               \\
        \midrule
        \cstep\label{c2}                                                                           &
        $\pdep$                                                                                    & $= \sum_{\x} \prob_R(\x) \left(1 - \logentrop_R(\Y \mid \x)\right)$     &
        $g_3$                                                                                      & = $\sum_{\x} \prob_R(\x) \max_{\y} \prob_R(\y \mid \x)$
        \\
                                                                                                   &
                                                                                                   & $= 1 - \sum_{\x} \prob_R(\x) \logentrop_R(\Y \mid \x)$                  &
        $g_2$                                                                                      & $= 1 - \sum_{\w \in G_2(\X \to \Y, R)} \prob_R(\w)$                       \\
        \midrule
        \cstep\label{c3}                                                                           &
        $\tau$                                                                                     & $= 1 - \frac{\expect_{\x}[\logentrop_R(\Y \mid \x)]}{\logentrop_R(\Y)}$ &
        $\FI$                                                                                      & $= 1 - \frac{\entrop_R(\Y \mid \X)}{\entrop_R(\Y)}$
        \\
                                                                                                   &
                                                                                                   &                                                                         &
                                                                                                   & $= 1 - \frac{\expect_{\x}[\entrop_R(\Y \mid \x)]}{\entrop_R(\Y)}$         \\
        \midrule
        \cstep\label{c4}                                                                           &
        $\mu$                                                                                      &
        $ = \frac{\pdep(\varphi, R)-\expect_R[\pdep(\varphi, R)]}{1-\expect_R[\pdep(\varphi, R)]}$ &
        $\RFI$                                                                                     & $= \FI(\varphi, R) - \expect_R[\FI(\varphi,R)]$
        \\
        \bottomrule
    \end{tabular}
\end{table}

This section presents a formal comparison of the measures introduced in Section~\ref{sec:measures}, 
which motivates the definition of the new measures \SHANNONGONE and \RFInorm.

Among the measures introduced in Section~\ref{sec:measures}, we discern the following three classes (see also the second row in Table~\ref{table:comparison}):
\begin{compactenum}[(1)]
    \item The class of measures that have a notion of ``violation'' and quantify the number of violations, consisting of $\rho, g_2, g_3$, and $g'_3$. We denote this class by $\simple$.
    \item The class of measures based on Shannon entropy, consisting of $\FI, \RFIplus$, and $\SFI$. We denote this class by $\shannon$.

    \item The class of measures based on logical entropy, consisting of $g_1,g'_1, \pdep, \tau$, and $\muplus$ and denoted by $\logical$.
\end{compactenum}
We discuss the similarities in the design of \logical
measures and those in the \simple and \shannon class
by means of Table~\ref{tab:similarities}, which clusters
measures into groups that we find similar and where we rewrite measures
into equivalent form when this is necessary to stress the similarities.
\inConfVersion{We prove in the full version of this paper~\cite{artifacts}:}
\begin{theorem}
    \label{thm:equivalenes-correct}
    The alternate formulas given in
    Table~\ref{tab:similarities} are equivalent to their definition given in
    Sections \ref{sec:cords-measure}--\ref{sec:prob-dependency}.
\end{theorem}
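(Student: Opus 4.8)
The plan is to note that the table collects four groups of rewrites, of which all but two are immediate restatements of definitions, and to concentrate the work on the closed forms for $g_3$ and $\tau$.

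First I would clear the routine entries. The $g_1$ line and the $\mu$/$\RFI$ line merely repeat the definitions from Sections~\ref{sec:g-measures} and~\ref{sec:prob-dependency}, as does the $g_2$ entry, so for these there is nothing to prove. For $\pdep$, the first equality is the definition $\pdep(\X\to\Y,R)=\sum_{\x}\prob_R(\x)\,\pdep(\Y\mid\x,R)$ combined with $\pdep(\Y\mid\x,R)=1-\logentrop_R(\Y\mid\x)$, and the second equality follows by distributing the sum and using $\sum_{\x}\prob_R(\x)=1$. For $\FI$, dividing the numerator in its definition termwise gives $\FI=1-\entrop_R(\Y\mid\X)/\entrop_R(\Y)$, and the final form substitutes the identity $\entrop_R(\Y\mid\X)=\expect_{\x}[\entrop_R(\Y\mid\x)]$ already recorded in the preliminaries.

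The one genuinely combinatorial step is the identity $g_3(\X\to\Y,R)=\sum_{\x}\prob_R(\x)\max_{\y}\prob_R(\y\mid\x)$. Here I would argue directly about the maximizing subrelation. Writing $n_{\x\y}$ for the number of $R$-tuples whose $\X\Y$-projection equals $\x\y$, observe that a subrelation $R'\subseteq R$ satisfies $\X\to\Y$ precisely when, for every $\x\in\dom_R(\X)$, all retained tuples sharing that $\x$ carry one common $\Y$-value; the constraint therefore decouples across distinct $\x$-values. For fixed $\x$ a feasible $R'$ can retain at most $\max_{\y}n_{\x\y}$ tuples (all copies of the single most frequent $\y$), giving the upper bound $\size{R'}\le\sum_{\x}\max_{\y}n_{\x\y}$; the matching lower bound comes from the subrelation that, for each $\x$, keeps exactly the tuples on an argmax $\y$, which is feasible because the per-$\x$ choices are independent. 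Hence $\max_{R'}\size{R'}=\sum_{\x}\max_{\y}n_{\x\y}$, and dividing by $\size{R}$ together with $\prob_R(\x\y)=n_{\x\y}/\size{R}$ and $\prob_R(\x)\,\prob_R(\y\mid\x)=\prob_R(\x\y)$ yields the claimed form.

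Finally, the $\tau$ identity is pure algebra once the $\pdep$ form above is available: substituting $\pdep(\X\to\Y,R)=1-\expect_{\x}[\logentrop_R(\Y\mid\x)]$ and $\pdep(\Y,R)=1-\logentrop_R(\Y)$ into $\tau=(\pdep(\X\to\Y,R)-\pdep(\Y,R))/(1-\pdep(\Y,R))$ cancels the constants in the numerator and collapses the denominator to $\logentrop_R(\Y)$, leaving $\tau=1-\expect_{\x}[\logentrop_R(\Y\mid\x)]/\logentrop_R(\Y)$. I expect the only real obstacle to be the $g_3$ optimality argument, namely confirming both that the greedy per-$\x$ choice is globally feasible and that no subrelation can exceed it; the remaining equalities are bookkeeping.
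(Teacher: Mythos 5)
Your proposal is correct and follows essentially the same route as the paper's own proof: the definitional entries are dismissed, $\pdep$ and $\tau$ are handled by the same substitution algebra, $\FI$ by termwise division plus the identity $\entrop_R(\Y\mid\X)=\expect_{\x}[\entrop_R(\Y\mid\x)]$, and the $g_3$ identity is proved by the same key argument that the optimal subrelation keeps, for each $\x$, exactly the tuples carrying the most frequent $\Y$-value (your explicit upper/lower-bound phrasing is just a more careful rendering of the paper's maximality observation). The only cosmetic difference is that the paper's $\FI$ lemma adds a case analysis for $|\dom_R(\Y)|=1$ to handle the $0/0$ convention, which is vacuous under the standing assumption $R\not\models\X\to\Y$ that both you and the paper adopt.
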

\inFullVersion{The proof may be found in the Appendix~\ref{sec:equiv-proofs}.}

\newcommand{\circref}[1]{\CircledText[outer color=blue]{\small{\ref{#1}}}}

\circref{c1} We have already observed that $g_1$ is a measure based on
logical entropy, $g_1(\X \to \Y,R) = 1 - \logentrop_R(\Y \mid \X)$.
We find it interesting to observe that Giannella and
Robertson~\cite{DBLP:journals/is/GiannellaR04} considered an axiomatisation of
\fd error measures, and showed that Shannon entropy $\entrop_R(\Y \mid \X)$ is,
up to a multiplicative constant, the unique unnormalized error measure that
satisfies their axioms. As such, we may view $1 - \entrop_R(\Y \mid \X)$ as the
Shannon equivalent of $g_1$. Unfortunately, however, $1 - \entrop_R(\Y \mid \X)$
is not an \afd measure: the value of $\entrop_R(\Y \mid \X)$ is unbounded and
$1 - \entrop_R(\Y \mid \X)$ hence has range $[-\infty, 1]$ instead of $[0,1]$.
Giannella and Robertson~\cite{DBLP:journals/is/GiannellaR04} therefore turn
$1 - \entrop_R(\Y \mid \X)$ into an \afd measure by moving to $\FI$, which
normalizes $\entrop_R(\Y \mid \X)$ w.r.t. $\entrop_R(\Y)$. This is no longer the
conceptual Shannon counterpart of $g_1$. However, as further discussed below, it is nevertheless natural to ask what the conceptual Shannon counterpart of $g_1$ is and how it behaves. We thus propose the following Shannon variant $\SHANNONGONE$ of $g_1$, obtained by limiting $1  - \entrop_R(\Y\mid\X)$ to be positive:
\[ \SHANNONGONE(\X \to \Y, R) := \max(1 - \entrop_R(\Y \mid \X), 0). \]

\circref{c2} We have already observed in Section~\ref{sec:prob-dependency} that we may view $\pdep$ as a generalisation of $g_2$. We may also view it as an alternate to $g_3$. Indeed, $\pdep$ equals the expected value of $1 - \logentrop_R(\Y \mid \x)$---expressing the probability of $\x$ not participating in a violating pair---while $g_3$ equals the expected value of $\max_{\y} \prob_R(\y \mid \x)$---expressing the largest subgroup of non-violating tuples in $\pi_{\Y}\sigma_{\X=\x}(R)$, where in both cases expectation is taken over all $\x$.

\circref{c3} As shown by the rewritten formulas in line 3 of Table~\ref{tab:similarities}, $\FI$ is simply the Shannon entropy-based version of $\tau$.

\circref{c4}
The similarity between $\tau$ and $\FI$ extends to a conceptual similarity between $\mu$ and $\RFI$: $\mu$ corrects for the  bias of $\tau$ under random permutations while $\RFI$ corrects for the bias of $\FI$ under random permutations. Despite this conceptual similarity, note that the corrections are done differently: $\mu$ corrects by taking the \emph{normalized} difference between $\pdep$ and $\expect_R[\pdep]$ while $\RFI$ corrects by taking the \emph{absolute} difference between $\FI$ and $\expect_R[\FI]$.  As such $\RFI$ is not a normalized measure. Since it is natural to ask what the normalized variant of $\RFI$ is and how it behaves, we define
\[ \RFInorm(\varphi, R) := \max\left(\frac{\FI(\varphi,R) - \expect_R[\FI(\varphi,R)]}{1 -\expect_R[\FI(\varphi,R)]}, 0\right). \]

We attribute both new measures $\SHANNONGONE$ and $\RFInorm$ to $\shannon$, 
and compare their behavior to that of the other measures in the following sections.


  \section{Proofs of equivalence}
  \label{sec:equiv-proofs}
  In this section we prove that the alternate measure formulations shown in
Table~\ref{tab:similarities} are correct, hence proving
Theorem~\ref{thm:equivalences-correct}. The theorem is proved as a sequence of lemmas. Throughout this section, assume that $R \not \models \X \to \Y$.

\begin{lemma}
  \label{lem:g3-equiv}
$g_3(\X \to \Y, R) = \sum_{\x} \prob_R(\x) \max_{\y} \prob_R(\y \mid \x)$.
\end{lemma}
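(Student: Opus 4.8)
The plan is to solve the combinatorial optimization defining $g_3$ explicitly and then rewrite the result in probabilistic terms. Recall that $g_3(\X \to \Y, R) = \max_{R' \subseteq R,\, R' \models \X \to \Y} |R'|/|R|$, so the task reduces to determining $\max_{R' \in G_3(\X\to\Y,R)} |R'|$ and dividing by the constant $|R|$.

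First I would characterize the feasible subrelations. A subrelation $R' \subseteq R$ satisfies $\X \to \Y$ exactly when, for every $\x \in \dom_R(\X)$, all tuples $\w \in R'$ with $\restr{\w}{\X} = \x$ agree on their $\Y$-value. Hence every feasible $R'$ arises by choosing, for each $\x$, a single target value $\kappa(\x) \in \dom_R(\Y)$ and retaining only tuples $\w$ with $\restr{\w}{\X\Y} = \x\,\kappa(\x)$ (note that such tuples may still differ on $\Z = \W \setminus \X\Y$, which is irrelevant to the \fd). The per-tuple constraint $R'(\w) \le R(\w)$ then allows us to keep at most the full multiplicity of each retained tuple.

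Next I would observe that the objective decomposes over $\x$: since tuples with distinct $\X$-values are never constrained against one another by $\X \to \Y$, and the constraint $R'(\w)\le R(\w)$ is itself tuple-local, the choice of $\kappa(\x)$ may be optimized independently for each $\x$. For a fixed $\x$ and chosen value $\y = \kappa(\x)$, the maximal number of retained tuples equals $\sum_{\w :\, \restr{\w}{\X\Y} = \x\y} R(\w) = (\pi_{\X\Y}R)(\x\y)$, so the optimal choice is the $\y$ maximizing this count. Summing over all $\x$ yields
\[
  \max_{R' \in G_3(\X\to\Y,R)} |R'| = \sum_{\x} \max_{\y} (\pi_{\X\Y}R)(\x\y).
\]

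Finally I would divide by $|R|$ and translate counts into probabilities. Using $(\pi_{\X\Y}R)(\x\y)/|R| = \prob_R(\x\y)$ and factoring the nonnegative quantity $\prob_R(\x)$ out of the inner maximum, so that $\max_{\y}\prob_R(\x\y) = \prob_R(\x)\,\max_{\y}\prob_R(\y \mid \x)$, gives $g_3(\X\to\Y,R) = \sum_{\x}\prob_R(\x)\max_{\y}\prob_R(\y\mid\x)$, as claimed. The only genuine subtlety, and the step I would take the most care over, is the independence-across-$\x$ argument that licenses optimizing each $\X$-group separately; the remainder is bookkeeping.
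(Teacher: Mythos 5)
Your proof is correct and follows essentially the same route as the paper's: both solve the maximization defining $g_3$ by observing that the choice of which $\Y$-value to keep can be made independently for each $\x$-group (retaining full multiplicities, including tuples differing only on $\Z$), yielding $\sum_{\x}\max_{\y}\prob_R(\x\y)$, and then factoring out $\prob_R(\x)$ via the definition of conditional probability. Your write-up is merely more explicit about the per-group decomposition and the choice-function characterization, which the paper compresses into a single observation about maximal subrelations.
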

\begin{proof}
We reason as follows.
\begin{align*}
  g_3(\X \to \Y, R)
  & = \max_{R' \in G_3(\X \to \Y, R)} \frac{|R'|}{|R|}          \\
  & = \max_{R' \in G_3(\X \to \Y,R)} \sum_{\w \in R'} \prob_R(\w) \\
  & = \sum_{\x} \max_{\y} \prob_R(\x\y)                        \\
  & = \sum_{\x} \prob_R(\x) \max_{\y} \prob_R(\y \mid \x).
\end{align*}
Here, the first equality is the definition of $g_3$.  The second equality follows by definition of $\prob_R$. The third equality follows from the following observation: a relation $R'\subseteq R$ can only be maximal if $R'(\w)=R(\w)$ whenever $R'(\w)>0$ for all $\w\in R$. That is, either we keep all occurrences of $\w$ or we remove all of them.  So, maximizing $\sum_{\w \in R'} \prob_R(\w)$ corresponds to, for every $\x$, keeping that $\y$ that maximizes $\prob_R(\x\y)$. Thereby, effectively removing all other tuples $\x\y'$ with $\y\neq\y'$. The last equality then follows from the definition of conditional probability. 
\end{proof}

\begin{lemma}
  \label{lem:pdep-equiv}
  $\pdep(\Y \mid \x, R) = 1 - \logentrop_R(\Y \mid \x)$ and therefore
  \begin{align*}
  \pdep(\X \to \Y, R) & = \sum_{\x} \prob_R(\x) (1 - \logentrop_R(\Y \mid x)) \\                
  & = 1 - \sum_{\x} \prob_R(\x) \logentrop_R(\Y \mid \x)\\
  & = 1 - \expect_{\x}[\logentrop_R(\Y \mid \x)]
  \end{align*}
\end{lemma}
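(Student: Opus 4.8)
The plan is to establish the pointwise identity $\pdep(\Y \mid \x, R) = 1 - \logentrop_R(\Y \mid \x)$ first, and then derive the three displayed equalities for $\pdep(\X \to \Y, R)$ by substituting this identity into the definition of $\pdep(\X \to \Y, R)$ and simplifying. No substantial obstacle is expected: the entire statement reduces to unfolding the definitions given in the Preliminaries and in Section~\ref{sec:prob-dependency}, so the ``hard part'' is merely bookkeeping and checking well-definedness of the sums.

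For the pointwise identity, I would recall the definition $\logentrop_R(\Y \mid \x) = 1 - \sum_{\y \colon \Y} \prob_R(\y \mid \x)^2$ from the Preliminaries. Rearranging gives $\sum_{\y} \prob_R(\y \mid \x)^2 = 1 - \logentrop_R(\Y \mid \x)$, and since the left-hand side is exactly the definition of $\pdep(\Y \mid \x, R)$, the identity follows immediately.

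For the chain of three equalities, I would start from the definition $\pdep(\X \to \Y, R) = \sum_{\x} \prob_R(\x)\, \pdep(\Y \mid \x, R)$ and substitute the pointwise identity to obtain the first line. The second line then follows by distributing the sum over the difference and using that $\prob_R(\X)$ is a probability distribution, so that $\sum_{\x} \prob_R(\x) = 1$. The third line is simply the observation that $\sum_{\x} \prob_R(\x)\, \logentrop_R(\Y \mid \x)$ is, by definition of expectation over the marginal distribution $\prob_R(\X)$, equal to $\expect_{\x}[\logentrop_R(\Y \mid \x)]$.

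The only point worth verifying is that all sums are finite and well-defined. This holds because $R$ is finite, so both $\dom_R(\X)$ and $\dom_R(\Y)$ are finite, and $R$ is non-empty, so that $\prob_R$ (and hence $\prob_R(\Y \mid \x)$ for every $\x \in \dom_R(\X)$) is well-defined---both facts being assumed throughout Section~\ref{sec:measures}.
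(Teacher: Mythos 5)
Your proposal is correct and follows essentially the same route as the paper's own proof: both establish the pointwise identity $\pdep(\Y \mid \x, R) = 1 - \logentrop_R(\Y \mid \x)$ by unfolding the definition of $\logentrop_R(\Y \mid \x)$, then substitute into $\pdep(\X \to \Y, R) = \sum_{\x} \prob_R(\x)\,\pdep(\Y \mid \x, R)$, distribute, and use $\sum_{\x} \prob_R(\x) = 1$. Your added remarks on finiteness and on the expectation notation being definitional are harmless elaborations the paper leaves implicit.
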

\begin{proof}
  We first observe
  \begin{align*}
    \pdep(\Y \mid \x, R)
     & = \sum_{\y} \prob_R(\y \mid \x)^2           \\
     & = 1 - (1 - \sum_{\y} \prob_R(\y \mid \x)^2) \\
     & = 1 - \logentrop_R(Y \mid \x).
  \end{align*}
  Hence,
  \begin{align*}
    \pdep(\X \to \Y, R) & = \sum_{\x} \prob_R(\x) \pdep(\Y \mid \x, R)                            \\
                        & = \sum_{\x} \prob_R(\x) (1 - \logentrop_R(\Y \mid x))                   \\
                        & = \sum_{\x} \prob_R(\x) - \sum_{\x} \prob_R(\x)\logentrop_R(\Y \mid \x) \\
                        & = 1 - \sum_{\x} \prob_R(\x)\logentrop_R(\Y \mid \x). \qedhere
  \end{align*}
  
\end{proof}

\begin{lemma}
  $\tau(X \to Y, R) = 1 - \frac{\expect_{\x}[\logentrop_R(\Y \mid \x)]}{\logentrop_R(\Y)}$
\end{lemma}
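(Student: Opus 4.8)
The plan is to derive the claimed identity purely by substituting the two logical-entropy reformulations that are already available, since $\tau$ is defined directly in terms of $\pdep$ and $\pdep(\Y,R)$. First I would recall the definition from Section~\ref{sec:prob-dependency}, namely
\[
  \tau(\X\to\Y, R) = \frac{\pdep(\X\to\Y, R) - \pdep(\Y, R)}{1 - \pdep(\Y, R)},
\]
and rewrite its denominator. Since probabilistic self-dependency satisfies $\pdep(\Y, R) = \sum_{\y}\prob_R(\y)^2 = 1 - \logentrop_R(\Y)$ by definition, we immediately get $1 - \pdep(\Y, R) = \logentrop_R(\Y)$, which is the denominator appearing in the target formula.

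Next I would handle the numerator. The key input is Lemma~\ref{lem:pdep-equiv}, which gives $\pdep(\X\to\Y, R) = 1 - \expect_{\x}[\logentrop_R(\Y \mid \x)]$. Combining this with the rewritten self-dependency term, the numerator becomes
\[
  \bigl(1 - \expect_{\x}[\logentrop_R(\Y \mid \x)]\bigr) - \bigl(1 - \logentrop_R(\Y)\bigr)
  = \logentrop_R(\Y) - \expect_{\x}[\logentrop_R(\Y \mid \x)].
\]
Dividing by the denominator $\logentrop_R(\Y)$ and splitting the fraction then yields exactly $1 - \expect_{\x}[\logentrop_R(\Y \mid \x)]/\logentrop_R(\Y)$, as claimed.

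There is essentially no hard step here: the argument is a two-line substitution followed by cancellation of the constant $1$. The only point deserving care is well-definedness of the quotient, i.e.\ that the denominator $\logentrop_R(\Y)$ is nonzero; this is guaranteed by the standing assumption $R \not\models \X\to\Y$ in this section, which (as noted in the Conventions paragraph) forces $\size{\dom_R(\Y)} > 1$ and hence $\logentrop_R(\Y) > 0$. I would state this observation once at the start so that the division is justified throughout.
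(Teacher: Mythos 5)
Your proof is correct and follows essentially the same route as the paper's: both substitute Lemma~\ref{lem:pdep-equiv} for $\pdep(\X\to\Y,R)$ and the definitional identity $\pdep(\Y,R) = 1 - \logentrop_R(\Y)$ into the definition of $\tau$, then simplify. Your added remark on well-definedness of the quotient (via $R \not\models \X\to\Y$ implying $\logentrop_R(\Y) > 0$) is a sound observation that the paper handles once globally in its Conventions paragraph rather than inside the proof.
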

\begin{proof}
  We reason as follows.
  \begin{align*}
    \tau(X\to Y , R) & = \frac{\pdep(\X \to \Y, R)-\pdep(\Y, R)}{1-\pdep(\Y, R)}                                                 \\
                     & = \frac{(1 - \expect_{\x}[\logentrop_R(\Y \mid \x)]) - (1 - \logentrop_R(\Y))}{1 - (1 - \logentrop_R(\Y))} \\
                     & = \frac{\logentrop_R(\Y) - \expect_{\x}[\logentrop_R(\Y \mid \x)]}{\logentrop_R(\Y)}                      \\
                     & = 1 - \frac{\expect_{\x}[\logentrop_R(\Y \mid \x)]}{\logentrop_R(\Y)} \qedhere
  \end{align*}
  Here, the second equality is by Lemma~\ref{lem:pdep-equiv} and the fact that $\pdep(\Y, R) = 1 - \logentrop_R(\Y)$ by definition.
\end{proof}

Relating this to logical entropy we observe
\begin{lemma}
  $ \mu(\X \to \Y,R) = 1 - \frac{\expect_{\x}[\logentrop_R(\Y\mid\x)]}{\logentrop_R(\Y)} \frac{|R|-1}{|R|-|\dom(\X,R)|}$
\end{lemma}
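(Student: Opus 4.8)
The plan is to start from the closed form of $\mu$ already recorded in its definition in Section~\ref{sec:prob-dependency},
\[
  \mu(\X \to \Y, R) = 1 - \frac{1 - \pdep(\X \to \Y, R)}{1 - \pdep(\Y, R)}\,\frac{|R| - 1}{|R| - |\dom_R(\X)|},
\]
and to rewrite the two $\pdep$ quantities appearing in the fraction in terms of logical entropy. Only two elementary identities are needed, both already available in the excerpt.

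First, by Lemma~\ref{lem:pdep-equiv} we have $\pdep(\X \to \Y, R) = 1 - \expect_{\x}[\logentrop_R(\Y \mid \x)]$, hence $1 - \pdep(\X \to \Y, R) = \expect_{\x}[\logentrop_R(\Y \mid \x)]$. Second, by the definition of probabilistic self-dependency, $\pdep(\Y, R) = 1 - \logentrop_R(\Y)$, so $1 - \pdep(\Y, R) = \logentrop_R(\Y)$. Substituting these two expressions into the numerator and denominator of the fraction immediately yields
\[
  \mu(\X \to \Y, R) = 1 - \frac{\expect_{\x}[\logentrop_R(\Y \mid \x)]}{\logentrop_R(\Y)}\,\frac{|R| - 1}{|R| - |\dom_R(\X)|},
\]
which is the claimed identity.

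Should one prefer not to take the closed form as given, I would first derive it from the raw definition $\mu = \frac{\pdep(\X \to \Y, R) - \expect_R[\pdep]}{1 - \expect_R[\pdep]}$ by writing it as $1 - \frac{1 - \pdep(\X \to \Y, R)}{1 - \expect_R[\pdep]}$ and then inserting the expected-value formula of Piatetsky-Rotem-Shapiro, $\expect_R[\pdep(\X \to \Y, R)] = \pdep(\Y, R) + \frac{|\dom_R(\X)| - 1}{|R| - 1}(1 - \pdep(\Y, R))$, which simplifies the denominator to $1 - \expect_R[\pdep] = (1 - \pdep(\Y, R))\frac{|R| - |\dom_R(\X)|}{|R| - 1}$; substituting this back reproduces the closed form, after which the two logical-entropy identities above finish the argument.

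There is essentially no obstacle here: the statement is a routine algebraic substitution, so the ``hard part'' is merely bookkeeping. The one point worth a brief remark is well-definedness of the two denominators $\logentrop_R(\Y)$ and $|R| - |\dom_R(\X)|$. Both are guaranteed nonzero by the standing assumption $R \not\models \X \to \Y$ of this appendix: as observed in the Conventions of Section~\ref{sec:measures}, this assumption forces $|\dom_R(\X)| \neq |R|$ and $\logentrop_R(\Y) > 0$, so the division is valid.
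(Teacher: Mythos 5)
Your proposal is correct and follows essentially the same route as the paper's own proof: both take the closed form $\mu = 1 - \frac{1-\pdep(\X\to\Y,R)}{1-\pdep(\Y,R)}\frac{|R|-1}{|R|-|\dom_R(\X)|}$ as the starting point and then substitute Lemma~\ref{lem:pdep-equiv} together with $\pdep(\Y,R) = 1 - \logentrop_R(\Y)$. Your optional second paragraph, deriving that closed form from the raw definition via the Piatetsky-Rotem-Shapiro expectation formula, actually goes a step beyond the paper, which asserts that identity in Section~\ref{sec:prob-dependency} without proof; the well-definedness remark likewise matches the paper's standing conventions.
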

\begin{proof}
  We reason as follows.
  \begin{align*}
  \mu(&\X \to \Y, R) \\
   & :=\frac{\pdep(\X \to \Y, R)-\expect_R[\pdep(\X \to \Y, R)]}{1-\expect_R[\pdep(\X \to \Y, R)]} \\
    & =1-\frac{1-\pdep(\X \to \Y, R)}{1-\pdep(\Y, R)} \frac{|R|-1}{|R|-|\dom(\X,R)|} \\
    & = 1 - \frac{1- (1 - \expect_{\x}[\logentrop_R(\Y \mid \x)])}{1-(1-\logentrop_R(\Y))} \frac{|R|-1}{|R|-|\dom(\X,R)|} \\
    & = 1 - \frac{\expect_{\x}[\logentrop_R(\Y \mid \x)]}{\logentrop_R(\Y)} \frac{|R|-1}{|R|-|\dom(\X,R)|}.\qedhere
\end{align*}

\end{proof}

\begin{lemma}
  \[ \FI(\X \to \Y, R) = 1 - \frac{\entrop_R(\Y \mid \X)}{\entrop_R(\Y)}.\]
\end{lemma}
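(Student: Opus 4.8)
The plan is to obtain the identity by algebraic rearrangement of the definition of $\FI$ given in Section~\ref{sec:fraction-information-measures}, because the claimed formula is nothing but that definition with its single fraction separated into two terms. Recall that
\[ \FI(\X \to \Y, R) := \frac{\entrop_R(\Y) - \entrop_R(\Y \mid \X)}{\entrop_R(\Y)}. \]
First I would note that the standing assumption of this section, namely $R \not\models \X \to \Y$, guarantees via the Conventions paragraph that $\size{\dom_R(\Y)} > 1$ and hence $\entrop_R(\Y) > 0$. This ensures the denominator is nonzero, so the expression is well-defined and we are free to manipulate it as an ordinary fraction.

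The only real step is then to split the numerator over the common denominator:
\[ \frac{\entrop_R(\Y) - \entrop_R(\Y \mid \X)}{\entrop_R(\Y)} = \frac{\entrop_R(\Y)}{\entrop_R(\Y)} - \frac{\entrop_R(\Y \mid \X)}{\entrop_R(\Y)} = 1 - \frac{\entrop_R(\Y \mid \X)}{\entrop_R(\Y)}, \]
which is precisely the asserted equality. This completes the argument.

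There is essentially no obstacle: the statement is a purely formal restatement of the definition, and the sole thing that needs checking is the nonvanishing of $\entrop_R(\Y)$, which is already secured by the conventions fixed for the case $R \not\models \varphi$. Consequently I would expect the author's proof to be a one- or two-line display reproducing exactly this splitting of the fraction.
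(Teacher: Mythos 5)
Your proof is correct and takes essentially the same route as the paper's: both arguments amount to splitting the fraction $\frac{\entrop_R(\Y) - \entrop_R(\Y \mid \X)}{\entrop_R(\Y)}$ into $1 - \frac{\entrop_R(\Y \mid \X)}{\entrop_R(\Y)}$ once the denominator is known to be nonzero. The only difference is that the paper's proof additionally carries out a case analysis for $|\dom_R(\Y)| = 1$ (where the identity still holds via the convention $\frac{0}{0} = 0$), a case that is vacuous under the section's standing assumption $R \not\models \X \to \Y$ which you invoke instead.
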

\begin{proof}
  To show the claimed equality, we reason as follows. Recall that we implicitly
  assume throughout the paper that $R$ is non-empty. By definition
  \[ \FI(\X \to \Y, R) :=
  \begin{cases}
    1                                                           & \text{ if } |\dom_R(\Y)| = 1, \\
    \frac{\entrop_R(\Y) - \entrop_R(\Y \mid \X)}{\entrop_R(\Y)} & \text{otherwise.}
  \end{cases}
\]
 We now make a case analysis.
  \begin{itemize}
    \item If $|\dom_R(\Y)| = 1$ then $\entrop_R(\Y) = 0$. Moreover, if
          $\entrop_R(\Y) = 0$, also $\entrop_R(\Y \mid \X) = 0$. As such,
          \[ 1 - \frac{\entrop_R(\Y \mid \X)}{\entrop_R(\X)} = 1- \frac{0}{0} = 1 - 0 = 1 = \FI(\X \to \Y, R),\]
          as desired.
    \item If $|\dom_R(\Y)| > 1$ then
          \begin{align*}
            \FI(\X \to \Y, R)
             & = \frac{\entrop_R(Y) - \entrop_R(\Y \mid \X)}{\entrop_R(\Y)} \\
             & = 1 - \frac{\entrop_R(\Y\mid \X)}{\entrop(\Y)} \qedhere
          \end{align*}
  \end{itemize}

\end{proof}


  \section{Additional information on the sensitivity analysis}
  \label{sec:syn-extra}
  \begin{figure*}
	\small
	\centering
	\pgfplotsset{
		width=6cm,
		every axis plot/.append style={thick},
		every axis y label/.style={
				at={(ticklabel cs:0.5)},rotate=90,anchor=near ticklabel,
			},
		every axis x label/.style={
				at={(ticklabel cs:0.5)},anchor=near ticklabel,
			},
		ymin=-0.05, ymax=1.05,
		ymajorgrids=true,
		xmajorgrids=true,
		legend columns=-1,
		legend style={draw=none},
	}
	\begin{minipage}{0.32\textwidth}
		\centering
		\hspace{2em}\ref{appendix_syn_noise_simple_legend}
		\\
		\begin{tikzpicture}
			\begin{axis}[
					legend to name=appendix_syn_noise_simple_legend,
					xlabel={Error rate},
					ylabel={Measure value},
					xmin=-0.0, xmax=0.09,
				]
				\addplot [plotColor1] table [x=error,y=fd] {plotdata/syn_error_copy_rho.dat};
				\addlegendentry{$\rho$}
				\addplot [plotColor2] table [x=error,y=fd] {plotdata/syn_error_copy_g2.dat};
				\addlegendentry{$g_2$}
				\addplot [plotColor3] table [x=error,y=fd] {plotdata/syn_error_copy_g3.dat};
				\addlegendentry{$g_3$}
				\addplot [plotColor4] table [x=error,y=fd] {plotdata/syn_error_copy_g3_prime.dat};
				\addlegendentry{$g'_3$}
				\addplot [plotColor1,style=dashed] table [x=error,y=random] {plotdata/syn_error_copy_rho.dat};
				\addplot [plotColor2,style=dashed] table [x=error,y=random] {plotdata/syn_error_copy_g2.dat};
				\addplot [plotColor3,style=dashed] table [x=error,y=random] {plotdata/syn_error_copy_g3.dat};
				\addplot [plotColor4,style=dashed] table [x=error,y=random] {plotdata/syn_error_copy_g3_prime.dat};
			\end{axis}
		\end{tikzpicture}
		\begin{tikzpicture}
			\begin{axis}[
					legend to name=appendix_syn_keylike_simple_legend,
					xlabel={LHS Uniqueness},
					ylabel={Measure value},
					xmin=0.1, xmax=0.9,
				]
				\addplot [plotColor1] table [x=lhs_uniq,y=fd] {plotdata/syn_keylike_rho.dat};
				\addlegendentry{$\rho$}
				\addplot [plotColor2] table [x=lhs_uniq,y=fd] {plotdata/syn_keylike_g2.dat};
				\addlegendentry{$g_2$}
				\addplot [plotColor3] table [x=lhs_uniq,y=fd] {plotdata/syn_keylike_g3.dat};
				\addlegendentry{$g_3$}
				\addplot [plotColor4] table [x=lhs_uniq,y=fd] {plotdata/syn_keylike_g3_prime.dat};
				\addlegendentry{$g'_3$}
				\addplot [plotColor1,style=dashed] table [x=lhs_uniq,y=random] {plotdata/syn_keylike_rho.dat};
				\addplot [plotColor2,style=dashed] table [x=lhs_uniq,y=random] {plotdata/syn_keylike_g2.dat};
				\addplot [plotColor3,style=dashed] table [x=lhs_uniq,y=random] {plotdata/syn_keylike_g3.dat};
				\addplot [plotColor4,style=dashed] table [x=lhs_uniq,y=random] {plotdata/syn_keylike_g3_prime.dat};
			\end{axis}
		\end{tikzpicture}
		\begin{tikzpicture}
			\begin{axis}[
					legend to name=appendix_syn_rhsskew_simple_legend,
					xlabel={RHS Skew},
					ylabel={Measure value},
					xmin=0.0, xmax=9.0,
				]
				\addplot [plotColor1] table [x=rhs_skew,y=fd] {plotdata/syn_rhsskew_rho.dat};
				\addlegendentry{$\rho$}
				\addplot [plotColor2] table [x=rhs_skew,y=fd] {plotdata/syn_rhsskew_g2.dat};
				\addlegendentry{$g_2$}
				\addplot [plotColor3] table [x=rhs_skew,y=fd] {plotdata/syn_rhsskew_g3.dat};
				\addlegendentry{$g_3$}
				\addplot [plotColor4] table [x=rhs_skew,y=fd] {plotdata/syn_rhsskew_g3_prime.dat};
				\addlegendentry{$g'_3$}
				\addplot [plotColor1,style=dashed] table [x=rhs_skew,y=random] {plotdata/syn_rhsskew_rho.dat};
				\addplot [plotColor2,style=dashed] table [x=rhs_skew,y=random] {plotdata/syn_rhsskew_g2.dat};
				\addplot [plotColor3,style=dashed] table [x=rhs_skew,y=random] {plotdata/syn_rhsskew_g3.dat};
				\addplot [plotColor4,style=dashed] table [x=rhs_skew,y=random] {plotdata/syn_rhsskew_g3_prime.dat};
			\end{axis}
		\end{tikzpicture}
	\end{minipage}
	\begin{minipage}{0.34\textwidth}
		\centering
		\hspace{-0em}\ref{appendix_syn_noise_shannon_legend}
		\\
		\begin{tikzpicture}
			\begin{axis}[
					legend to name=appendix_syn_noise_shannon_legend,
					xlabel={Error rate},
					xmin=-0.0, xmax=0.09,
				]
				\addplot [plotColor1] table [x=error,y=fd] {plotdata/syn_error_copy_shannon_g1_prime.dat};
				\addlegendentry{$\SHANNONGONE$}
				\addplot [plotColor2] table [x=error,y=fd] {plotdata/syn_error_copy_fraction_of_information.dat};
				\addlegendentry{$\FI$}
				\addplot [plotColor3] table [x=error,y=fd] {plotdata/syn_error_copy_reliable_fraction_of_information_prime.dat};
				\addlegendentry{$\RFIplus$}
				\addplot [plotColor4] table [x=error,y=fd] {plotdata/syn_error_copy_fraction_of_information_prime.dat};
				\addlegendentry{$\RFInorm$}
				\addplot [plotColor5] table [x=error,y=fd] {plotdata/syn_error_copy_smoothed_fraction_of_information.dat};
				\addlegendentry{$\SFI$}
				\addplot [plotColor1,style=dashed] table [x=error,y=random] {plotdata/syn_error_copy_shannon_g1_prime.dat};
				\addplot [plotColor2,style=dashed] table [x=error,y=random] {plotdata/syn_error_copy_fraction_of_information.dat};
				\addplot [plotColor3,style=dashed] table [x=error,y=random] {plotdata/syn_error_copy_reliable_fraction_of_information_prime.dat};
				\addplot [plotColor4,style=dashed] table [x=error,y=random] {plotdata/syn_error_copy_fraction_of_information_prime.dat};
				\addplot [plotColor5,style=dashed] table [x=error,y=random] {plotdata/syn_error_copy_smoothed_fraction_of_information.dat};
			\end{axis}
		\end{tikzpicture}
		\begin{tikzpicture}
			\begin{axis}[
					legend to name=appendix_syn_keylike_shannon_legend,
					xlabel={LHS Uniqueness},
					xmin=0.1, xmax=0.9,
				]
				\addplot [plotColor1] table [x=lhs_uniq,y=fd] {plotdata/syn_keylike_shannon_g1_prime.dat};
				\addlegendentry{$\SHANNONGONE$}
				\addplot [plotColor2] table [x=lhs_uniq,y=fd] {plotdata/syn_keylike_fraction_of_information.dat};
				\addlegendentry{$\FI$}
				\addplot [plotColor3] table [x=lhs_uniq,y=fd] {plotdata/syn_keylike_reliable_fraction_of_information_prime.dat};
				\addlegendentry{$\RFIplus$}
				\addplot [plotColor4] table [x=lhs_uniq,y=fd] {plotdata/syn_keylike_fraction_of_information_prime.dat};
				\addlegendentry{$\RFInorm$}
				\addplot [plotColor5] table [x=lhs_uniq,y=fd] {plotdata/syn_keylike_smoothed_fraction_of_information.dat};
				\addlegendentry{$\SFI$}
				\addplot [plotColor1,style=dashed] table [x=lhs_uniq,y=random] {plotdata/syn_keylike_shannon_g1_prime.dat};
				\addplot [plotColor2,style=dashed] table [x=lhs_uniq,y=random] {plotdata/syn_keylike_fraction_of_information.dat};
				\addplot [plotColor3,style=dashed] table [x=lhs_uniq,y=random] {plotdata/syn_keylike_reliable_fraction_of_information_prime.dat};
				\addplot [plotColor4,style=dashed] table [x=lhs_uniq,y=random] {plotdata/syn_keylike_fraction_of_information_prime.dat};
				\addplot [plotColor5,style=dashed] table [x=lhs_uniq,y=random] {plotdata/syn_keylike_smoothed_fraction_of_information.dat};
			\end{axis}
		\end{tikzpicture}
		\begin{tikzpicture}
			\begin{axis}[
					legend to name=appendix_syn_rhsskew_shannon_legend,
					xlabel={RHS Skew},
					xmin=0.0, xmax=9.0,
				]
				\addplot [plotColor1] table [x=rhs_skew,y=fd] {plotdata/syn_rhsskew_shannon_g1_prime.dat};
				\addlegendentry{$\SHANNONGONE$}
				\addplot [plotColor2] table [x=rhs_skew,y=fd] {plotdata/syn_rhsskew_fraction_of_information.dat};
				\addlegendentry{$\FI$}
				\addplot [plotColor3] table [x=rhs_skew,y=fd] {plotdata/syn_rhsskew_reliable_fraction_of_information_prime.dat};
				\addlegendentry{$\RFIplus$}
				\addplot [plotColor4] table [x=rhs_skew,y=fd] {plotdata/syn_rhsskew_fraction_of_information_prime.dat};
				\addlegendentry{$\RFInorm$}
				\addplot [plotColor5] table [x=rhs_skew,y=fd] {plotdata/syn_rhsskew_smoothed_fraction_of_information.dat};
				\addlegendentry{$\SFI$}
				\addplot [plotColor1,style=dashed] table [x=rhs_skew,y=random] {plotdata/syn_rhsskew_shannon_g1_prime.dat};
				\addplot [plotColor2,style=dashed] table [x=rhs_skew,y=random] {plotdata/syn_rhsskew_fraction_of_information.dat};
				\addplot [plotColor3,style=dashed] table [x=rhs_skew,y=random] {plotdata/syn_rhsskew_reliable_fraction_of_information_prime.dat};
				\addplot [plotColor4,style=dashed] table [x=rhs_skew,y=random] {plotdata/syn_rhsskew_fraction_of_information_prime.dat};
				\addplot [plotColor5,style=dashed] table [x=rhs_skew,y=random] {plotdata/syn_rhsskew_smoothed_fraction_of_information.dat};
			\end{axis}
		\end{tikzpicture}
	\end{minipage}
	\begin{minipage}{0.32\textwidth}
		\centering
		\hspace{2em}\ref{appendix_syn_noise_logical_legend}
		\\
		\begin{tikzpicture}
			\begin{axis}[
					legend to name=appendix_syn_noise_logical_legend,
					xlabel={Error rate},
					xmin=-0.0, xmax=0.09,
				]
				\addplot [plotColor1] table [x=error,y=fd] {plotdata/syn_error_copy_g1_prime.dat};
				\addlegendentry{$g_1,g'_1$}
				\addplot [plotColor2] table [x=error,y=fd] {plotdata/syn_error_copy_pdep.dat};
				\addlegendentry{$\pdep$}
				\addplot [plotColor3] table [x=error,y=fd] {plotdata/syn_error_copy_tau.dat};
				\addlegendentry{$\tau$}
				\addplot [plotColor4] table [x=error,y=fd] {plotdata/syn_error_copy_mu_prime.dat};
				\addlegendentry{$\muplus$}
				\addplot [plotColor1,style=dashed] table [x=error,y=random] {plotdata/syn_error_copy_g1_prime.dat};
				\addplot [plotColor2,style=dashed] table [x=error,y=random] {plotdata/syn_error_copy_pdep.dat};
				\addplot [plotColor3,style=dashed] table [x=error,y=random] {plotdata/syn_error_copy_tau.dat};
				\addplot [plotColor4,style=dashed] table [x=error,y=random] {plotdata/syn_error_copy_mu_prime.dat};
			\end{axis}
		\end{tikzpicture}
		\begin{tikzpicture}
			\begin{axis}[
					legend to name=appendix_syn_keylike_logical_legend,
					xlabel={LHS Uniqueness},
					xmin=0.1, xmax=0.9,
				]
				\addplot [plotColor1] table [x=lhs_uniq,y=fd] {plotdata/syn_keylike_g1_prime.dat};
				\addlegendentry{$g_1,g'_1$}
				\addplot [plotColor2] table [x=lhs_uniq,y=fd] {plotdata/syn_keylike_pdep.dat};
				\addlegendentry{$\pdep$}
				\addplot [plotColor3] table [x=lhs_uniq,y=fd] {plotdata/syn_keylike_tau.dat};
				\addlegendentry{$\tau$}
				\addplot [plotColor4] table [x=lhs_uniq,y=fd] {plotdata/syn_keylike_mu_prime.dat};
				\addlegendentry{$\muplus$}
				\addplot [plotColor1,style=dashed] table [x=lhs_uniq,y=random] {plotdata/syn_keylike_g1_prime.dat};
				\addplot [plotColor2,style=dashed] table [x=lhs_uniq,y=random] {plotdata/syn_keylike_pdep.dat};
				\addplot [plotColor3,style=dashed] table [x=lhs_uniq,y=random] {plotdata/syn_keylike_tau.dat};
				\addplot [plotColor4,style=dashed] table [x=lhs_uniq,y=random] {plotdata/syn_keylike_mu_prime.dat};
			\end{axis}
		\end{tikzpicture}
		\begin{tikzpicture}
			\begin{axis}[
					legend to name=appendix_syn_rhsskew_logical_legend,
					xlabel={RHS Skew},
					xmin=0.0, xmax=9.0,
				]
				\addplot [plotColor1] table [x=rhs_skew,y=fd] {plotdata/syn_rhsskew_g1_prime.dat};
				\addlegendentry{$g_1,g'_1$}
				\addplot [plotColor2] table [x=rhs_skew,y=fd] {plotdata/syn_rhsskew_pdep.dat};
				\addlegendentry{$\pdep$}
				\addplot [plotColor3] table [x=rhs_skew,y=fd] {plotdata/syn_rhsskew_tau.dat};
				\addlegendentry{$\tau$}
				\addplot [plotColor4] table [x=rhs_skew,y=fd] {plotdata/syn_rhsskew_mu_prime.dat};
				\addlegendentry{$\muplus$}
				\addplot [plotColor1,style=dashed] table [x=rhs_skew,y=random] {plotdata/syn_rhsskew_g1_prime.dat};
				\addplot [plotColor2,style=dashed] table [x=rhs_skew,y=random] {plotdata/syn_rhsskew_pdep.dat};
				\addplot [plotColor3,style=dashed] table [x=rhs_skew,y=random] {plotdata/syn_rhsskew_tau.dat};
				\addplot [plotColor4,style=dashed] table [x=rhs_skew,y=random] {plotdata/syn_rhsskew_mu_prime.dat};
			\end{axis}
		\end{tikzpicture}
	\end{minipage}
	\caption{\label{fig:synexperiments-additional} Average measures values of the three experiments on \syn data. The top row shows $\synnoisefd$ (solid lines) and $\synnoisenonfd$ (dashed lines) for different error levels. Middle row shows $\synuniquefd$ (solid lines) and $\synuniquenonfd$ (dashed lines) for different \LHSuniqueness levels. Bottom row shows $\synskewfd$ (solid lines) and $\synskewnonfd$ (dashed lines) for different \RHSskew levels. Some of the lines are hidden behind each other. The $\synnoisenonfd$ lines of $\SHANNONGONE$ and $\RFIplus$ are behind $\RFInorm$. The $\synnoisefd$ line of $\pdep$ is behind $\tau$. The $\synnoisenonfd$ line of $g_1,g'_1$ is behind its own $\synnoisefd$ line. The $\synuniquenonfd$ line of $\RFIplus$ is behind $\RFInorm$. The $\synuniquefd$ line of $\pdep$ is behind $\tau$. The $\synskewnonfd$ lines of $\RFIplus$ and $\RFInorm$ are behind $\SFI$. The $\synskewnonfd$ line of $g_1,g'_1$ is behind the $\synskewfd$ line of itself. }
	\vspace{1.5ex}

      \end{figure*}

We provide more detail on the material presented in Section~\ref{sec:syn:res}. Specifically, for each synthetic benchmark $\benchmark$ Figure~\ref{fig:synexperiments-additional} shows separate plots of the average measure values on $\benchmarkfd$ (in solid lines) and $\benchmarknonfd$ (in dashed lines).
Values for $g_1$ and $g'_1$
are grouped together as their average measure values are the same on both
$\benchmarkfd$ and $\benchmarknonfd$.  From Figure~\ref{fig:synexperiments-additional} we make the following additional observations w.r.t. \LHSuniqueness and \RHSskew.

\paragraph*{\LHSuniqueness.}
For $\rho$, $g_2$, $\SHANNONGONE$, $g_3$, $\FI$, $\pdep$, and $\tau$ the average measure values on $\synuniquenonfd$ increase,  
eventually approaching the measure values on $\synuniquefd$. For $\RFIplus$ and $\SFI$, by contrast, the average measure values over $\synuniquefd$ decrease towards zero for increasing \LHSuniqueness, eventually reaching the value on $\synuniquenonfd$. Note that this decrease is already observable for  small \LHSuniqueness values.
This means that all mentioned measures 
are biased w.r.t.\ \LHSuniqueness (either by inflating the scores of non-FDs or deflating the scores of FDs). 

\paragraph*{\RHSskew.}
%
The distinguishing power of all \simple measures, as well as $\SHANNONGONE$, and
$\pdep$ drops when \RHSskew increases. Indeed, over $\synskewfd$ the average
measure values remains relatively constant as \RHSskew increases, while the
average measure values increases and approaches the values over $\synskewfd$.
These measures are thus biased w.r.t.\ \RHSskew: their score for $\X\to\Y$
increases solely on the basis of $\Y$ and independent of $\X$ even if relations
are generated by a process that sampled $\X$ and $\Y$ independently at random.
We observe that the \shannon measures (save $\SHANNONGONE$), $\tau$, and
$\muplus$ correct for this behavior. Furthermore, $\FI$ and $\tau$ slightly
decrease at higher levels of \RHSskew.


  \section{Additional information on the evaluation on Real-World Data}
  \label{sec:rwd-extra}
  We provide more detail on the material presented in Section~\ref{sec:eval-rwd}.

\paragraph*{Precision-Recall Curves.}

The precision-recall curves shown in Figure~\ref{fig:pr-curve:auc} supplement our statements based on the area under the curve values. They constitute the graphical counterpart for Table~\ref{tab:auc}. It is clear that for each class one of the measures stands out; $g'_3$ for \simple, $\RFInorm$ for \shannon and $\muplus$ for \logical.

\addtolength{\abovecaptionskip}{-3mm}
\begin{figure*}
	\small
	\centering
	\pgfplotsset{
		width=6cm,
		every axis plot/.append style={thick},
		xlabel=Recall,
		ylabel=Precision,
		every axis y label/.style={
				at={(ticklabel cs:0.5)},rotate=90,anchor=near ticklabel,
			},
		every axis x label/.style={
				at={(ticklabel cs:0.5)},anchor=near ticklabel,
			},
		xmin=-0.05, xmax=1.05,
		ymin=-0.05, ymax=1.05,
		ymajorgrids=true,
		xmajorgrids=true,
		legend columns=-1,
		legend style={draw=none},
	}
	\begin{minipage}{0.32\textwidth}
		\centering
		\ref{rwd_prcurves_simple_legend}
		\\
		\begin{tikzpicture}
			\begin{axis}[
					legend to name=rwd_prcurves_simple_legend,
				]
				\addplot [plotColor1] table [x=recall,y=precision] {plotdata/rwd_rho.dat};
				\addlegendentry{$\rho$}
				\addplot [plotColor2] table [x=recall,y=precision] {plotdata/rwd_g2.dat};
				\addlegendentry{$g_2$}
				\addplot [plotColor3] table [x=recall,y=precision] {plotdata/rwd_g3.dat};
				\addlegendentry{$g_3$}
				\addplot [plotColor4] table [x=recall,y=precision] {plotdata/rwd_g3_prime.dat};
				\addlegendentry{$g'_3$}
			\end{axis}
		\end{tikzpicture}
	\end{minipage}
	\begin{minipage}{0.34\textwidth}
		\centering
		\hspace{-2em}\ref{rwd_prcurves_shannon_legend}
		\\
		\begin{tikzpicture}
			\begin{axis}[
					legend to name=rwd_prcurves_shannon_legend,
				]
				\addplot [plotColor1] table [x=recall,y=precision] {plotdata/rwd_shannon_g1_prime.dat};
				\addlegendentry{$\SHANNONGONE$}
				\addplot [plotColor2] table [x=recall,y=precision] {plotdata/rwd_fraction_of_information.dat};
				\addlegendentry{$\FI$}
				\addplot [plotColor3] table [x=recall,y=precision] {plotdata/rwd_reliable_fraction_of_information_prime.dat};
				\addlegendentry{$\RFIplus$}
				\addplot [plotColor4] table [x=recall,y=precision] {plotdata/rwd_fraction_of_information_prime.dat};
				\addlegendentry{$\RFInorm$}
				\addplot [plotColor5] table [x=recall,y=precision] {plotdata/rwd_smoothed_fraction_of_information.dat};
				\addlegendentry{$\SFI$}
			\end{axis}
		\end{tikzpicture}
	\end{minipage}
	\begin{minipage}{0.32\textwidth}
		\raggedright
		\hspace{1em}\ref{rwd_prcurves_logical_legend}
		\\
		\begin{tikzpicture}
			\begin{axis}[
					legend to name=rwd_prcurves_logical_legend,
				]
				\addplot [plotColor1] table [x=recall,y=precision] {plotdata/rwd_g1.dat};
				\addlegendentry{$g_1$}
				\addplot [plotColor2] table [x=recall,y=precision] {plotdata/rwd_g1_prime.dat};
				\addlegendentry{$g'_1$}
				\addplot [plotColor3] table [x=recall,y=precision] {plotdata/rwd_pdep.dat};
				\addlegendentry{$\pdep$}
				\addplot [plotColor4] table [x=recall,y=precision] {plotdata/rwd_tau.dat};
				\addlegendentry{$\tau$}
				\addplot [plotColor5] table [x=recall,y=precision] {plotdata/rwd_mu_prime.dat};
				\addlegendentry{$\muplus$}
			\end{axis}
		\end{tikzpicture}
	\end{minipage}
	\caption{\label{fig:pr-curve:auc} Precision-Recall curves over \rwdminus grouped per measure class: \simple, \logical, and \shannon.}
\end{figure*}
\addtolength{\abovecaptionskip}{+3mm}

\paragraph*{Measure Runtimes.} Table~\ref{tab:runtimes:rwd} shows the runtimes of each measure in more detail. We observe that $\rho$ is the fastest measure with $110$ seconds to calculate a value for all \TotalNumberOfInvalidFDs candidate \fds. In general, the measures of the \simple class are faster compared to the others. Measures of the \logical class take on average roughly $21$ seconds longer to compute values for all candidate \fds. In the \shannon class, the differences are much larger. While $\SHANNONGONE$ and $\FI$ achive runtimes comparable to the measures from the \logical class, $\RFIplus$, $\RFInorm$ and $\SFI$ were not able to calculate values for all candidates. In that, $\SFI$ is able to calculate a value for $1430$ (roughly $90\%$) candidates while $\RFIplus$ and $\RFInorm$ finish only $250$ (roughly $15\%$) candidates.

\begin{table}
	\centering
	\small
	\caption{\label{tab:runtimes:rwd} $\RFIplus$, $\RFInorm$ and $\SFI$ are significantly slower than the other measures. The table shows the runtimes, capped at 24 hours, and the number of measures \afd candidates within the runtime.}
	\csvautobooktabular[]{tables/rwd_runtimes.csv}
\end{table}

\paragraph*{RWD.}
Next we consider the whole of \rwd and disregard the measures $\RFIplus$ and $\RFInorm$ for which the scores of not all candidate \fds could be computed.
A comparison between the AUC values of \rwdminus and \rwd in Table~\ref{tab:aucs:rwd-appendix} show only minor differences. $\SFI$ shows the largest differences ($-0.036$), followed by $g_1$ and $g'_1$ ($-0.026$ and $-0.027$ respectively) and $\FI$ with a difference of $-0.019$. All other measures show differences below $-0.01$. We also note that all differences are negative, which means that each measure achieved better results when the \texttt{NULL} values of $\RFIplus$ and $\RFInorm$ are left out. We do not observe any notable changes for any of the tables of \rwd.

\begin{table*}
	\centering
	\small
	\caption{\label{tab:aucs:rwd-appendix} The differences between \rwdminus and \rwd without $\RFIplus$ and $\RFInorm$ are minimal. The table shows the same overview as Table~\ref{tab:auc} for \rwd. $\RFIplus$ and $\RFInorm$ show the same values, as the candidates do not change compared to \rwdminus.}
	\begin{tabular}{l l l l l l l l l l l l l}
		\toprule
		 & \rwd & \adultDataNumber & \claimsDataNumber & \dblpDataNumber & \hospitalDataNumber & \taxDataNumber & \gathAgentDataNumber & \gathAreaDataNumber & \gathDataNumber & \identTaxonDataNumber & \identDataNumber & best (pct) \\
		\midrule
		\csvreader[head to column names,late after line=\\]{tables/appendix_rwd_aucs.csv}{}{\csvlinetotablerow}
		\bottomrule
	\end{tabular}
\end{table*}

\paragraph*{RWD without \rwdminus.}
To get more insight into the candidate \fds that could not be calculated by $\RFIplus$ and $\RFInorm$, we investigated the measure results for this exact subset: $\rwd \setminus \rwdminus$. This subset contains $405$ candidates that are exclusively non-\afds. For this reason, we cannot assess precision, recall or other metrics. However, we can derive some intuition about how difficult those candidates are by investigating the measure values of the other measures. The upper part of Table~\ref{tab:values:rfinull} shows an overview of summary statistics of $\rwd \setminus \rwdminus$.

We observe that both average and median of most measure values are low. Still, it is noteworthy that judging by very high max values there seems to be a small subset of candidate \fds that could lead to changes the AUCs of $\RFIplus$ and $\RFInorm$. Ultimately, the only way to confirm that suspicion would be to use $\RFIplus$ and $\RFInorm$ to calculate values for all candidates of $\rwd \setminus \rwdminus$.

In addition, we investigated the properties of $\rwd \setminus \rwdminus$, especially with regard to the properties we used in Section~\ref{sec:eval-syn}. The lower part of Table~\ref{tab:values:rfinull} shows an overview of summary statistics of the number of tuples, \LHSuniqueness and \RHSskew. Both \LHSuniqueness and \RHSskew do not show unexpected values. \LHSuniqueness is rather low, \RHSskew is around $1.0$, both of which match properties of \rwd in general. The number of tuples is rather large, indicating that having a lot of values increases the runtimes of $\RFIplus$ and $\RFInorm$.

\begin{table}
	\centering
	\small
	\setlength\tabcolsep{4.0pt}
	\caption{\label{tab:values:rfinull} The candidate \fds that $\RFIplus$ and $\RFInorm$ could not calculate have the potential to decrease their performance. The upper part shows summary statistics of all measures for $\rwd \setminus \rwdminus$. The lower part shows summary statistics of properties of the candidates, indicating that a high tuple count increases the runtime of $\RFIplus$ and $\RFInorm$.}
	\begin{tabular}{l l l l l l}
		\toprule
		 & mean & std & min & median & max \\
		\midrule
		\csvreader[head to column names,late after line=\\]{tables/appendix_rfinull_stats.csv}{}{\csvlinetotablerow}
		\midrule
		\midrule
		\csvreader[head to column names,late after line=\\]{tables/appendix_rfinull_props.csv}{}{\csvlinetotablerow}
		\bottomrule
	\end{tabular}
\end{table}


  \section{Evaluation on \rwd with extra errors}
  \label{sec:nrwd}
  
\label{sec:nrwd_data}
To study the measures' sensitivity to different kinds and different levels of
errors on real world data, we created the benchmark \nrwd. We obtain
\nrwd by passing the relations $R \in \rwd$ through a controlled error channel such that, denoting by $R'$ the obtained relation, some \fds in
$\perfectFD(R)$ do not hold anymore in $R'$ and hence become part of
$\designAFD(R')$. Existing AFDs are always maintained, i.e.,
$\designAFD(R) \subseteq \designAFD(R')$.

%
We actually have multiple error channels, which are parameterized by an error
level $\eta\in [0,1]$ and an error type. When passing $R$ through the channel we
consider all $\X \to \Y \in \perfectFD(R)$ and modify $k=\lfloor \eta|R|\rfloor$
$\Y$-values. To avoid interference, we select at most one \fd $\X\to \Y$ for
every unique $\Y$ per relation, ensuring that $\Y$ does not appear in $\designAFD(R)$, and that no \fd $\Y\to \Z$ has previously been selected.
The procedure to modify the $\Y$ values 
is determined by the chosen type of data error for which we consider three categories inspired by Arocena et al.~\cite{DBLP:journals/pvldb/ArocenaGMMPS15}: \copykind error, \typokind and \boguskind value.
For a chosen tuple $\w \in R$, only $\restr{\w}{\Y}$ is changed, where the change depends on the data error type:
\begin{compactenum}[(i)]
  \item \copykind: Randomly pick any $\tilde{\w} \in R$ with $\tilde{\w}|_{\Y} \neq \w|_{\Y}$ and
  make $\tilde{\w}|_{\Y}$ the new value for $\w|_{\Y}$.
  \item  \typokind: To every $\y\in\dom_R(\Y)$, we associate three new values representing three common typos. From these, one is chosen each time at random as the new value for ${\w}|_{\Y}$.
  \item \boguskind: ${\w}|_{\y}$ is assigned a unique newly generated value.
\end{compactenum}

We point out that \copykind does not introduce any new values and keeps $\dom_{R}(\Y)$ stable, while \typokind (resp., \boguskind) introduces a number of new values independent of (resp., dependent on) the error level. $\X$ is not modified, and therefore $\prob_{R'}(\X) = \prob_{R}(\X)$.
To ensure that increasing error levels do not accidentally reduce errors, we
ensure that, for each $\x\colon \X$ we pick at most $\lfloor N_{\x}/2\rfloor$
tuples $\w$ with $\w|_{\X}=\x$ to modify, where $N_{\X}$ is the number of times
that $\x$ occurs in $\pi_{\X}(R)$. $\perfectFD$s for which this cannot be
guaranteed are omitted. The number of new \afds that can be constructed
therefore depends on the error level.

We consider four error levels: 1\%, 2\%, 5\% and 10\%. For each type of data error $t$ and each error level $\eta$, we obtain a new benchmark $\nrwd[t,\eta]$. Consequently, we generate 12 \nrwd tables per \rwd table $R$ for which $|\perfectFD(R)|>0$ (so, tables \gathDataNumber\ and \identTaxonDataNumber\ are excluded). Overall the number of \afds increases from
\BenchTotalAFDs in \rwd to \BenchTotalAFDsRWDErrorLevelOne in
$\nrwd[\copykind, 1\%]$. That number is the same for the other error types but can drop a little for higher noise levels as explained above. Similar to $\rwdminus$, we retain only \afd candidates where each measure calculates a value in a reasonable amount of time. 
A complete overview of the number of additional \afds per relation and per error level is given in \cite{artifacts}.
Per combination of parameters, the ground truth then consists of the thus constructed \afds together with the \afds from $R$.


  \begin{table*}[htp]
	\centering
	\small
	\setlength{\tabcolsep}{4pt}
	\caption{\label{tab:auc:rwdn} $\muplus$ and $\RFInorm$ score best $\nrwd$ with one exception. The table shows the AUCs for each noise type and noise level of \nrwd, the dataset derived from \rwd by introducing errors. in the table header, the sizes $n$ of each of the \nrwd datasets is shown. For comparison, the first column repeats the AUCs for \rwdminus.}
	\begin{tabular}{l l l l l l l l l l l l l l}
		\toprule
		    & \rwdminus & $\copykind,1$ & $\copykind,2$ & $\copykind,5$ & $\copykind,10$ & $\boguskind,1$ & $\boguskind,2$ & $\boguskind,5$ & $\boguskind,10$ & $\typokind,1$ & $\typokind,2$ & $\typokind,5$ & $\typokind,10$ \\
		$n$ & 1229      & 1204          & 1206          & 1218          & 1143           & 1130           & 1127           & 1111           & 992             & 1188          & 1194          & 1189          & 915            \\
		\midrule
		\csvreader[late after line=\\]{tables/appendix_rwdn_minus_aucs.csv}{}{\csvlinetotablerow}
		\bottomrule
	\end{tabular}
\end{table*}

\begin{table}
	\centering
	\small
	\caption{\label{tab:maxrecall:nrwd} On average, $\muplus$ and $\RFInorm$ score most reliably over \nrwd. The table shows the percentages per error type where a measure has the lowest rank to reach a recall of $1.0$.}
	\begin{tabular}{l l l l}
		\toprule
		 & \copykind & \boguskind & \typokind \\
		\midrule
		\csvreader[late after line=\\]{tables/appendix_rwdn_winning_numbers.csv}{}{\csvlinetotablerow}
		\bottomrule
	\end{tabular}
\end{table}

\paragraph*{AUC.}

Table~\ref{tab:auc:rwdn} lists AUC scores over \nrwd  per error type and for different error levels.
We observe that $\mu'$ has the highest AUC score in 6 out of 12 cases, followed by $\RFInorm$ that scores highest in 4 our of 12 cases. Both $\muplus$ and $\RFInorm$ are the top-two scoring measures consistently. The only exception is $\nrwd[\typokind,10\%]$ where $\RFIplus$ is the highest scoring measure. We also observe that $\muplus$ scores highest on all $\nrwd$ datasets with $\eta=2\%,5\%$ while $\RFInorm$ scores highest on all $\nrwd$ with $\eta=1\%$. There is no clear observation for $\eta=10\%$. Regarding the error types, we observe that in general the AUCs of \copykind are higher on the same error levels. For $\nrwd[\boguskind,10\%]$ $g_3$,$\SHANNONGONE$ and $\FI$ are exceptions. We observe a similar, yet less pronounced, trend for \typokind compared to \boguskind, which does not hold for $\eta=10\%$.
We remark that for some measures the AUC score on $\nrwd$ is larger at the 1\% error level than for \rwdminus. This is not completely unexpected as the ground truth for both is different. A suprising result is presented for $\SHANNONGONE$, where the AUC of each $\nrwd$ dataset is higher than for \rwdminus.

We do see that, as expected, the AUC score for each of the measures deteriorates at increasing error levels to an absolute low at error level 10\%. The exception is $\RFI'$ whose performance increases at higher error levels for error types \copykind but not for the other error types. When error types and error levels are unknown but expected to be small, $\mu'$ therefore remains the best choice of \afd-measure. Furthermore, it is evident from Table~\ref{tab:aucs:rwd-appendix} that \afd-measures are not very effective when error levels are greater than 5\%.

Within \simple $g'_3$ remains the best measure.
Within \shannon we see $\RFInorm$ scoring best while $\SFI$ performs worst for most datasets. The measures $\SHANNONGONE$, $\FI$ and $\RFIplus$ show very similar scores.
Within \logical we observe that $\tau$ is usually an improvement over $\pdep$, and  $\muplus$ is an improvement over $\tau$. Further, $g_1$ and $g'_1$ perform consistently worse than any other measure in \logical.

\paragraph*{Rank at max recall.}
We show in
Table~\ref{tab:maxrecall:nrwd} a qualitative comparison between measures by listing, for each measure $f$ and error type $t$, its \emph{winning number}, which is defined as follows. Consider a particular $(\text{relation}, t, \eta)$ combination in $\nrwd$. A measure  $f$ \emph{wins} this triple if its $\rankAtMaxRecall$ is minimal among all measures on this triple. The winning number of $f$ for error type $t$  is then the number of times $f$ wins, taken over all triples of type $t$.
Here, we see again that both $\RFInorm$ and $\muplus$ score very well. Across all error types, $\muplus$ is the only measure that is in included in the top two values. It scores best on \copykind, where its rank at max recall is minimal for $75\%$ of all relations. $\tau$ is the runner-up with $53.1\%$ of the relations. In \boguskind, $\RFInorm$ achives a minimal rank at max recall for $50.0\%$ of the relations. $g_3$, $g'_3$ and $\muplus$ follow with $42.3\%$. For \typokind, $\RFInorm$ ranks candidates best on $58.1\%$ relations. $\muplus$ is behind that with $54.8\%$.


}

\end{document}